\newtheorem{theorem}{Theorem}[section]
\newtheorem{proposition}{Proposition}[section]
\newtheorem{lemma}{Lemma}[section]
\newtheorem{corollary}{Corollary}[section]
\newcommand{\beqa}{\begin{eqnarray}}
\newcommand{\eeqa}{\end{eqnarray}}
\newcommand{\rf}[1]{(\ref{#1})}
\newcommand{\la}{\lambda}
\def\det{\operatorname{det}}
\numberwithin{equation}{section}
\begin{document}

\begin{flushright}
LPENSL - 2014\\
IMB - 2014
\end{flushright}

\bigskip \vspace{5pt}

\begin{center}
{\Large Open spin chains with generic integrable boundaries:%
\\[0pt]
\vspace{4pt} Baxter equation and Bethe ansatz completeness from SOV}

\vspace{50pt}

{\small \textbf{N.~Kitanine}\footnote[1]{IMB, UMR 5584 du CNRS, Universit\'e
de Bourgogne, France, Nikolai.Kitanine@u-bourgogne.fr},~~ \textbf{J.-M.~Maillet}\footnote[2]{Laboratoire de Physique, UMR 5672 du CNRS, ENS Lyon, Lyon 1 University, France, maillet@ens-lyon.fr},~~ \textbf{G. Niccoli}\footnote[3]{Laboratoire de Physique, UMR 5672 du CNRS, ENS Lyon, Lyon 1 University, France, giuliano.niccoli@ens-lyon.fr} }
\vspace{50pt}

\centerline{\bf Abstract} \vspace{1cm} 
\parbox{15cm}{\small 
We solve the longstanding problem to define a functional characterization of the spectrum of the transfer matrix associated to the most general spin-1/2 representations of the 6-vertex reflection algebra for  general inhomogeneous chains. The corresponding homogeneous limit reproduces the spectrum of the Hamiltonian of the spin-1/2 open XXZ and XXX quantum chains with the most general integrable boundaries. The spectrum is characterized by a second order finite difference functional equation of Baxter type with an inhomogeneous term which vanishes only for some special but yet interesting non-diagonal boundary conditions. This  functional equation  is shown to be equivalent to the known separation of variable  (SOV) representation hence proving that it defines a complete characterization of the transfer matrix spectrum. The polynomial character of the Q-function allows us then to show that a finite system of equations of generalized Bethe type can be similarly used to describe the complete transfer matrix spectrum.}
\end{center}

\newpage
\tableofcontents \newpage

\section{Introduction}

The functional characterization of the complete 
transfer matrix spectrum associated to the most general spin-1/2
representations of the 6-vertex reflection algebra on general inhomogeneous
chains is a longstanding open problem. It has attracted much attention in the framework of quantum integrability producing so far only partial results. The interest in the solution of this problem is at
least twofold. On the one hand, the quantum integrable system associated to
the limit of the homogeneous chain, i.e. the open spin-1/2 XXZ quantum chain 
with arbitrary boundary magnetic fields, is an interesting physical
quantum model. It appears, in particular, in the context of out-of-equilibrium physics
ranging from the relaxation behavior of some classical stochastic processes,
as the asymmetric simple exclusion processes \cite{EssD05,EssD06}, to the
transport properties of the quantum spin systems \cite{SirPA09,Pro11}. Their  solution can lead to non-perturbative physical
results and a complete and manageable functional characterization of their
 spectrum represents the first fundamental steps in this
direction. On the other hand, it is important to remark that the 
analysis of the spectral problem of these integrable quantum models turned
out to be quite involved by standard Bethe ansatz \cite{Bet31,FadST79}
techniques. Therefore, these quantum models are natural  
laboratories where to define alternative non-perturbative approach to their
exact solution. Indeed, the
algebraic Bethe ansatz, introduced for open systems by Sklyanin \cite{Skl88}
based on the Cherednik's reflection equation \cite{Che84}, in the case of
open XXZ quantum spin chains can be applied directly only in the case of
parallel z-oriented boundary magnetic fields. Under these special boundary
conditions the spectrum is naturally described by a finite system of Bethe
ansatz equations. Moreover the dynamics of such systems can be studied by  exact
computation of correlation functions \cite{KitKMNST07,KitKMNST08}, derived
from a generalisation of the method introduced in \cite{KitMT99,KitMT00,MaiT00} for periodic
spin chains.

Introducing a Baxter $T$-$Q$ equation, Nepomechie \cite{Nep02,Nep04}
first succeeded to describe the spectrum of the XXZ spin chain with
non-diagonal boundary terms 
in the case of an anisotropy parameter associated to the  roots of unity; furthermore, the result was obtained there only if
the boundary terms satisfied a very particular constraint relating the
magnetic fields on the two boundaries. This last constraint was also
used in \cite{CaoLSW03} to introduce a generalized algebraic Bethe
ansatz approach to this problem inspired by papers of Baxter \cite{Bax72,Bax72a}
and of Faddeev and Takhtadjan \cite{FadT79} on the XYZ spin chain. This
method has led to the first construction of the eigenstates of the XXZ spin
chain with non z-oriented boundary magnetic fields and this construction has
been obtained for a general anisotropy parameter, i.e.,  not restricted to the roots of unity
cases\footnote{%
Different methods leading to Bethe ansatz equations have been also proposed
under the same boundary conditions by using the framework  of the Temperley-Lieb algebra
 in \cite{deGP04, NicRd05} and by making a combined use of
coordinate Bethe ansatz and matrix ansatz in \cite{CraRS10, CraR12}.}. In 
\cite{YanZ07} a different version of this technique based on
the vertex-IRF transformation was proposed  but in fact it required one additional
constraint on the boundary parameters to work. It is worth mentioning that
even if these constrained boundary conditions are satisfied and generalized
Bethe ansatz method gives a possibility to go beyond the spectrum,
as it was done for the diagonal boundary conditions, no representation  for the scalar product of Bethe vectors\footnote{%
Some partial results in this direction were achieved in \cite{FilK11} but
only in the special case of double boundary constrains introduced in \cite%
{YanZ07}.} and hence for the
 correlation functions were obtained.

This spectral problem in the most general setting has then been also addressed by
other approaches. It is worth mentioning a new functional method leading to 
nested Bethe ansatz equations  presented in \cite{Gal08} for the
eigenvalue characterization and analogous to those previously introduced in 
\cite{MurN05} by a generalized $T$-$Q$ formalism. The eigenstate
construction has been considered in these general settings in \cite%
{BasK05a,Bas06} by developing the so-called $q$-Onsager algebra formalism.
In this last case the characterization of the spectrum is given by
classifying the roots of some characteristic polynomials. More recently, in 
\cite{CaoYSW13-4}  an ansatz $T$-$Q$ functional
equations for the spin chains with non-diagonal boundaries has been proposed\footnote{%
See also the papers \cite{CaoYSW13-1,CaoYSW13-2,CaoYSW13-3} for the
application of the same method to different models.}.

It is extremely important to remark that in general all methods based on Bethe ansatz (or
 generalized Bethe ansatz) are 
lacking proofs of the completeness of the spectrum  and in most
cases the only evidences of completeness are based on numerical checks
for short length chains. This is the case for the XXZ
chain with non-diagonal boundary matrices with the boundary constraint
for which the completeness of the spectrum description by the associated
system of Bethe ansatz equations has been studied numerically  \cite{Nep-R-2003,Nep-R-2003add}. In the case of the XXZ chain with completely general
non-diagonal boundary matrices some numerical analysis is also presented in 
\cite{CaoYSW13-4}. Further numerical analysis have been developed in a
much simpler case of the isotropic XXX spin chain where the most general boundary
conditions can be always reduced by using the $SU(2)$ symmetry
to one diagonal and one non-diagonal boundary matrices. For the  XXX chains
the ansatz introduced in \cite{CaoYSW13-2} was also applied and the completeness  of the Bethe ansatz spectrum was checked numerically
\cite{CaoJYW2013}. It is also important to mention a simplified ansatz proposed by
Nepomechie  based on  a
standard second order difference functional $T$-$Q$ equation with an
additional inhomogeneous term. The completeness of the Bethe ansatz
spectrum  has been verified numerically for small XXX chains in 
\cite{Nep-2013} while in \cite{Nep-W-2013} the problem of the description
of some thermodynamical properties has been addressed. 

These interesting developments attracted our attention in connection to the quantum separation of
variables (SOV) method pioneered by Sklyanin \cite{Skl85,Skl92}. The first analysis of the spin chain in the classical limit  from this point of view was performed in \cite{Skl89a,Skl89b}. This alternative approach 
allows to obtain (mainly by construction) the complete set of eigenvalues
and eigenvectors of quantum integrable systems. In particular, it 
was recently developed \cite%
{NicT10,N10-1,N10-2,GroMN12,GroN12,Nic12b,Nic13a,N13-1,Nic13b,Nic13c,Fald-KN13,FaldN13,GroMN13}
for a large variety of quantum models not solvable by algebraic
Bethe ansatz.  Moreover it has been shown first in \cite{GroMN12} that once the SOV spectrum characterization is
achieved manageable and rather universal determinant formulae can be derived for
matrix elements of local operators between transfer matrix eigenstates. In
particular, this SOV method was first developed in \cite{Nic12b} for the spin-1/2 representations of the 6-vertex reflection algebra
with quite general non-diagonal boundaries and then generalized to the most
general boundaries in \cite{Fald-KN13}. There, it gives the complete
spectrum (eigenvalues and eigenstates) and already allows to compute matrix
elements of some local operators within this most general boundary
framework. However, it is important to remark that this SOV characterization
of the spectrum 
is somehow unusual in comparison to more traditional characterizations like those obtained from Bethe
ansatz techniques. More precisely, the spectrum is described not in
terms of the set of solutions to a standard system of Bethe ansatz equations
but is given in terms of  sets of solutions to a characteristic
system of $\mathsf{N}$ quadratic equations in $\mathsf{N}$ unknowns, $\mathsf{%
N}$ being the number of sites of the chain. While the clear
advantage of this SOV characterization is that it permits to 
characterize completely the spectrum without introducing any ansatz one has to stress that the classification of the sets of solutions
of the SOV system of quadratic equations represents a new problem in quantum
integrability which  requires a deeper and systematic analysis.

The aim of the present article is to show that the SOV analysis of the transfer matrix
spectrum associated to the most general spin-1/2 representations of the
6-vertex reflection algebra on general inhomogeneous chains is strictly
equivalent to a system of \textit{generalized} Bethe ansatz equations.
This ensures that this system of Bethe equations characterizes automatically
the entire spectrum of the transfer matrix. More in
detail, we prove that the SOV characterization is equivalent to a second
order finite difference functional equation of Baxter type:
\begin{equation}
\tau (\lambda )Q(\lambda )=\mathbf{A}(\lambda )Q(\lambda -\eta )+\mathbf{A} 
(-\lambda )Q(\lambda +\eta )+F(\lambda ),
\end{equation}
which contains an inhomogeneous term $F(\lambda )$ independent on the $\tau $ and $Q$-functions and entirely fixed by the boundary parameters. It 
vanishes only for some special but yet interesting non-diagonal boundary
conditions (corresponding to the  boundary constraints mentioned above). One central requirement in our construction of this functional
characterization is the polynomial character of the $Q$-function. Indeed, it is
this requirement that allows then to show that a finite system of equations
of generalized Bethe ansatz type can be used to describe the complete
transfer matrix spectrum. Note that similar
results on the reformulation of the SOV spectrum characterization in terms
of functional $T$-$Q$ equations with $Q$-function solutions in a well defined
model dependent set of polynomials were previously derived  
\cite{N10-1,N10-2,GroN12} for the cases of transfer matrices
associated to cyclic representations of the Yang-Baxter algebra. The analysis presented here is also interesting as it introduces the main tools to generalize this type of reformulation to other classes of integrable quantum models.
The article is organized as follows. In Section 2 we set the main notations and we recall the main results of previous papers on SOV necessary for our purposes. Section 3 contains the main results of the paper with the reformulation of the SOV characterization of the transfer matrix spectrum in terms of the inhomogeneous Baxter functional equation and the associated finite system of generalized Bethe ansatz equations. In Section 4 we define the boundary conditions for which the inhomogeneity in the Baxter equation identically vanishes, in this way deriving the completeness of standard Bethe ansatz equations. There, we moreover derive the SOV spectrum functional reformulation for the remaining boundary conditions compatibles with homogeneous Baxter equations. Section 5 contains the description of a set of discrete transformations which leave unchanged the SOV characterization of the spectrum in this way proving the isospectrality of the transformed transfer matrices. These symmetries are used to find equivalent functional equation characterizations of the spectrum which allow to generalize the results described in Section 3 and 4. In Section 6 we present the SOV characterization of the spectrum for the rational 6-vertex representation of the reflection algebra and the reformulation of the spectrum by inhomogeneous Baxter equation. Finally, in Section 7, we present a comparison with the known numerical results in the literature for both the XXZ and XXX chains; the evidenced compatibility suggests that even in the homogenous chains our spectrum description is still complete.

\section{Separation of variable for spin-1/2 representations of the reflection
algebra}

\subsection{Spin-1/2 representations of the reflection algebra and open XXZ
quantum chain}

The representation theory of the reflection algebra 
can be studied in terms of the solutions $\mathcal{U}(\lambda )$ (monodromy
matrices) of the following reflection equation:%
\begin{equation}
R_{12}(\lambda -\mu )\,\mathcal{U}_{1}(\lambda )\,R_{21}(\lambda +\mu -\eta
)\,\mathcal{U}_{2}(\mu )=\mathcal{U}_{2}(\mu )\,R_{12}(\lambda +\mu -\eta )\,%
\mathcal{U}_{1}(\lambda )\,R_{21}(\lambda -\mu ). \label{bYB}
\end{equation}%
Here we consider the  reflection equation associated 
to the 6-vertex trigonometric $R$ matrix 
\begin{equation}
R_{12}(\lambda )=\left( 
\begin{array}{cccc}
\sinh (\lambda +\eta ) & 0 & 0 & 0 \\ 
0 & \sinh \lambda & \sinh \eta & 0 \\ 
0 & \sinh \eta & \sinh \lambda & 0 \\ 
0 & 0 & 0 & \sinh (\lambda +\eta )%
\end{array}%
\right) \in \text{End}(\mathcal{H}_{1}\otimes \mathcal{H}_{2}),
\end{equation}%
where $\mathcal{H}_{a}\simeq \mathbb{C}^{2}$ is a 2-dimensional linear
space. The 6-vertex trigonometric $R$-matrix is a solution of the Yang-Baxter
equation:%
\begin{equation}
R_{12}(\lambda -\mu )R_{13}(\lambda )R_{23}(\mu )=R_{23}(\mu )R_{13}(\lambda
)R_{12}(\lambda -\mu ).
\end{equation}%
The most general scalar solution ($2\times 2$ matrix) of the reflection
equation reads%
\begin{equation}
K(\lambda ;\zeta ,\kappa ,\tau )=\frac{1}{\sinh \zeta }\left( 
\begin{array}{cc}
\sinh (\lambda -\eta /2+\zeta ) & \kappa e^{\tau }\sinh (2\lambda -\eta ) \\ 
\kappa e^{-\tau }\sinh (2\lambda -\eta ) & \sinh (\zeta -\lambda +\eta /2)%
\end{array}%
\right) \in \text{End}(\mathcal{H}_{0}\simeq \mathbb{C}^{2}),  \label{ADMFKK}
\end{equation}%
where $\zeta ,$ $\kappa $ and $\tau $ are arbitrary complex parameters.
Using it and following \cite{Skl88} we can construct two classes of
solutions to the reflection equation (\ref{bYB}) in the 2$^{\mathsf{N}}$%
-dimensional representation space:%
\begin{equation}
\mathcal{H}=\otimes _{n=1}^{\mathsf{N}}\mathcal{H}_{n}.
\end{equation}%
Indeed, starting from%
\begin{equation}
K_{-}(\lambda )=K(\lambda ;\zeta _{-},\kappa _{-},\tau _{-}),\text{ \ \ \ \ }%
K_{+}(\lambda )=K(\lambda +\eta ;\zeta _{+},\kappa _{+},\tau _{+}),
\end{equation}%
where $\zeta _{\pm },\kappa _{\pm },\tau _{\pm }$ are the  boundary
parameters, the following boundary monodromy matrices can be  introduced %
\begin{eqnarray}
\mathcal{U}_{-}(\lambda ) &=&M_{0}(\lambda )K_{-}(\lambda )\widehat{M}%
_{0}(\lambda )=\left( 
\begin{array}{cc}
\mathcal{A}_{-}(\lambda ) & \mathcal{B}_{-}(\lambda ) \\ 
\mathcal{C}_{-}(\lambda ) & \mathcal{D}_{-}(\lambda )%
\end{array}%
\right) \in \text{End}(\mathcal{H}_{0}\otimes \mathcal{H}), \\
\mathcal{U}_{+}^{t_{0}}(\lambda ) &=&M_{0}^{t_{0}}(\lambda
)K_{+}^{t_{0}}(\lambda )\widehat{M}_{0}^{t_{0}}(\lambda )=\left( 
\begin{array}{cc}
\mathcal{A}_{+}(\lambda ) & \mathcal{C}_{+}(\lambda ) \\ 
\mathcal{B}_{+}(\lambda ) & \mathcal{D}_{+}(\lambda )%
\end{array}%
\right) \in \text{End}(\mathcal{H}_{0}\otimes \mathcal{H}).
\end{eqnarray}%
 These matrices $\mathcal{U}_{-}(\lambda )$ and $\mathcal{V}%
_{+}(\lambda )=\mathcal{U}_{+}^{t_{0}}(-\lambda )$ define two classes of
solutions of the reflection equation (\ref{bYB}). Here, we have used the
notations:%
\begin{equation}
M_{0}(\lambda )=R_{0\mathsf{N}}(\lambda -\xi _{\mathsf{N}}-\eta /2)\dots
R_{01}(\lambda -\xi _{1}-\eta /2)=\left( 
\begin{array}{cc}
A(\lambda ) & B(\lambda ) \\ 
C(\lambda ) & D(\lambda )%
\end{array}%
\right)  \label{T}
\end{equation}%
and 
\begin{equation}
\widehat{M}(\lambda )=(-1)^{\mathsf{N}}\,\sigma _{0}^{y}\,M^{t_{0}}(-\lambda
)\,\sigma _{0}^{y},  \label{Mhat}
\end{equation}%
where $M_{0}(\lambda )\in $ End$(\mathcal{H}_{0}\otimes \mathcal{H})$ is the
bulk inhomogeneous monodromy matrix (the $\xi _{j}$ are the arbitrary
inhomogeneity parameters) satisfing the Yang-Baxter relation:%
\begin{equation}
R_{12}(\lambda -\mu )M_{1}(\lambda )M_{2}(\mu )=M_{2}(\mu )M_{1}(\lambda
)R_{12}(\lambda -\mu ).  \label{YB}
\end{equation}%
The main interest of these boundary monodromy matrices is the property
shown by Sklyanin \cite{Skl88} that the following family of transfer
matrices:%
\begin{equation}
\mathcal{T}(\lambda )=\text{tr}_{0}\{K_{+}(\lambda )\,M(\lambda
)\,K_{-}(\lambda )\widehat{M}(\lambda )\}=\text{tr}_{0}\{K_{+}(\lambda )\mathcal{%
U}_{-}(\lambda )\}=\text{tr}_{0}\{K_{-}(\lambda )\mathcal{U}_{+}(\lambda
)\}\in \text{\thinspace End}(\mathcal{H}),  \label{transfer}
\end{equation}%
defines a one parameter family of commuting operators in End$(\mathcal{H})$.
The Hamiltonian of the open XXZ quantum spin 1/2 chain with the most general
integrable boundary terms can be  obtained in the homogeneous limit ($\xi _{m}=0$
for $m=1,\ldots ,\mathsf{N}$) from the following derivative of the transfer
matrix (\ref{transfer}):%
\begin{equation}
H=\frac{2(\sinh \eta )^{1-2\mathsf{N}}}{\text{tr}\{K_{+}(\eta /2)\}\,\text{tr%
}\{K_{-}(\eta /2)\}}\frac{d}{d\lambda }\mathcal{T}(\lambda )_{\,\vrule %
height13ptdepth1pt\>{\lambda =\eta /2}\!}+\text{constant,}  \label{Ht}
\end{equation}%
and its explicit form reads: 
\begin{align}
H& =\sum_{i=1}^{\mathsf{N}-1}(\sigma _{i}^{x}\sigma _{i+1}^{x}+\sigma
_{i}^{y}\sigma _{i+1}^{y}+\cosh \eta \sigma _{i}^{z}\sigma _{i+1}^{z}) 
\notag \\
& +\frac{\sinh \eta }{\sinh \zeta _{-}}\left[ \sigma _{1}^{z}\cosh \zeta
_{-}+2\kappa _{-}(\sigma _{1}^{x}\cosh \tau _{-}+i\sigma _{1}^{y}\sinh \tau
_{-})\right]  \notag \\
& +\frac{\sinh \eta }{\sinh \zeta _{+}}[(\sigma _{\mathsf{N}}^{z}\cosh \zeta
_{+}+2\kappa _{+}(\sigma _{\mathsf{N}}^{x}\cosh \tau _{+}+i\sigma _{\mathsf{N%
}}^{y}\sinh \tau _{+}).  \label{H-XXZ-Non-D}
\end{align}%
Here $\sigma _{i}^{a}$ are local spin $1/2$ operators (Pauli matrices), $%
\Delta =\cosh \eta $ is the anisotropy parameter and the six complex
boundary parameters $\zeta _{\pm }$, $\kappa _{\pm }$ and $\tau _{\pm }$
define the most general integrable magnetic interactions at the boundaries.

\subsection{Some relevant properties}

The following quadratic linear combination of the
generators $\mathcal{A}_{-}(\lambda ),$ $\mathcal{B}_{-}(\lambda ),$ $%
\mathcal{C}_{-}(\lambda )$ and $\mathcal{D}_{-}(\lambda )$ of the reflection
algebra: 
\begin{align}
\frac{\mathrm{det}_{q}\,\mathcal{U}_{-}(\lambda )}{\sinh (2\lambda -2\eta )}& =%
\mathcal{A}_{-}(\epsilon \lambda +\eta /2)\mathcal{A}_{-}(\eta /2-\epsilon
\lambda )+\mathcal{B}_{-}(\epsilon \lambda +\eta /2)\mathcal{C}_{-}(\eta
/2-\epsilon \lambda )  \label{q-detU_1} \\
& =\mathcal{D}_{-}(\epsilon \lambda +\eta /2)\mathcal{D}_{-}(\eta
/2-\epsilon \lambda )+\mathcal{C}_{-}(\epsilon \lambda +\eta /2)\mathcal{B}%
_{-}(\eta /2-\epsilon \lambda ),  \label{q-detU_2}
\end{align}%
where $\epsilon =\pm 1$, is the  \textit{quantum determinant }. It was
shown by Sklyanin that it is a central element of the reflection algebra%
\begin{equation}
\lbrack \mathrm{det}_{q}\,\mathcal{U}_{-}(\lambda ),\mathcal{U}_{-}(\mu )]=0.
\end{equation}%
The quantum determinant plays a fundamental role in the characterization of
the transfer matrix spectrum and it admits the following explicit
expressions:%
\begin{eqnarray}
\mathrm{det}_{q}\,\mathcal{U}_{-}(\lambda ) &=&\mathrm{det}_{q}K_{-}(\lambda )%
\mathrm{det}_{q}M_{0}(\lambda )\mathrm{det}_{q}M_{0}(-\lambda )
\label{q-detU_-exp} \\
&=&\sinh (2\lambda -2\eta )\mathsf{A}_{-}(\lambda +\eta /2)\mathsf{A}%
_{-}(-\lambda +\eta /2),
\end{eqnarray}%
where: 
\begin{equation}
\mathrm{det}_{q}M(\lambda )=a(\lambda +\eta /2)d(\lambda -\eta /2),
\label{bulk-q-det}
\end{equation}%
is the bulk quantum determinant and%
\begin{equation}
\mathrm{det}_{q}K_{\pm }(\lambda )=\mp \sinh (2\lambda \pm 2\eta )g_{\pm
}(\lambda +\eta /2)g_{\pm }(-\lambda +\eta /2).
\end{equation}%
Here, we  used the following notations:%
\begin{equation}
\mathsf{A}_{-}(\lambda )=g_{-}(\lambda )a(\lambda )d(-\lambda ),\text{ \ }%
d(\lambda )=a(\lambda -\eta ),\text{ \ \ }a(\lambda )=\prod_{n=1}^{\mathsf{N}%
}\sinh (\lambda -\xi _{n}+\eta /2),  \label{eigenA}
\end{equation}%
and%
\begin{equation}
g_{\pm }(\lambda )=\frac{\sinh (\lambda +\alpha _{\pm }-\eta /2)\cosh
(\lambda \mp \beta _{\pm }-\eta /2)}{\sinh \alpha _{\pm }\cosh \beta _{\pm }}%
,  \label{g_PM}
\end{equation}%
where $\alpha _{\pm }$\ and $\beta _{\pm }$ are defined in terms of the
boundary parameters by:%
\begin{equation}
\sinh \alpha _{\pm }\cosh \beta _{\pm }=\frac{\sinh \zeta _{\pm }}{2\kappa
_{\pm }},\text{ \ \ \ \ \ }\cosh \alpha _{\pm }\sinh \beta _{\pm }=\frac{%
\cosh \zeta _{\pm }}{2\kappa _{\pm }}.  \label{alfa-beta}
\end{equation}

\begin{proposition}[Prop. 2.3 of \protect\cite{Nic12b}]
\label{normality}The transfer matrix $\mathcal{T}(\lambda )$ is an even function of the
spectral parameter $\lambda $:%
\begin{equation}
\mathcal{T}(-\lambda )=\mathcal{T}(\lambda ),  \label{even-transfer}
\end{equation}%
and it is central for the following special values of the spectral parameter: 
\begin{eqnarray}
\lim_{\lambda \rightarrow \pm \infty }e^{\mp 2\lambda (\mathsf{N}+2)}%
\mathcal{T}(\lambda ) &=&2^{-(2\mathsf{N}+1)}\frac{\kappa _{+}\kappa
_{-}\cosh (\tau _{+}-\tau _{-})}{\sinh \zeta _{+}\sinh \zeta _{-}},
\label{Central-asymp} \\
\mathcal{T}(\pm \eta /2) &=&(-1)^{\mathsf{N}}2\cosh \eta \mathrm{det}_{q}M(0),
\label{Central-1} \\
\mathcal{T}(\pm (\eta /2-i\pi /2)) &=&-2\cosh \eta \coth \zeta _{-}\coth
\zeta _{+}\mathrm{det}_{q}M(i\pi /2).  \label{Central-2}
\end{eqnarray}%
Moreover, the monodromy matrix \thinspace $\mathcal{U}_{\pm }(\lambda )$
satisfy the following transformation properties under Hermitian conjugation:%
\begin{itemize}
\item Under the condition $\eta \in i\mathbb{R}$ (massless regime), it
holds: 
\begin{equation}
\mathcal{U}_{\pm }(\lambda )^{\dagger }=\left[ \mathcal{U}_{\pm }(-\lambda
^{\ast })\right] ^{t_{0}},  \label{ml-Hermitian_U}
\end{equation}%
 for $\{i\tau _{\pm },i\kappa _{\pm },i\zeta _{\pm },\xi
_{1},...,\xi _{\mathsf{N}}\}\in \mathbb{R}^{\mathsf{N}+3}.$
\item Under the condition $\eta \in \mathbb{R}$ (massive regime), it
holds: 
\begin{equation}
\mathcal{U}_{\pm }(\lambda )^{\dagger }=\left[ \mathcal{U}_{\pm }(\lambda
^{\ast })\right] ^{t_{0}},  \label{m-Hermitian_U}
\end{equation}%
for $\{\tau _{\pm },\kappa _{\pm },\zeta _{\pm },i\xi
_{1},...,i\xi _{\mathsf{N}}\}\in \mathbb{R}^{\mathsf{N}+3}.$
\end{itemize}
So under the same conditions on the parameters of the representation it
holds: 
\begin{equation}
\mathcal{T}(\lambda )^{\dagger }=\mathcal{T}(\lambda ^{\ast }),
\label{I-Hermitian_T}
\end{equation}%
i.e. $\mathcal{T}(\lambda )$ defines a one-parameter\ family of normal
operators which are self-adjoint both for $\lambda $ real and purely
imaginary.
\end{proposition}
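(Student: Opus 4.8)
The plan is to reduce all the assertions to elementary transformation properties of the building blocks of $\mathcal{T}(\lambda)$ --- the six-vertex $R$-matrix, the scalar boundary matrices $K_\pm$, the bulk monodromy $M_0$ and its crossed companion $\widehat{M}_0$ --- under the operations $\lambda\mapsto-\lambda$, partial transposition $t_0$ in the auxiliary space, and Hermitian conjugation, and then to propagate these through the defining products $\mathcal{U}_{-}(\lambda)=M_0(\lambda)K_{-}(\lambda)\widehat{M}_0(\lambda)$, $\mathcal{V}_{+}(\lambda)=\mathcal{U}_{+}^{t_0}(-\lambda)$ and $\mathcal{T}(\lambda)=\text{tr}_0\{K_{+}(\lambda)\mathcal{U}_{-}(\lambda)\}=\text{tr}_0\{K_{-}(\lambda)\mathcal{U}_{+}(\lambda)\}$, using the Yang--Baxter relation \eqref{YB}, the reflection equation \eqref{bYB}, cyclicity of the auxiliary trace and the identity $\text{tr}_0\{A^{t_0}B\}=\text{tr}_0\{A\,B^{t_0}\}$. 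For the evenness \eqref{even-transfer} the key input is the crossing symmetry of the $R$-matrix, $\sigma_0^y R_{0n}(\mu)^{t_0}\sigma_0^y = -R_{0n}(-\mu-\eta)$ (checked directly from the $4\times4$ matrix), which together with the order-reversal under $t_0$ yields $M_0^{t_0}(\lambda)\propto\sigma_0^y\widehat{M}_0(-\lambda)\sigma_0^y$ and a companion relation for $\widehat{M}_0^{t_0}$; the explicit $2\times2$ form of $K(\lambda;\zeta,\kappa,\tau)$ similarly gives $\sigma_0^y K(\lambda;\zeta,\kappa,\tau)^{t_0}\sigma_0^y = K(-\lambda+\eta;\zeta,\kappa,-\tau)$, so that $\sigma_0^y K_\pm(\lambda)^{t_0}\sigma_0^y$ is again of $K_\mp$ type at a reflected argument. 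Transposing $\mathcal{T}(\lambda)=\text{tr}_0\{K_{+}(\lambda)M_0(\lambda)K_{-}(\lambda)\widehat{M}_0(\lambda)\}$ in the auxiliary space (the trace being invariant), distributing $t_0$ over the product and inserting $\sigma_0^y\sigma_0^y=\mathbbm{1}$ between factors, the four factors recombine --- after cyclicity --- into $\text{tr}_0\{K_{+}(-\lambda)M_0(-\lambda)K_{-}(-\lambda)\widehat{M}_0(-\lambda)\}=\mathcal{T}(-\lambda)$; this is essentially Sklyanin's parity argument \cite{Skl88} and I would simply follow the proof of Prop.~2.3 of \cite{Nic12b}.

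For the central values: in \eqref{Central-asymp} I would extract from each $\sinh$ its leading exponential as $\lambda\to\pm\infty$ --- the off-diagonal entries of $K_\pm(\lambda)$, of order $e^{2\lambda}$, then dominate their diagonal entries ($\sim e^{\pm\lambda}$), so $K_\pm(\lambda)$ degenerates to an off-diagonal matrix carrying the prefactors $\kappa_\pm e^{\pm\tau_\pm}/\sinh\zeta_\pm$, while $M_0(\lambda)$ and $\widehat{M}_0(\lambda)$ become diagonal in the auxiliary space up to subleading corrections, their diagonal entries being scalars times the $q$-weight operator $\propto\prod_n e^{\mp\eta\sigma_n^z/2}$ on $\mathcal{H}$. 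Since $\widehat{M}_0(\lambda)\propto M_0(-\lambda)$, this weight cancels between $M_0$ and $\widehat{M}_0$ in the surviving products, $\text{tr}_0$ produces a scalar, and the combination $e^{+\tau_+}e^{-\tau_-}+e^{-\tau_+}e^{+\tau_-}=2\cosh(\tau_+-\tau_-)$ appears; counting powers of $e^\lambda$ ($e^{4\lambda}$ from $K_+K_-$, $e^{2\mathsf{N}\lambda}$ from $M_0\widehat{M}_0$) gives \eqref{Central-asymp}. For \eqref{Central-1}: $\sinh(2\lambda-\eta)$ vanishes at $\lambda=\eta/2$, so $K_{-}(\eta/2)=\mathbbm{1}_0$ and $\mathcal{U}_{-}(\eta/2)=M_0(\eta/2)\widehat{M}_0(\eta/2)$, which by the unitarity $R_{0n}(\mu)R_{0n}(-\mu)=\sinh(\eta+\mu)\sinh(\eta-\mu)\mathbbm{1}$ telescopes to the scalar $(-1)^{\mathsf{N}}\mathrm{det}_qM(0)\,\mathbbm{1}$; since $\text{tr}_0 K_{+}(\eta/2)=2\cosh\eta$, this gives \eqref{Central-1}, and \eqref{even-transfer} supplies the value at $-\eta/2$. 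For \eqref{Central-2} the same mechanism works with $\sinh\to\cosh$ via $\sinh(x\pm i\pi/2)=\pm i\cosh x$: then $K_{-}(\eta/2-i\pi/2)=-i\coth\zeta_-\,\sigma_0^z$, the bulk product $M_0\sigma_0^z\widehat{M}_0$ at that point collapses --- using the same unitarity together with the $U(1)$-symmetry $\sigma_0^z R_{0n}\sigma_0^z=\sigma_n^z R_{0n}\sigma_n^z$ --- to $\mathrm{det}_qM(i\pi/2)\,\sigma_0^z$, while $\text{tr}_0\{K_{+}(\eta/2-i\pi/2)\sigma_0^z\}=-2i\coth\zeta_+\cosh\eta$ is elementary; \eqref{even-transfer} again gives the other point.

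For the Hermiticity and normality statements: from the $4\times4$ form (the $R$-matrix being symmetric, $R^\dagger=\overline{R}$) one checks $R_{12}(\lambda)^\dagger=R_{12}(\lambda^*)$ if $\eta\in\mathbb{R}$ and $R_{12}(\lambda)^\dagger=-R_{12}(-\lambda^*)$ if $\eta\in i\mathbb{R}$, and from the $2\times2$ form of $K(\lambda;\zeta,\kappa,\tau)$ one checks, under the reality conditions on the representation parameters stated in the proposition, that $K_\pm(\lambda)^\dagger$ equals $[K_\pm(\lambda^*)]^{t_0}$ in the massive case and $\sigma_0^z K_\pm(-\lambda^*)\sigma_0^z$ in the massless case. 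Reversing the product order in $M_0(\lambda)^\dagger$ and $\widehat{M}_0(\lambda)^\dagger$, the $R$-factors --- using these identities and the condition on the $\xi_n$'s --- recombine into $M_0(\lambda)^\dagger\propto\widehat{M}_0(\pm\lambda^*)$ and $\widehat{M}_0(\lambda)^\dagger\propto M_0(\pm\lambda^*)$ with the $(-1)^{\mathsf{N}}$ factors cancelling in pairs; substituting into $\mathcal{U}_{\pm}(\lambda)^\dagger=\widehat{M}_0(\lambda)^\dagger K_\pm(\lambda)^\dagger M_0(\lambda)^\dagger$, absorbing the $\sigma_0^z$'s, and --- where needed --- using the reflection equation \eqref{bYB} (equivalently the quantum-determinant relations \eqref{q-detU_1}--\eqref{q-detU_2}) to bring the $M_0$/$\widehat{M}_0$ factors into the order of $\mathcal{U}_\pm$ yields \eqref{ml-Hermitian_U}, resp.\ \eqref{m-Hermitian_U}. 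Then \eqref{I-Hermitian_T} is immediate: $\mathcal{T}(\lambda)^\dagger=\bigl(\text{tr}_0\{K_{+}(\lambda)\mathcal{U}_{-}(\lambda)\}\bigr)^\dagger=\text{tr}_0\{\mathcal{U}_{-}(\lambda)^\dagger K_{+}(\lambda)^\dagger\}$, and inserting \eqref{ml-Hermitian_U}/\eqref{m-Hermitian_U} together with the corresponding property of $K_{+}$, using transposition-invariance and cyclicity of $\text{tr}_0$ and the evenness \eqref{even-transfer}, gives $\mathcal{T}(\lambda)^\dagger=\mathcal{T}(\lambda^*)$. Finally, since $[\mathcal{T}(\lambda),\mathcal{T}(\mu)]=0$ for all $\lambda,\mu$ (Sklyanin \cite{Skl88}), the choice $\mu=\lambda^*$ gives $[\mathcal{T}(\lambda),\mathcal{T}(\lambda)^\dagger]=0$, so $\mathcal{T}(\lambda)$ is normal; for $\lambda\in\mathbb{R}$ one has $\mathcal{T}(\lambda)^\dagger=\mathcal{T}(\lambda)$ directly, and for $\lambda\in i\mathbb{R}$ one has $\mathcal{T}(\lambda)^\dagger=\mathcal{T}(\lambda^*)=\mathcal{T}(-\lambda)=\mathcal{T}(\lambda)$ by \eqref{even-transfer}, so $\mathcal{T}(\lambda)$ is self-adjoint in both cases.

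The main obstacle is not conceptual but the sign/transpose/$\sigma$-matrix bookkeeping in the evenness and Hermiticity steps: one has to track the factors $(-1)^{\mathsf{N}}$, $\sigma_0^y$ and $\sigma_0^z$ generated by crossing, conjugation and transposition through the $2\mathsf{N}$-fold operator products, and verify that the stated reality conditions are precisely what makes every auxiliary-space dressing recombine into the clean relations \eqref{ml-Hermitian_U}--\eqref{m-Hermitian_U}; likewise, turning the cancellation of the $q$-weight between $M_0(\lambda)$ and $\widehat{M}_0(\lambda)\propto M_0(-\lambda)$ into a fully controlled statement --- rather than a leading-order heuristic --- in the asymptotic analysis requires some care.
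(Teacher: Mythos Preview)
The paper does not prove this proposition: it is quoted verbatim as ``Prop.~2.3 of \cite{Nic12b}'' and imported without argument, so there is no ``paper's own proof'' to compare your attempt against. Your outline is the standard route --- crossing of $R$, explicit form of $K_\pm$, unitarity/fusion at the special points, and direct conjugation identities for Hermiticity --- and is presumably close in spirit to what \cite{Nic12b} does; in particular your derivations of \eqref{Central-1}--\eqref{Central-2} via $K_-(\eta/2)=\mathbbm{1}$, $K_-(\eta/2-i\pi/2)\propto\sigma_0^z$ and telescoping of $M_0\widehat{M}_0$ through $R$-matrix unitarity are exactly the expected mechanism.

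One point that would need tightening if you actually wrote this out: in the massless case you state $K_\pm(\lambda)^\dagger=\sigma_0^z K_\pm(-\lambda^*)\sigma_0^z$, but the target relation \eqref{ml-Hermitian_U} is $\mathcal{U}_\pm(\lambda)^\dagger=[\mathcal{U}_\pm(-\lambda^*)]^{t_0}$, with a \emph{transpose} rather than a $\sigma_0^z$-conjugation on the right. Converting the $\sigma_0^z$ dressing into the $t_0$ on the full boundary monodromy is not automatic --- it requires combining the $\sigma_0^z$ coming from $K_\pm$ with the ones generated by $M_0^\dagger$, $\widehat{M}_0^\dagger$ (via $R^\dagger=-R(-\lambda^*)$ and the crossing $\sigma_0^y R^{t_0}\sigma_0^y\propto R$) so that the net effect is a global $t_0$; you flag this bookkeeping as the main obstacle, correctly, but as written the massless step is asserted rather than shown. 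Similarly, your invocation of ``the reflection equation \eqref{bYB} (equivalently the quantum-determinant relations)'' to reorder $M_0/\widehat{M}_0$ in $\mathcal{U}_\pm^\dagger$ is a bit loose: the reordering one actually needs there comes from the Yang--Baxter relation \eqref{YB} for the bulk monodromy (to move the $R$-factors past each other), not from \eqref{bYB}.
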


\subsection{SOV representations for $\mathcal{T}(\protect\lambda )$-spectral
problem}

Let us recall here the characterization obtained in \cite{Nic12b,Fald-KN13}
by SOV method of the spectrum of the transfer matrix $\mathcal{T}(\lambda )$. First we introduce the  following notations:%
\begin{equation}
X_{k,m}^{(i,r)}(\tau _{\pm },\alpha _{\pm },\beta _{\pm })\equiv \left(
-1\right) ^{i}\left( 1-r\right) \eta +\tau _{-}-\tau_{+}+(-1)^{k}(\alpha
_{-}+\beta _{-})-(-1)^{m}(\alpha _{+}-\beta _{+})+i\pi (k+m),
\label{SOV-cond-}
\end{equation}%
and by using these linear combinations of the boundary parameters we
introduce the set $N_{SOV}\subset\mathbb{C}^6$ of boundary parameters for which the separation of variables cannot be applied directly. More precisely   
$$(\tau _{+},\alpha _{+},\beta _{+},\tau _{-},\alpha_{-},\beta _{-})\in N_{SOV},$$ 
if $\exists (k,h,m,n)\in \left\{ 0,1\right\} $ such that %
\begin{equation}
X_{k,m}^{(0,\mathsf{N})}(\tau _{\pm },\alpha _{\pm },\beta _{\pm })=0 \quad\text{and}\quad
X_{h,n}^{(1,\mathsf{N})}(\tau _{\pm },\alpha _{\pm },\beta _{\pm })=0.\end{equation}%
All the results in the following will be obtained for the generic values of the boundary parameters, not belonging to this set. The SOV method applicability can be further extended applying the discrete symmetries discussed in the Section \ref{sect-descretesym}.

 Following \cite{Fald-KN13} we define the functions:%
\begin{eqnarray}
g_{a}(\lambda ) &=&\frac{\cosh ^{2}2\lambda -\cosh ^{2}\eta }{\cosh
^{2}2\zeta _{a}^{(0)}-\cosh ^{2}\eta }\,\prod_{\substack{ b=1 \\ b\neq a}}^{%
\mathsf{N}}\frac{\cosh 2\lambda -\cosh 2\zeta _{b}^{(0)}}{\cosh 2\zeta
_{a}^{(0)}-\cosh 2\zeta _{b}^{(0)}}\quad \text{ \ for }a\in \{1,...,\mathsf{N%
}\}, \\
\mathbf{A}(\lambda ) &=&(-1)^{\mathsf{N}}\frac{\sinh (2\lambda +\eta )}{%
\sinh 2\lambda }g_{+}(\lambda )g_{-}(\lambda )a(\lambda )d(-\lambda ),
\end{eqnarray}%
and%
\begin{align}
f(\lambda )=& \frac{\cosh 2\lambda +\cosh \eta }{2\cosh \eta }\prod_{b=1}^{%
\mathsf{N}}\frac{\cosh 2\lambda -\cosh 2\zeta _{b}^{(0)}}{\cosh \eta -\cosh
2\zeta _{b}^{(0)}}\mathbf{A}(\eta /2)  \notag \\
& -(-1)^{\mathsf{N}}\frac{\cosh 2\lambda -\cosh \eta }{2\cosh \eta }%
\prod_{b=1}^{\mathsf{N}}\frac{\cosh 2\lambda -\cosh 2\zeta _{b}^{(0)}}{\cosh
\eta +\cosh 2\zeta _{b}^{(0)}}\mathbf{A}(\eta /2+i\pi /2)  \notag \\
& +2^{(1-\mathsf{N})}\frac{\kappa _{+}\kappa _{-}\cosh (\tau _{+}-\tau _{-})%
}{\sinh \zeta _{+}\sinh \zeta _{-}}(\cosh ^{2}2\lambda -\cosh ^{2}\eta
)\prod_{b=1}^{\mathsf{N}}(\cosh 2\lambda -\cosh 2\zeta _{b}^{(0)}),
\label{f-function}
\end{align}%
where%
\begin{equation}
\zeta _{n}^{(h_{n})}=\xi _{n}+(h_{n}-\frac{1}{2})\eta \quad \forall n\in
\{1,...,\mathsf{N}\},\text{ }h_{n}\in \{0,1\}\text{.}
\end{equation}%
We can now recall the main result on the characterization of the set $\Sigma
_{\mathcal{T}}$ formed by all the eigenvalue functions of the transfer
matrix $\mathcal{T}(\lambda )$.

\begin{theorem}[Theorem 5.3 and Corollary 5.1 of \protect\cite{Fald-KN13}]
\label{C:T-eigenstates-}Let $(\tau _{+},\alpha _{+},\beta _{+},\tau
_{-},\alpha _{-},\beta _{-})\in \mathbb{C}^{6}\backslash N_{SOV}$ and let
the inhomogeneities $\{\xi _{1},...,\xi _{\mathsf{N}}\}\in \mathbb{C}$ $^{%
\mathsf{N}}$ be generic:%
\begin{equation}
\xi _{a}\neq \pm\xi _{b}+r\eta\text{\ \ mod\,}2\pi \text{ \ }\forall a\neq b\in \{1,...,\mathsf{N}%
\}\,\,\text{and\thinspace \thinspace }r\in \{-1,0,1\},  \label{xi-conditions}
\end{equation}%
then $\mathcal{T}(\lambda )$ has simple spectrum and the set of its eigenvalues $\Sigma _{\mathcal{T}}$
is characterized by:%
\begin{equation}
\Sigma _{\mathcal{T}}=\left\{ \tau (\lambda ):\tau (\lambda )=f(\lambda
)+\sum_{a=1}^{\mathsf{N}}g_{a}(\lambda )x_{a},\text{ \ \ }\forall
\{x_{1},...,x_{\mathsf{N}}\}\in \Sigma _{T}\right\} ,
\label{Interpolation-Form-T}
\end{equation}%
where $\Sigma _{T}$ is the set of solutions to the following inhomogeneous
system of $\mathsf{N}$ quadratic equations:%
\begin{equation}
x_{n}\sum_{a=1}^{\mathsf{N}}g_{a}(\zeta _{n}^{(1)})x_{a}+x_{n}f(\zeta
_{n}^{(1)})=q_{n},\text{ \ \ \ }q_{n}=\frac{\mathrm{det}_{q}K_{+}(\xi _{n})%
\mathrm{det}_{q}\,\mathcal{U}_{-}(\xi _{n})}{\sinh (\eta +2\xi _{n})\sinh (\eta
-2\xi _{n})},\text{ \ \ }\forall n\in \{1,...,\mathsf{N}\},
\label{Quadratic System}
\end{equation}%
in $\mathsf{N}$ unknowns $\{x_{1},...,x_{\mathsf{N}}\}$.
\end{theorem}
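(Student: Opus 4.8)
\emph{Proof proposal.} I would run the standard quantum separation of variables programme for the reflection algebra, as set up for general non‑diagonal boundaries in \cite{Nic12b} and extended to the most general $K$‑matrices in \cite{Fald-KN13}, organising it as follows.

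\textbf{Step 1: construction of an SOV basis.} The first and most technical task is to produce a basis $\{\,|\mathbf h\rangle : \mathbf h=(h_1,\dots,h_{\mathsf N})\in\{0,1\}^{\mathsf N}\,\}$ of $\mathcal H$ diagonalising a suitable commuting family $\mathsf B(\lambda)$ built from the entries of $\mathcal U_-(\lambda)$ (after the gauge / vertex--IRF type transformation needed to cope with the fully non‑diagonal $K_\pm$), the separate variables being exactly the shifted inhomogeneities $\zeta_n^{(h_n)}=\xi_n+(h_n-\tfrac12)\eta$. This is where the hypotheses enter: the conditions \eqref{xi-conditions} guarantee that $\mathsf B(\lambda)$ has simple spectrum, so that the $2^{\mathsf N}$ vectors $|\mathbf h\rangle$ really form a basis, while $(\tau_\pm,\alpha_\pm,\beta_\pm)\notin N_{SOV}$ is precisely what makes the associated change of basis invertible. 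Using the reflection‑algebra commutation relations together with $\mathcal T(\lambda)=\tr_0\{K_+(\lambda)\,\mathcal U_-(\lambda)\}$ one then obtains, on each $|\mathbf h\rangle$, a three‑term action of the transfer matrix at the special points,
\[
\mathcal T(\zeta_n^{(h_n)})\,|\mathbf h\rangle \;=\; \mathsf b_{n}^{(h_n)}\,|\mathsf T_n^{-}\mathbf h\rangle \;+\; \mathsf c_{n}^{(h_n)}\,|\mathsf T_n^{+}\mathbf h\rangle ,
\]
where $\mathsf T_n^{\pm}$ flips $h_n$ and the coefficients are explicit functions of the representation parameters; this is the statement that the variables are ``separated''.

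\textbf{Step 2: the interpolation formula.} By Proposition \ref{normality}, $\mathcal T(\lambda)$ is even, \eqref{even-transfer}, and a trigonometric polynomial whose degree and leading coefficient are fixed by the asymptotics \eqref{Central-asymp} and whose values at $\pm\eta/2$ and $\pm(\eta/2-i\pi/2)$ are fixed by \eqref{Central-1}--\eqref{Central-2}. Hence any eigenvalue $\tau(\lambda)$ is completely determined by these data together with its $\mathsf N$ remaining free values, which I take at the nodes $\zeta_1^{(0)},\dots,\zeta_{\mathsf N}^{(0)}$; collecting the known part into $f(\lambda)$ and the associated Lagrange‑type interpolation polynomials (normalised so that $g_a(\zeta_b^{(0)})=\delta_{ab}$ and $g_a$ vanishes at the four central points) into $g_a(\lambda)$ yields the functional form \eqref{Interpolation-Form-T}, with $x_a=\tau(\zeta_a^{(0)})$. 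This step is mainly degree/special‑value bookkeeping.

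\textbf{Step 3: from an eigenvalue to the quadratic system, and back.} Given an eigenvector $|t\rangle=\sum_{\mathbf h}\Psi_t(\mathbf h)\,|\mathbf h\rangle$ with eigenvalue $\tau$, projecting $\mathcal T(\zeta_n^{(h_n)})|t\rangle=\tau(\zeta_n^{(h_n)})|t\rangle$ onto the basis and using the three‑term action of Step 1 gives, in each variable separately, a two‑term recursion for $\Psi_t$ relating $h_n=0$ and $h_n=1$; imposing consistency of this recursion around the $2$‑cycle $h_n:0\to1\to0$ forces $\tau(\zeta_n^{(0)})\,\tau(\zeta_n^{(1)})=q_n$, the right‑hand side being the product of the two recursion coefficients. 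By centrality of the quantum determinant \eqref{q-detU_1}--\eqref{q-detU_2} and its explicit value \eqref{q-detU_-exp} (equivalently, by the degeneration of the $R$‑matrix at $\pm\eta$), this product is exactly the expression displayed in \eqref{Quadratic System}; substituting $\tau(\zeta_n^{(1)})=f(\zeta_n^{(1)})+\sum_a g_a(\zeta_n^{(1)})x_a$ from Step 2 produces the quadratic system. Conversely, for $\{x_a\}\in\Sigma_T$ one defines $\tau$ by \eqref{Interpolation-Form-T}, solves the two‑term recursions by a telescoping product $\Psi_t(\mathbf h)=\prod_n(\cdots)$, and checks that $|t\rangle=\sum_{\mathbf h}\Psi_t(\mathbf h)|\mathbf h\rangle$ is non‑zero (again using genericity of the $\xi_n$) and is an eigenvector with eigenvalue $\tau$. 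Thus $\Sigma_{\mathcal T}$ coincides with the image of $\Sigma_T$ under \eqref{Interpolation-Form-T}.

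\textbf{Step 4: simplicity and completeness.} The map $\{x_a\}\mapsto\tau$ is injective since the $g_a$ are linearly independent, and eigenvectors attached to distinct eigenvalues are automatically independent; so it remains to show that $\Sigma_T$ has exactly $2^{\mathsf N}=\dim\mathcal H$ elements, which then forces completeness and simplicity simultaneously. I expect this counting to be the main obstacle, alongside the explicit SOV‑basis construction of Step 1 in the truly general boundary case: it is a Bézout‑type argument (each equation being quadratic in its ``own'' unknown $x_n$ and multilinear in the others), and one must exclude solutions escaping to infinity or colliding — which is precisely what the genericity assumptions \eqref{xi-conditions} and $(\tau_\pm,\alpha_\pm,\beta_\pm)\notin N_{SOV}$ are tailored to guarantee.
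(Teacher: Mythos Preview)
The paper does not prove this theorem: it is quoted verbatim from \cite{Fald-KN13} (as Theorem~5.3 and Corollary~5.1 there), so there is no ``paper's own proof'' to compare against. Your outline is, however, exactly the SOV programme carried out in \cite{Nic12b,Fald-KN13}, and Steps~1--3 match that strategy faithfully: construct the SOV basis via a gauge transformation of $\mathcal U_-$, read off the interpolation form from the known degree, parity, asymptotics and central values of the transfer matrix, and derive the discrete Baxter‑type relations at $\zeta_n^{(0)},\zeta_n^{(1)}$ whose compatibility is precisely $\tau(\zeta_n^{(0)})\tau(\zeta_n^{(1)})=q_n$ with $q_n$ fixed by the quantum determinant.

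One point in Step~4 deserves correction. Simplicity of the spectrum in the SOV approach is \emph{not} obtained by a B\'ezout count of the solutions of \eqref{Quadratic System}; it comes directly from Step~3: for a given eigenvalue $\tau$, the two‑term recursions determine the wavefunction $\Psi_t(\mathbf h)$ uniquely up to an overall factor (the telescoping product you wrote), so the eigenspace is one‑dimensional. The theorem as stated does not assert that $\Sigma_T$ has exactly $2^{\mathsf N}$ elements; that cardinality is only claimed later (Corollary~\ref{Theo-InBAE}) under the normality hypothesis of Proposition~\ref{normality}, which guarantees diagonalisability of $\mathcal T(\lambda)$ and hence forces the count. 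Your B\'ezout heuristic is suggestive but neither needed nor the actual mechanism.
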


\section{Inhomogeneous Baxter equation 
}

Here we show that the SOV characterization of the spectrum admits an
equivalent  formulation in terms of a second order functional
difference equation of Baxter type:%
\begin{equation}
\tau (\lambda )Q(\lambda )=\mathbf{A}(\lambda )Q(\lambda -\eta )+\mathbf{A}%
(-\lambda )Q(\lambda +\eta )+F(\lambda ),  \label{Inhom-Baxter-Eq}
\end{equation}%
which contains a non-zero inhomogeneous term $F(\lambda )$ non-zero for
generic integrable boundary conditions and the $Q$-functions are
{\it trigonometric polynomials}.
In this paper we will call $f(\lambda)$ a trigonometric polynomial of degree $\mathsf{M}$ if
$e^{\mathsf{M}\lambda}\,f(\lambda)$ is a polynomial  of $e^{2\lambda}$ of degree  $\mathsf{M}$. Most trigonometric polynomials we will consider in the following sections will be even functions of $\lambda$ and will satisfy an additional condition $f(\la+i\pi)=f(\la)$. It is easy to see in this situation that such functions can be written as polynomials of $\cosh 2\lambda$.

\subsection{Main functions in the functional equation}

Let $Q(\lambda )$ be an even trigonometric polynomial of degree $2\mathsf{N}$. 
It can be written in  the following form:%
\begin{align}
Q(\lambda )& =\sum_{a=1}^{\mathsf{N}}\prod_{\substack{ b=1  \\ b\neq a}}^{%
\mathsf{N}}\frac{\cosh 2\lambda -\cosh 2\zeta _{b}^{(0)}}{\cosh 2\zeta
_{a}^{(0)}-\cosh 2\zeta _{b}^{(0)}}Q(\zeta _{a}^{(0)})+2^{\mathsf{N}%
}\prod_{a=1}^{\mathsf{N}}\left( \cosh 2\lambda -\cosh 2\zeta
_{a}^{(0)}\right)  \label{Q-form1} \\
& =2^{\mathsf{N}}\prod_{a=1}^{\mathsf{N}}\left( \cosh 2\lambda -\cosh
2\lambda _{a}\right) ,  \label{Q-form2}
\end{align}%
where from now on the $Q(\zeta _{a}^{(0)})$ are arbitrary complex numbers or
similarly the $\lambda _{a}$ are arbitrary complex numbers. Then, introducing
the function:%
\begin{equation}
Z_{Q}(\lambda )=\mathbf{A}(\lambda )Q(\lambda -\eta )+\mathbf{A}(-\lambda
)Q(\lambda +\eta )
\end{equation}%
we can prove the following Lemma

\begin{lemma}
Let $Q(\lambda )$ be any function of the form $\left( \ref{Q-form2}\right) $
then the associated function $Z_{Q}(\lambda )$ is an even trigonometric
polynomial of degree $4\mathsf{N}+4$ of the following form:%
\begin{equation}
Z_{Q}(\lambda )=\sum_{a=0}^{2(\mathsf{N}+1)}z_{a}\cosh ^{a}2\lambda ,\text{
with }z_{2(\mathsf{N}+1)}=\frac{2\kappa _{+}\kappa _{-}\cosh (\alpha
_{+}+\alpha _{-}-\beta _{+}+\beta _{-}-(\mathsf{N}+1)\eta )}{\sinh \zeta
_{+}\sinh \zeta _{-}}.
\end{equation}
\end{lemma}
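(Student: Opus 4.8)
The plan is to compute $Z_Q(\lambda)$ directly from the definitions and track the degree and the top coefficient. First I would recall that $\mathbf{A}(\lambda) = (-1)^{\mathsf{N}}\frac{\sinh(2\lambda+\eta)}{\sinh 2\lambda}g_+(\lambda)g_-(\lambda)a(\lambda)d(-\lambda)$. Each of $g_\pm(\lambda)$ is, by \eqref{g_PM}, a product of one $\sinh$ and one $\cosh$ that are linear in $e^{\pm\lambda}$, so $g_\pm(\lambda)$ is a trigonometric polynomial of degree $2$; similarly $a(\lambda)$ and $d(-\lambda)=a(-\lambda-\eta)$ are each of degree $\mathsf{N}$. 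The prefactor $\frac{\sinh(2\lambda+\eta)}{\sinh 2\lambda}$ is not a trigonometric polynomial, but it becomes one after multiplication, and more importantly the combination $\mathbf{A}(\lambda)Q(\lambda-\eta)+\mathbf{A}(-\lambda)Q(\lambda+\eta)$ is where the denominators $\sinh 2\lambda$ cancel: the two terms have opposite residues at the zeros of $\sinh 2\lambda$ because $Q$ is even, so $Z_Q$ has no poles. Combined with evenness and $i\pi$-periodicity (both $\mathbf{A}(\lambda)\mathbf{A}(-\lambda)$-type symmetry arguments and the $i\pi$-shift invariance of all the building blocks, which should be checked), this shows $Z_Q$ is an even $i\pi$-periodic trigonometric polynomial, hence a genuine polynomial in $\cosh 2\lambda$, as announced in the paragraph on trigonometric polynomials.

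Next I would count the degree. As a trigonometric polynomial in the sense defined in the paper, $\mathbf{A}(\lambda)$ has degree $2+2+\mathsf{N}+\mathsf{N}=2\mathsf{N}+4$ coming from $g_+g_-a\,d$, while the rational prefactor contributes net degree $0$ (numerator degree $2$, denominator degree $2$). Then $\mathbf{A}(\lambda)Q(\lambda-\eta)$ has degree $(2\mathsf{N}+4) + 4\mathsf{N} = 6\mathsf{N}+4$ as a function of $\lambda$ in the $e^{\lambda}$-grading — wait, I need to be careful: $Q$ has degree $2\mathsf{N}$ as a trigonometric polynomial but degree $4\mathsf{N}$ in the $e^{\lambda}$ variable since it is a polynomial of degree $2\mathsf{N}$ in $\cosh 2\lambda = \tfrac12(e^{2\lambda}+e^{-2\lambda})$. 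It is cleaner to work throughout in the variable $u=\cosh 2\lambda$: $g_\pm(\lambda)$ is degree $1$ in $u$, $a(\lambda)d(-\lambda)$ is degree $\mathsf{N}$ in $u$ (each $\sinh(\lambda-\xi_n+\eta/2)\sinh(-\lambda-\xi_n-\eta/2)$ pairs up to a degree-one polynomial in $\cosh 2\lambda$), the prefactor $\frac{\sinh(2\lambda+\eta)}{\sinh 2\lambda}$ after symmetrization contributes degree $0$, and $Q(\lambda\mp\eta)$ is degree $\mathsf{N}$ in $u$. So $Z_Q$ is degree $1+1+\mathsf{N}+\mathsf{N}=2\mathsf{N}+2$ in $u=\cosh 2\lambda$, i.e.\ $Z_Q(\lambda)=\sum_{a=0}^{2(\mathsf{N}+1)}z_a\cosh^a 2\lambda$, matching the claim. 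One must verify there is no cancellation at top order, which follows from computing the leading coefficient.

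For the leading coefficient $z_{2(\mathsf{N}+1)}$, I would extract the $e^{2(\mathsf{N}+1)\cdot 2\lambda}$ asymptotics as $\lambda\to+\infty$. From \eqref{g_PM}, $g_\pm(\lambda)\sim \frac{e^{2\lambda}e^{\mp(\alpha_\pm\mp\beta_\pm)}\cdots}{4\sinh\alpha_\pm\cosh\beta_\pm}$ up to the explicit shifts by $\eta/2$; using $\sinh\alpha_\pm\cosh\beta_\pm=\frac{\sinh\zeta_\pm}{2\kappa_\pm}$ from \eqref{alfa-beta} brings in the $\kappa_\pm/\sinh\zeta_\pm$ factors. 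The product $a(\lambda)d(-\lambda)\sim \left(\tfrac12 e^{2\lambda}\right)^{\mathsf{N}}$-type leading term (with a sign from $d(-\lambda)$), the prefactor $\frac{\sinh(2\lambda+\eta)}{\sinh 2\lambda}\to e^\eta$, and $Q(\lambda\mp\eta)\sim 2^{\mathsf{N}}\left(\tfrac12 e^{2(\lambda\mp\eta)\cdot 1}\right)$-leading — again all tracked in powers of $e^{2\lambda}$. Adding the two terms $\mathbf{A}(\lambda)Q(\lambda-\eta)$ and $\mathbf{A}(-\lambda)Q(\lambda+\eta)$: the first contributes $e^{-\eta\cdot 2\mathsf{N}}$-type shift weights collapsing to a single exponential in $\eta$, the second the conjugate, and their sum produces the $\cosh$ of the combination $\alpha_++\alpha_--\beta_++\beta_--(\mathsf{N}+1)\eta$. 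The main obstacle will be this last bookkeeping step: getting the exponential shift-weights from the $\mp\eta$ arguments in $Q$, from the $\pm\eta/2$ shifts hidden in $g_\pm$, $a$, $d$, and from the prefactor to combine correctly into exactly $-(\mathsf{N}+1)\eta$ inside the $\cosh$, and confirming the normalization constant $\frac{2\kappa_+\kappa_-}{\sinh\zeta_+\sinh\zeta_-}$ with the right sign and power of $2$. Everything else is routine degree counting and pole cancellation.
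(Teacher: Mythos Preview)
Your overall plan---evenness of $Z_Q$, cancellation of the poles at the zeros of $\sinh 2\lambda$ via $Q(-\eta)=Q(\eta)$, $i\pi$-periodicity, and extraction of the leading term from the $\lambda\to\pm\infty$ asymptotics---is exactly the route the paper takes, and your final paragraph on the asymptotics is the right computation.

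There is, however, a genuine error in your degree-counting paragraph. The individual factors $g_\pm(\lambda)$, $a(\lambda)d(-\lambda)$, and $Q(\lambda\mp\eta)$ are \emph{not} polynomials in $u=\cosh 2\lambda$, so you cannot read off the degree of $Z_Q$ by adding their ``degrees in $u$''. For instance, from \eqref{g_PM} one has (up to the constant denominator)
\[
g_{+}(\lambda)\;\propto\;\sinh\bigl(2\lambda+\alpha_{+}-\beta_{+}-\eta\bigr)+\sinh(\alpha_{+}+\beta_{+}),
\]
which depends on $\sinh 2\lambda$ as well as $\cosh 2\lambda$; likewise $Q(\lambda-\eta)$ is a polynomial in $\cosh(2\lambda-2\eta)=\cosh 2\lambda\cosh 2\eta-\sinh 2\lambda\sinh 2\eta$, not in $\cosh 2\lambda$ alone. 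Only the full even, $i\pi$-periodic combination $Z_Q(\lambda)$ is a polynomial in $\cosh 2\lambda$. The paper accordingly does not attempt any term-by-term degree count; it obtains both the degree and the coefficient $z_{2(\mathsf{N}+1)}$ in one stroke from
\[
\lim_{\lambda\to\pm\infty}e^{\mp 4(\mathsf{N}+1)\lambda}Z_Q(\lambda),
\]
which is precisely what your last paragraph sets up. Drop the $u$-variable bookkeeping and let the asymptotic computation carry both the degree and the leading coefficient; the rest of your outline is correct and matches the paper.
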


\begin{proof}[Proof]
The fact that the function $Z_{Q}(\lambda )$ is even in $\lambda $ is a
trivial consequence of the fact that $Q(\lambda )$ is even; in fact, it holds:%
\begin{eqnarray}
Z_{Q}(-\lambda ) &=&\mathbf{A}(-\lambda )Q(-\lambda -\eta )+\mathbf{A}%
(\lambda )Q(-\lambda +\eta )  \notag \\
&=&\mathbf{A}(-\lambda )Q(\lambda +\eta )+\mathbf{A}(\lambda )Q(\lambda
-\eta )=Z_{Q}(\lambda ).
\end{eqnarray}%
The fact that $Z_{Q}(\lambda )$ is indeed a trigonometric polynomial follows
from its definition once we observe that $\lambda=0$ is not a singular point and the following identity holds:%
\begin{equation}
\lim_{\lambda \rightarrow 0}Z_{Q}(\lambda )=2g_{+}(0)g_{-}(0)a(0)a(-\eta
)Q(0)\cosh \eta .
\end{equation}%
Now the functional form of $Z_{Q}(\lambda )$ is a consequence of the
following identities:%
\begin{equation}
Z_{Q}(\lambda +i\pi )=Z_{Q}(\lambda ),\text{ \ }\lim_{\lambda \rightarrow
\pm \infty }\frac{Z_{Q}(\lambda )}{e^{\pm 4(\mathsf{N}+1)\lambda }}=\frac{%
\kappa _{+}\kappa _{-}\cosh (\alpha _{+}+\alpha _{-}-\beta _{+}+\beta _{-}-(%
\mathsf{N}+1)\eta )}{2^{(2\mathsf{N}+1)}\sinh \zeta _{+}\sinh \zeta _{-}},
\end{equation}%
where the second identity follows from:%
\begin{align}
\lim_{\lambda \rightarrow \pm \infty }e^{\mp (2\mathsf{N}+4)\lambda }\mathbf{%
A}(\lambda )& =2^{-2(\mathsf{N}+1)}\frac{\kappa _{+}\kappa _{-}\exp \pm
(\alpha _{+}+\alpha _{-}-\beta _{+}+\beta _{-}+(\mathsf{N}-1)\eta )}{\sinh
\zeta _{+}\sinh \zeta _{-}}, \\
\lim_{\lambda \rightarrow \pm \infty }e^{\mp 2\mathsf{N}\lambda }Q(\lambda
)& =1.
\end{align}
\end{proof}

\subsection{On the need of an inhomogeneous term in the functional equation}

Here, we would like to point out that it is simple to define the boundary
conditions for which one can prove that the homogeneous version of the
Baxter equation $\left( \ref{Inhom-Baxter-Eq}\right) $ does not admit
trigonometric polynomial solutions for $\tau (\lambda )\in \Sigma _{\mathcal{%
T}}$.

\begin{lemma}
\label{impossible_hom}
Assume that the boundary parameters satisfy the following conditions:%
\begin{equation}
\kappa _{+}\neq 0,\kappa _{-}\neq 0,\text{ \ }Y^{(i,r)}(\tau _{\pm },\alpha
_{\pm },\beta _{\pm })\neq 0\text{ \ }\forall i\in \left\{ 0,1\right\} ,r\in 
\mathbb{Z}  \label{Inhomogeneous-boundary conditions}
\end{equation}%
where we have defined:%
\begin{equation}
Y^{(i,r)}(\tau _{\pm },\alpha _{\pm },\beta _{\pm })\equiv \tau _{-}-\tau
_{+}+\left( -1\right) ^{i}\left[ \left( \mathsf{N}-1-r\right) \eta +(\alpha
_{-}+\alpha _{+}+\beta _{-}-\beta _{+})\right] ,
\end{equation}%
then for any $\tau (\lambda )\in \Sigma _{\mathcal{T}}$ \ the homogeneous
Baxter equation:%
\begin{equation}
\tau (\lambda )Q(\lambda )=\mathbf{A}(\lambda )Q(\lambda -\eta )+\mathbf{A}%
(-\lambda )Q(\lambda +\eta ),
\end{equation}%
does not admit any (non identically zero) $Q(\lambda )$ of Laurent
polynomial form in $e^{\lambda }$.
\end{lemma}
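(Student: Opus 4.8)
The plan is to argue by contradiction, reading off the leading behaviour at $\lambda\to+\infty$ of the two sides. Suppose that for some $\tau(\lambda)\in\Sigma_{\mathcal{T}}$ there exists a non identically zero Laurent polynomial $Q(\lambda)=\sum_{k=p}^{q}c_{k}e^{k\lambda}$ with leading coefficient $c_{q}\neq 0$ solving $\tau(\lambda)Q(\lambda)=\mathbf{A}(\lambda)Q(\lambda-\eta)+\mathbf{A}(-\lambda)Q(\lambda+\eta)$. Only the top term $c_{q}e^{q\lambda}$ of $Q$ will enter, so no parity or trigonometric‑polynomial restriction on $Q$ is needed, and the integer $q$ is allowed to depend on $\tau$.

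Next I would assemble the asymptotics already established in the excerpt. From the computations in the proof of the previous Lemma,
\begin{equation}
\lim_{\lambda\to+\infty}e^{-(2\mathsf{N}+4)\lambda}\mathbf{A}(\lambda)=c^{+},\qquad\lim_{\lambda\to-\infty}e^{(2\mathsf{N}+4)\lambda}\mathbf{A}(\lambda)=c^{-},\qquad c^{\pm}=\frac{\kappa_{+}\kappa_{-}\,e^{\pm(\alpha_{+}+\alpha_{-}-\beta_{+}+\beta_{-}+(\mathsf{N}-1)\eta)}}{2^{2(\mathsf{N}+1)}\sinh\zeta_{+}\sinh\zeta_{-}},
\end{equation}
so in particular $\lim_{\lambda\to+\infty}e^{-(2\mathsf{N}+4)\lambda}\mathbf{A}(-\lambda)=c^{-}$; the essential point is that \emph{both} terms on the right-hand side feed the top power $e^{(q+2\mathsf{N}+4)\lambda}$, which is why a hyperbolic cosine of the boundary data will reappear. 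Since $\mathbf{A}$ carries a $1/\sinh 2\lambda$ and is not itself a polynomial, it is cleanest to run everything through such limits rather than through formal top coefficients. By \rf{Central-asymp}, for every $\tau\in\Sigma_{\mathcal{T}}$ one has $\lim_{\lambda\to+\infty}e^{-(2\mathsf{N}+4)\lambda}\tau(\lambda)=2^{-(2\mathsf{N}+1)}\kappa_{+}\kappa_{-}\cosh(\tau_{+}-\tau_{-})/(\sinh\zeta_{+}\sinh\zeta_{-})$, a formula valid even when this limit is zero. Multiplying the homogeneous equation by $e^{-(q+2\mathsf{N}+4)\lambda}$, using $\lim_{\lambda\to+\infty}e^{-q\lambda}Q(\lambda\pm\eta)=c_{q}e^{\pm q\eta}$, and letting $\lambda\to+\infty$, the factor $c_{q}$ and the common factor $\kappa_{+}\kappa_{-}/(\sinh\zeta_{+}\sinh\zeta_{-})$ cancel (here $\kappa_{\pm}\neq 0$ is used, together with $\sinh\zeta_{\pm}\neq 0$), leaving the scalar identity
\begin{equation}
\cosh(\tau_{+}-\tau_{-})=\cosh\!\big(\alpha_{+}+\alpha_{-}-\beta_{+}+\beta_{-}+(\mathsf{N}-1-q)\eta\big).
\end{equation}

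Finally I would convert this into a contradiction with \rf{Inhomogeneous-boundary conditions}. An equality $\cosh A=\cosh B$ forces $A\equiv\pm B\pmod{2\pi i}$; with $A=\tau_{+}-\tau_{-}$ and $B=\alpha_{+}+\alpha_{-}-\beta_{+}+\beta_{-}+(\mathsf{N}-1-q)\eta$, and recalling that $Y^{(i,r)}(\tau_{\pm},\alpha_{\pm},\beta_{\pm})=\tau_{-}-\tau_{+}+(-1)^{i}\big[(\mathsf{N}-1-r)\eta+(\alpha_{-}+\alpha_{+}+\beta_{-}-\beta_{+})\big]$, the $+$ sign yields $Y^{(0,q)}\equiv 0$ and the $-$ sign yields $Y^{(1,q)}\equiv 0$, in both cases modulo $2\pi i$ and with $q\in\mathbb{Z}$. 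Either way this contradicts the standing assumption $Y^{(i,r)}\neq 0$ for all $i\in\{0,1\}$, $r\in\mathbb{Z}$, so no such $Q$ exists, for any $\tau\in\Sigma_{\mathcal{T}}$. (This is the expected output: the resonances $Y^{(i,r)}=0$ are exactly the boundary conditions for which the inhomogeneous term $F(\lambda)$ of \rf{Inhom-Baxter-Eq} will be shown to drop out.)

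The proof is short; the delicate steps are only bookkeeping ones: tracking the $e^{\pm q\eta}$ shifts coming from $Q(\lambda\pm\eta)$ and the fact that $\mathbf{A}(\lambda)Q(\lambda-\eta)$ and $\mathbf{A}(-\lambda)Q(\lambda+\eta)$ both contribute at order $e^{(q+2\mathsf{N}+4)\lambda}$, so that the sum $c^{+}e^{-q\eta}+c^{-}e^{q\eta}$ of their leading coefficients recombines into a $\cosh$; insisting on limits at $\lambda\to\pm\infty$ rather than formal leading coefficients, since $\mathbf{A}$ has poles; and reading $Y^{(i,r)}\neq 0$ modulo $2\pi i$, the natural convention here because $\cosh(\tau_{+}-\tau_{-})$ depends on $\tau_{+}-\tau_{-}$ only modulo $2\pi i$.
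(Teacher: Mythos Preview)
Your proof is correct and follows essentially the same approach as the paper: both argue by comparing the leading asymptotics at $\lambda\to+\infty$ of the two sides of the homogeneous Baxter equation, arriving at the incompatible identity $\cosh(\tau_{+}-\tau_{-})=\cosh\!\big(\alpha_{+}+\alpha_{-}-\beta_{+}+\beta_{-}+(\mathsf{N}-1-q)\eta\big)$, which contradicts $Y^{(i,q)}\neq 0$. Your write-up is in fact slightly more careful than the paper's in making explicit that both $\mathbf{A}(\lambda)Q(\lambda-\eta)$ and $\mathbf{A}(-\lambda)Q(\lambda+\eta)$ contribute at the same leading order and in spelling out the $\cosh A=\cosh B\Rightarrow A\equiv\pm B\pmod{2\pi i}$ step.
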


\begin{proof}[Proof]
If we consider the following function:%
\begin{equation}
Q(\lambda )=\sum_{a=-s}^{r}y_{a}e^{a\lambda },\text{ \ with }r,s\in \mathbb{N%
}
\end{equation}%
we can clearly always chose the coefficients $y_{a}$ such that the r.h.s. of
the homogeneous Baxter equation has no poles as required. However, it is
enough to consider now the asymptotics:%
\begin{align}
\lim_{\lambda \rightarrow +\infty }\frac{\left[ \mathbf{A}(\lambda
)Q(\lambda -\eta )+\mathbf{A}(-\lambda )Q(\lambda +\eta )\right] }{e^{(2%
\mathsf{N}+4+r)\lambda }}& =\frac{y_{r}\kappa _{+}\kappa _{-}\cosh (\alpha
_{+}+\alpha _{-}-\beta _{+}+\beta _{-}+(\mathsf{N}-1-r)\eta )}{2^{2(\mathsf{N%
}+1)}\sinh \zeta _{+}\sinh \zeta _{-}} \\
\lim_{\lambda \rightarrow +\infty }e^{-(2\mathsf{N}+4+r)\lambda }\tau
(\lambda )Q(\lambda )& =\frac{y_{r}\kappa _{+}\kappa _{-}\cosh (\tau
_{+}-\tau _{-})}{2^{2(\mathsf{N}+1)}\sinh \zeta _{+}\sinh \zeta _{-}}
\end{align}%
and use the conditions $\left( \ref{Inhomogeneous-boundary conditions}%
\right) $ to observe that for any $r\in \mathbb{Z}$ the asymptotic of the
homogeneous Baxter equation cannot be satisfied which implies the validity
of the lemma.
\end{proof}

\subsection{SOV spectrum in terms of the inhomogeneous Baxter equation}

We introduce now the following function of the boundary parameters:%
\begin{equation}
F_{0}=\frac{2\kappa _{+}\kappa _{-}\left( \cosh (\tau _{+}-\tau _{-})-\cosh
(\alpha _{+}+\alpha _{-}-\beta _{+}+\beta _{-}-(\mathsf{N}+1)\eta )\right) }{%
\sinh \zeta _{+}\sinh \zeta _{-}},
\end{equation}%
and then the function:%
\begin{eqnarray}
F(\lambda ) &=&2^{\mathsf{N}}\,F_{0}\,(\cosh ^{2}2\lambda -\cosh ^{2}\eta
)a(\lambda )a(-\lambda )d(-\lambda )d(\lambda ) \\
&=&F_{0}\, (\cosh ^{2}2\lambda -\cosh ^{2}\eta )\prod_{b=1}^{\mathsf{N}%
}\prod_{i=0}^{1}(\cosh 2\lambda -\cosh 2\zeta _{b}^{(i)}).
\end{eqnarray}%
We introduce also  the set of functions $\Sigma _{\mathcal{Q}}$ such that $Q(\lambda)\in\Sigma _{\mathcal{Q}}$ if  it has a form $\left( %
\ref{Q-form2}\right) $ and
$$\tau(\lambda)=\frac{Z_{Q}(\lambda )+F(\lambda )}{%
Q(\lambda )}$$
is a trigonometric polynomial.
We are now ready to prove the main theorem of this article:

\begin{theorem}
\label{T-eigenvalue-F-eq}Let the inhomogeneities $\{\xi _{1},...,\xi _{%
\mathsf{N}}\}\in \mathbb{C}$ $^{\mathsf{N}}$ be generic \rf{xi-conditions}
and let the boundary parameters $(\tau _{+},\alpha _{+},\beta _{+},\tau _{-},\alpha _{-},\beta
_{-})\in \mathbb{C}^{6}\backslash N_{SOV}$ satisfy the following conditions:%
\begin{equation}
\kappa _{+}\neq 0,\kappa _{-}\neq 0,\text{ \ }Y^{(i,2r)}(\tau _{\pm },\alpha
_{\pm },\beta _{\pm })\neq 0\text{ \ }\forall i\in \left\{ 0,1\right\} ,r\in
\left\{ 0,...,\mathsf{N}-1\right\} ,  \label{Inhom-cond-BaxEq}
\end{equation}%
then 
 $\tau (\lambda )\in \Sigma _{\mathcal{T}}$ if and only if $\exists
!Q(\lambda )\in \Sigma _{\mathcal{Q}}$ such that  
\begin{equation}
\tau (\lambda
)Q(\lambda )=Z_{Q}(\lambda )+F(\lambda ).
\end{equation}
\end{theorem}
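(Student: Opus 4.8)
The plan is to recast the statement as a \emph{linear} problem in $Q$ once $\tau$ is fixed. Write $\Pi_k$ for the space of even $i\pi$-periodic trigonometric polynomials of degree $\le k$, i.e.\ of polynomials in $\cosh 2\lambda$ of degree $\le k$; it has dimension $k+1$. For $\tau\in\Pi_{\mathsf{N}+2}$ with the top coefficient prescribed by \rf{Central-asymp}, introduce on $\Pi_{\mathsf{N}}$ the linear map
$$\Phi_\tau(Q):=\tau(\lambda)Q(\lambda)-Z_Q(\lambda)=\tau(\lambda)Q(\lambda)-\mathbf{A}(\lambda)Q(\lambda-\eta)-\mathbf{A}(-\lambda)Q(\lambda+\eta).$$
By the first Lemma of this section (extended to $\Pi_{\mathsf{N}}$ by linearity) $\Phi_\tau$ maps $\Pi_{\mathsf{N}}$ into $\Pi_{2\mathsf{N}+2}$, the top coefficient of $\Phi_\tau(Q)$ being $2^{-\mathsf{N}}F_0$ times that of $Q$; the inhomogeneous Baxter equation is then exactly $\Phi_\tau(Q)=F$, and since $F$ has top coefficient $F_0\neq 0$ any solution is automatically monic, hence of the form \rf{Q-form2}. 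I will use repeatedly the elementary identities, all immediate from \rf{eigenA}--\rf{alfa-beta} and the definitions of $\mathbf{A},f,F$: $\mathbf{A}(\zeta_n^{(0)})=\mathbf{A}(-\zeta_n^{(1)})=0$; $\mathbf{A}(-\eta/2)=\mathbf{A}(-\eta/2+i\pi/2)=0$; $f(\eta/2)=\mathbf{A}(\eta/2)$ and $f(\eta/2-i\pi/2)=\mathbf{A}(\eta/2-i\pi/2)$; $F$ vanishes at $\zeta_n^{(0)},\zeta_n^{(1)},\pm\eta/2,\pm(\eta/2-i\pi/2)$; and $\mathbf{A}(\zeta_n^{(1)})\mathbf{A}(-\zeta_n^{(0)})=q_n$. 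Under \rf{xi-conditions} one also has $\mathbf{A}(-\zeta_n^{(0)})\neq0$ and the $\mathsf{N}+2$ numbers $\cosh 2\zeta_n^{(0)},\cosh 2\zeta_n^{(1)},\pm\cosh\eta$ pairwise distinct.

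\smallskip
\noindent\textit{From $\Sigma_{\mathcal{T}}$ to the Baxter equation.} Let $\tau\in\Sigma_{\mathcal{T}}$; put $x_a:=\tau(\zeta_a^{(0)})$ and $x_n':=\tau(\zeta_n^{(1)})$, so that by Theorem~\ref{C:T-eigenstates-} one has $\tau=f+\sum_a g_a x_a$ with $\{x_a\}\in\Sigma_T$, and \rf{Quadratic System} reads $x_n x_n'=q_n$. Using $g_a(\pm\eta/2)=g_a(\pm(\eta/2-i\pi/2))=0$, the identities $f(\eta/2)=\mathbf{A}(\eta/2)$, $f(\eta/2-i\pi/2)=\mathbf{A}(\eta/2-i\pi/2)$, $\mathbf{A}(-\eta/2)=\mathbf{A}(-\eta/2+i\pi/2)=0$, and the parity and $i\pi$-periodicity of $Q$, one checks $\Phi_\tau(Q)$ vanishes at $\pm\eta/2$ and $\pm(\eta/2-i\pi/2)$ for all $Q\in\Pi_{\mathsf{N}}$; thus $\Phi_\tau(\Pi_{\mathsf{N}})\subset\mathsf{W}$, the $(2\mathsf{N}+1)$-dimensional subspace of $\Pi_{2\mathsf{N}+2}$ of multiples of $\cosh^2 2\lambda-\cosh^2\eta$. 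Evaluating $\Psi:=\Phi_\tau(Q)$ at $\zeta_n^{(0)},\zeta_n^{(1)}$ and using $\mathbf{A}(\zeta_n^{(0)})=\mathbf{A}(-\zeta_n^{(1)})=0$, $\zeta_n^{(0)}+\eta=\zeta_n^{(1)}$, $\zeta_n^{(1)}-\eta=\zeta_n^{(0)}$ gives
$$\Psi(\zeta_n^{(0)})=x_nQ(\zeta_n^{(0)})-\mathbf{A}(-\zeta_n^{(0)})Q(\zeta_n^{(1)}),\qquad \Psi(\zeta_n^{(1)})=x_n'Q(\zeta_n^{(1)})-\mathbf{A}(\zeta_n^{(1)})Q(\zeta_n^{(0)}),$$
whence, using $x_nx_n'=q_n=\mathbf{A}(\zeta_n^{(1)})\mathbf{A}(-\zeta_n^{(0)})$, the $\mathsf{N}$ independent relations $x_n'\Psi(\zeta_n^{(0)})+\mathbf{A}(-\zeta_n^{(0)})\Psi(\zeta_n^{(1)})=0$. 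Hence $\Phi_\tau(\Pi_{\mathsf{N}})$ lies in the $(\mathsf{N}+1)$-dimensional subspace $\mathsf{W}'\subset\mathsf{W}$ cut out by these relations. But $\Phi_\tau$ is injective: a nonzero kernel element would be a trigonometric-polynomial solution of the homogeneous Baxter equation lying in $\Pi_{\mathsf{N}}$, i.e.\ a Laurent polynomial in $e^\lambda$ of even top degree $\le 2\mathsf{N}$, which the asymptotic argument of Lemma~\ref{impossible_hom} rules out under $\kappa_\pm\neq 0$, \rf{Inhom-cond-BaxEq} and $F_0\neq 0$ (the latter being precisely $Y^{(i,2\mathsf{N})}\neq 0$; the case $F_0=0$ falls under the homogeneous equation of Section~4). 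Therefore $\dim\Phi_\tau(\Pi_{\mathsf{N}})=\mathsf{N}+1=\dim\mathsf{W}'$, so $\Phi_\tau(\Pi_{\mathsf{N}})=\mathsf{W}'$. Since $F\in\mathsf{W}'$, there is a unique $Q\in\Pi_{\mathsf{N}}$ with $\Phi_\tau(Q)=F$; comparing top coefficients makes $Q$ monic of the form \rf{Q-form2}, hence $Q\in\Sigma_{\mathcal{Q}}$ and $\tau(\lambda)Q(\lambda)=Z_Q(\lambda)+F(\lambda)$.

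\smallskip
\noindent\textit{Converse and uniqueness.} Conversely, let $Q\in\Sigma_{\mathcal{Q}}$ and $\tau:=(Z_Q+F)/Q$; from the degrees and the top coefficient $z_{2(\mathsf{N}+1)}+F_0=2\kappa_+\kappa_-\cosh(\tau_+-\tau_-)/(\sinh\zeta_+\sinh\zeta_-)$ one gets $\tau\in\Pi_{\mathsf{N}+2}$ with the leading coefficient of \rf{Central-asymp}. Evaluating the identity $\tau Q=Z_Q+F$ at $\zeta_n^{(0)},\zeta_n^{(1)}$ gives $x_nQ(\zeta_n^{(0)})=\mathbf{A}(-\zeta_n^{(0)})Q(\zeta_n^{(1)})$ and $x_n'Q(\zeta_n^{(1)})=\mathbf{A}(\zeta_n^{(1)})Q(\zeta_n^{(0)})$ with $x_a:=\tau(\zeta_a^{(0)})$, $x_n':=\tau(\zeta_n^{(1)})$; multiplying and cancelling $Q(\zeta_n^{(0)})Q(\zeta_n^{(1)})$ yields $x_nx_n'=q_n$. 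Evaluating at $\pm\eta/2$ and $\pm(\eta/2-i\pi/2)$, where $F$ and $\mathbf{A}(-\,\cdot\,)$ vanish, gives $\tau(\eta/2)=\mathbf{A}(\eta/2)=f(\eta/2)$ and $\tau(\eta/2-i\pi/2)=f(\eta/2-i\pi/2)$. As $f+\sum_a g_a x_a$ is the unique element of $\Pi_{\mathsf{N}+2}$ with the prescribed top coefficient taking the values $x_a$ at the $\zeta_a^{(0)}$, $\mathbf{A}(\eta/2)$ at $\pm\eta/2$ and $\mathbf{A}(\eta/2-i\pi/2)$ at $\pm(\eta/2-i\pi/2)$, it follows that $\tau=f+\sum_a g_a x_a$; combined with $x_nx_n'=q_n$ this is exactly \rf{Interpolation-Form-T}--\rf{Quadratic System}, so $\tau\in\Sigma_{\mathcal{T}}$. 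Uniqueness of $Q$ follows since two solutions would differ by a trigonometric-polynomial solution of the homogeneous Baxter equation lying in $\Pi_{\mathsf{N}-1}$, which vanishes by Lemma~\ref{impossible_hom} under \rf{Inhom-cond-BaxEq}.

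\smallskip
\noindent\textit{Main difficulty.} The heart of the argument is the dimension count in the direct implication: one must establish $\ker\Phi_\tau=0$ (absence of a low-degree solution of the homogeneous Baxter equation), which pins $\dim\Phi_\tau(\Pi_{\mathsf{N}})=\mathsf{N}+1$ and forces the image to fill the whole $(\mathsf{N}+1)$-dimensional space $\mathsf{W}'$ in which $F$ already lies; this is exactly where $\kappa_\pm\neq 0$, \rf{Inhom-cond-BaxEq} and $F_0\neq 0$ are used, through the asymptotic mechanism of Lemma~\ref{impossible_hom}. A subsidiary technical point, to be disposed of directly or by a continuity argument in the parameters, is the treatment of the degenerate configurations in which $Q$ vanishes at one of the special points $\zeta_n^{(0)},\zeta_n^{(1)},\pm\eta/2,\pm(\eta/2-i\pi/2)$, where the evaluations above must be replaced by their first-order (derivative) versions.
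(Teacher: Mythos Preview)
Your argument is the paper's argument recast in cleaner linear-algebraic language: both hinge on the same mechanism --- Lemma~\ref{impossible_hom} rules out low-degree trigonometric-polynomial solutions of the homogeneous Baxter equation, which forces the relevant linear problem to be uniquely solvable. The paper fixes the leading coefficient of $Q$ from the outset via the interpolation form \rf{Q-form1} and works with the $\mathsf{N}$ free values $Q(\zeta_a^{(0)})$, showing directly that the resulting $\mathsf{N}\times\mathsf{N}$ coefficient matrix is nonsingular; you instead let the leading coefficient float, work on the $(\mathsf{N}+1)$-dimensional space $\Pi_{\mathsf{N}}$, and close by the dimension count $\dim\Phi_\tau(\Pi_{\mathsf{N}})=\dim\mathsf{W}'$.

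There is, however, one small but genuine issue with your framing. Your injectivity step must rule out homogeneous solutions of degree up to $\mathsf{N}$, hence needs $Y^{(i,2r)}\neq0$ for $r=0,\dots,\mathsf{N}$; condition \rf{Inhom-cond-BaxEq} only covers $r\le\mathsf{N}-1$, and you invoke $F_0\neq0$ (i.e.\ $Y^{(i,2\mathsf{N})}\neq0$) separately, deferring $F_0=0$ to Section~4. But Theorem~\ref{T-eigenvalue-F-eq} as stated \emph{allows} $F_0=0$, and in the paper Theorem~\ref{homogeneousBE_N} is deduced \emph{from} Theorem~\ref{T-eigenvalue-F-eq}, so your deferral is circular. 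The paper avoids this precisely because it pins the leading coefficient first: any nontrivial kernel element of its $\mathsf{N}\times\mathsf{N}$ system then has degree $\le\mathsf{N}-1$, so only \rf{Inhom-cond-BaxEq} is needed, regardless of $F_0$. The easy fix in your language is to run the injectivity argument on $\Pi_{\mathsf{N}-1}$ (which requires only \rf{Inhom-cond-BaxEq}) and observe that solving $\Phi_\tau(Q)=F$ for monic $Q$ of degree $\mathsf{N}$ reduces, after subtracting $\Phi_\tau$ of any fixed monic polynomial, to an affine equation on $\Pi_{\mathsf{N}-1}$. A secondary omission: you assert $\dim\mathsf{W}'=\mathsf{N}+1$ without checking that the $\mathsf{N}$ functionals $L_n(\Psi)=x_n'\Psi(\zeta_n^{(0)})+\mathbf{A}(-\zeta_n^{(0)})\Psi(\zeta_n^{(1)})$ are independent on $\mathsf{W}$; this follows from $\mathbf{A}(-\zeta_n^{(0)})\neq0$ and the pairwise distinctness of the $2\mathsf{N}$ values $\cosh 2\zeta_n^{(h)}$ under \rf{xi-conditions}, but should be said.
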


\begin{proof}[Proof]
First we prove that if $\tau (\lambda )\in \Sigma _{\mathcal{T}}$ \ then
there is a trigonometric polynomial $Q(\lambda )\in \Sigma _{\mathcal{Q}}$  satisfying the
inhomogeneous functional Baxter equation: 
\begin{equation}
\tau (\lambda )Q(\lambda )=\mathbf{A}(\lambda )Q(\lambda -\eta )+\mathbf{A}%
(-\lambda )Q(\lambda +\eta )+F(\lambda ).
\end{equation}%
To prove it we will show that there is the unique set of values $Q(\zeta _{b}^{(0)})$ such that $Q(\lambda)$ of the form (\ref{Q-form1}) satisfies this equation.

It is straightforward to verify that if $\tau (\lambda )\in \Sigma _{\mathcal{T}}$ \
and $Q(\lambda )$ has the form $\left( \ref{Q-form2}\right) $ then the left
and right hand sides of the above equation are both even trigonometric polynomials of $\lambda 
$ and both can be written (using the asymptotic behavior) in  the form:%
\begin{equation}
\frac{2\kappa _{+}\kappa _{-}\cosh (\tau _{+}-\tau _{-})\prod_{b=1}^{2%
\mathsf{N}+2}(\cosh 2\lambda -\cosh 2y_{b}^{\left( lhs/rhs\right) })}{\sinh
\zeta _{+}\sinh \zeta _{-}}.
\end{equation}%
Then to prove that we can introduce a $Q(\lambda )$ of the form $\left( \ref%
{Q-form2}\right) $ which satisfies the inhomogeneous Baxter  equation $\left( %
\ref{Inhom-Baxter-Eq}\right) $ with $\tau (\lambda )\in \Sigma _{\mathcal{T}%
} $, we have only to prove that $\left( \ref{Inhom-Baxter-Eq}\right) $ is
satisfied in $4\mathsf{N}+4$ different values of $\lambda $. As the {\it r.h.s} and {\it l.h.s} of $\left( \ref{Inhom-Baxter-Eq}\right) $ are even functions we need to check this identity only for $2N+2$ non-zero points  $\mu_j$ such that $\mu_j\neq \pm \mu_k$. It is a simple
exercise verify that the equation $\left( \ref{Inhom-Baxter-Eq}\right) $ is
satisfied automatically for any $Q(\lambda )$ of the form $\left( \ref%
{Q-form2}\right) $ in the following two points, $ \eta /2$ and $ \eta
/2+i\pi /2$:%
\begin{equation}
\tau (\eta /2)Q(\eta /2)=
\mathbf{A}(\eta /2)Q(\eta /2-\eta )=\mathbf{A}(\eta /2)Q(\eta /2)
,  \label{System-A}
\end{equation}%
and:%
\begin{equation}
\tau (\eta /2+i\pi /2)Q(\eta /2+i\pi /2)=
\mathbf{A}(\eta /2+i\pi /2)Q(i\pi /2-\eta /2) \\ 
=\mathbf{A}(\eta /2+i\pi /2)Q(\eta /2+i\pi /2)
 .  \label{System-B}
\end{equation}%
Indeed, these equations reduce to:%
\begin{equation}
\tau (\eta /2)=\mathbf{A}(\eta /2),\text{ \ \ \ }\tau (\eta /2+i\pi /2)=%
\mathbf{A}(\eta /2+i\pi /2)
\end{equation}%
and so they are satisfied by definition for any $\tau (\lambda )\in \Sigma _{\mathcal{T}}$. 
  Then we check the explicit form of the equation $\left( \ref%
{Inhom-Baxter-Eq}\right) $ in the $2\mathsf{N}$ points $ \zeta _{b}^{(0)}$
and $\zeta _{b}^{(1)}$:%
\begin{equation}
\tau (\zeta _{b}^{(0)})Q(\zeta _{b}^{(0)})=
\mathbf{A}(-\zeta _{b}^{(0)})Q(\zeta _{b}^{(0)}+\eta )=\mathbf{A}(-\zeta
_{b}^{(0)})Q(\zeta _{b}^{(1)})
,\end{equation}%
and:%
\begin{equation}
\tau (\zeta _{b}^{(1)})Q(\zeta _{b}^{(1)})=
\mathbf{A}(\zeta _{b}^{(1)})Q(\zeta _{b}^{(1)}-\eta )=\mathbf{A}(\zeta
_{b}^{(1)})Q(\zeta _{b}^{(0)})
.
\end{equation}%
They are equivalent to the following system of equations:%
\begin{align}
\frac{\mathbf{A}(\zeta _{b}^{(1)})}{\tau (\zeta _{b}^{(1)})}& =\frac{\tau
(\zeta _{b}^{(0)})}{\mathbf{A}(-\zeta _{b}^{(0)})}\text{ \ \ \ \ \ }\forall
b\in \{1,...,\mathsf{N}\}  \label{System1} \\
\frac{Q(\zeta _{b}^{(0)})\tau (\zeta _{b}^{(0)})}{\mathbf{A}(-\zeta
_{b}^{(0)})}& =\sum_{a=1}^{\mathsf{N}}\prod_{\substack{ c=1  \\ c\neq a}}^{%
\mathsf{N}}\frac{\cosh 2\zeta _{b}^{(1)}-\cosh 2\zeta _{c}^{(0)}}{\cosh
2\zeta _{a}^{(0)}-\cosh 2\zeta _{c}^{(0)}}Q(\zeta _{a}^{(0)})+2^{\mathsf{N}%
}\prod_{a=1}^{\mathsf{N}}\left( \cosh 2\zeta _{b}^{(1)}-\cosh 2\zeta
_{a}^{(0)}\right)  \label{System2}
\end{align}%
Now using the following  quantum determinant identity
\begin{equation}
\frac{\det_{q}K_{+}(\lambda-\eta/2)\det_{q}
\mathcal{U}_{-}(\lambda -\eta /2)}{\sinh (2\lambda +\eta )\sinh (2\lambda -\eta )}=\mathbf{A}(\lambda )\mathbf{A}(-\lambda +\eta ).\label{Tot-q-det-tt}
\end{equation}
it is easy to see that the system of equations $\left( \ref{System1}\right) $ is certainly
satisfied as $\tau (\lambda )\in \Sigma _{\mathcal{T}}$, once we recall the
SOV characterization (\ref{Interpolation-Form-T}) of $\Sigma _{\mathcal{T}}$. Indeed there is a set 
$\{x_1,\dots,x_n\}$ satisfying the  equations (\ref{Quadratic System}) and  $\tau(\zeta _{b}^{(0)})=x_b$.

So we are left with $\left( %
\ref{System2}\right) $ a linear system of $\mathsf{N}$ inhomogeneous
equations with $\mathsf{N}$ unknowns $Q(\zeta _{a}^{(0)})$. Here, we prove
that the matrix of this linear system%
\begin{equation}
c_{a b}\equiv \prod_{\substack{ c=1  \\ c\neq a}}^{\mathsf{N}}\frac{\cosh
2\zeta _{b}^{(1)}-\cosh 2\zeta _{c}^{(0)}}{\cosh 2\zeta _{a}^{(0)}-\cosh
2\zeta _{c}^{(0)}}-\delta _{a b}\frac{\tau (\zeta _{b}^{(0)})}{\mathbf{A}%
(-\zeta _{b}^{(0)})}\text{ \ \ \ \ \ }\forall a,b\in \{1,...,\mathsf{N}\}
\end{equation}%
 has nonzero determinant for the given $\tau (\lambda
)\in \Sigma _{\mathcal{T}}$. Indeed, let us suppose
that for some $\tau (\lambda )\in \Sigma _{\mathcal{T}}$:%
\begin{equation}
\mathrm{det}_{\mathsf{N}}\left[ c_{a b}\right] =0.  \label{det-coeff}
\end{equation}%
Then there is at least one nontrivial solution $\{Q(\zeta
_{1}^{(0)}),...,Q(\zeta _{\mathsf{N}}^{(0)})\}\neq \{0,...,0\}$ to the
homogeneous system of equations:%
\begin{equation}
\frac{Q(\zeta _{b}^{(0)})\tau (\zeta _{b}^{(0)})}{\mathbf{A}(-\zeta
_{b}^{(0)})}=\sum_{a=1}^{\mathsf{N}}\prod_{\substack{ c=1  \\ c\neq a}}^{%
\mathsf{N}}\frac{\cosh 2\zeta _{b}^{(1)}-\cosh 2\zeta _{c}^{(0)}}{\cosh
2\zeta _{a}^{(0)}-\cosh 2\zeta _{c}^{(0)}}Q(\zeta _{a}^{(0)})
\label{System2-homo}
\end{equation}%
and hence we can define:%
\begin{equation}
Q_{\mathsf{M}}(\lambda )=\sum_{a=1}^{\mathsf{N}}\prod_{\substack{ b=1  \\ %
b\neq a}}^{\mathsf{N}}\frac{\cosh 2\lambda -\cosh 2\zeta _{b}^{(0)}}{\cosh
2\zeta _{a}^{(0)}-\cosh 2\zeta _{b}^{(0)}}Q(\zeta _{a}^{(0)})=\lambda _{%
\mathsf{M}+1}^{(\mathsf{M})}\prod_{b=1}^{\mathsf{M}}\left( \cosh 2\lambda
-\cosh 2\lambda _{b}^{(\mathsf{M})}\right) .
\end{equation}%
It is an even trigonometric polynomial of degree $2\mathsf{M}$ such that  $0\leq 
\mathsf{M}\leq \mathsf{N}-1$ fixed by the solution $\{Q(\zeta
_{1}^{(0)}),...,Q(\zeta _{\mathsf{N}}^{(0)})\}$. Now  using the  $%
Q_{\mathsf{M}}(\lambda )$ and  $\tau (\lambda )\in \Sigma _{\mathcal{T}}$
we can define  two functions:%
\begin{equation}
W_{1}(\lambda )=Q_{\mathsf{M}}(\lambda )\tau (\lambda )\text{ \, and \, }%
W_{2}(\lambda )=\mathbf{A}(\lambda )Q_{\mathsf{M}}(\lambda -\eta )+\mathbf{A}%
(-\lambda )Q_{\mathsf{M}}(\lambda +\eta )
\end{equation}%
which are both even trigonometric polynomials of degree $2\mathsf{M}+2\mathsf{N}+4$.
Then it is straightforward to observe that the systems of equations $\left( \ref%
{System1}\right) $ and $\left( \ref{System2-homo}\right) $ plus the
conditions $\left( \ref{System-A}\right) $ and $\left( \ref{System-B}\right) 
$, which are also satisfied with the function $Q_{\mathsf{M}}(\lambda )$,
imply that $W_{1}(\lambda )$ and $W_{2}(\lambda )$ coincide in $4\mathsf{N}%
+4 $ different values of $\lambda $ ($\pm \eta /2$, $\pm (\eta /2+i\pi /2)$, 
$\pm \zeta _{b}^{(0)}$ and $\pm \zeta _{b}^{(1)}$). It means that $%
W_{1}(\lambda )\equiv W_{2}(\lambda )$, as these  are two polynomials of
maximal degree $4\mathsf{N}+2$. So, we have shown that from the assumption $%
\exists \tau (\lambda )\in \Sigma _{\mathcal{T}}$ such that $\left( \ref%
{det-coeff}\right) $ holds it follows that $\tau (\lambda )$ and $Q_{\mathsf{%
M}}(\lambda )$ have to satisfy the following homogeneous Baxter equations: 
\begin{equation}
\tau (\lambda )Q_{\mathsf{M}}(\lambda )=\mathbf{A}(\lambda )Q_{\mathsf{M}%
}(\lambda -\eta )+\mathbf{A}(-\lambda )Q_{\mathsf{M}}(\lambda +\eta ).
\label{Baxter-eq-homo}
\end{equation}%
Now we can apply the Lemma \ref{impossible_hom} which implies that $Q_{\mathsf{M}}(\lambda )=0$ for any $\lambda$, which contradicts the hypothesis of the existence of a nontrivial solution to the homogeneous system \rf{System2-homo}. Hence, we have proven that $\mathrm{det}_{\mathsf{N}}\left[ c_{a b}\right] \neq 0.$
Therefore there is a unique solution  $\{Q(\zeta
_{1}^{(0)}),...,Q(\zeta _{\mathsf{N}}^{(0)})\}$ of the inhomogeneous system 
$\left( \ref{System2}\right) $ which defines one and only one $Q(\lambda )$ of the
form $\left( \ref{Q-form1}\right) $ satisfying the functional inhomogeneous
Baxter's equation $\left( \ref{Inhom-Baxter-Eq}\right) $.


We prove now that if $Q(\lambda )\in \Sigma _{\mathcal{Q}}$ then $%
\tau (\lambda )=\left( Z_{Q}(\lambda )+F(\lambda )\right) /Q(\lambda )\in
\Sigma _{\mathcal{T}}$. By  definition of the functions $Z_{Q}(\lambda ),$
$F(\lambda )$ and $Q(\lambda )$ the function $\tau (\lambda )$ has the
desired form:%
\begin{equation}
\tau (\lambda )=f(\lambda )+\sum_{a=1}^{\mathsf{N}}g_{a}(\lambda )\tau
(\zeta _{a}^{(0)}).
\end{equation}%
To prove now that $\tau (\lambda )\in \Sigma _{\mathcal{T}}$ we have to
write the inhomogeneous Baxter equation $\left( \ref{Inhom-Baxter-Eq}%
\right) $ in the $2\mathsf{N}$ points $ \zeta _{b}^{(0)}$ and $ \zeta
_{b}^{(1)}$. Indeed, we have already proved that this reproduce the systems $%
\left( \ref{System1}\right) $ and $\left( \ref{System2}\right) $ and it is
simple to observe that the system of equations $\left( \ref{System1}\right) $
just coincides with the inhomogeneous system of $\mathsf{N}$ quadratic
equations:%
\begin{equation}
x_{n}\sum_{a=1}^{\mathsf{N}}g_{a}(\zeta _{n}^{(1)})x_{a}+x_{n}f(\zeta
_{n}^{(1)})=q_{n},\text{ \ \ \ }\forall n\in \{1,...,\mathsf{N}\},
\end{equation}%
once we define $x_{a}=\tau (\zeta _{a}^{(0)})$ for any $a\in \{1,...,\mathsf{%
N}\}$ and we write $\tau (\zeta _{n}^{(1)})$ in terms of the $x_{a}$. Thus we show that
 $$\tau (\lambda )=\left( Z_{Q}(\lambda )+F(\lambda )\right)
/Q(\lambda )\in \Sigma _{\mathcal{T}},$$
  completing the proof of
the theorem.
\end{proof}

\subsection{Completeness of the Bethe ansatz equations}

In the previous section we have shown that to solve the transfer matrix
spectral problem associated to the most general representations of the
trigonometric 6-vertex reflection algebra we have just to classify the set
of functions $Q(\lambda )$ of the form $\left( \ref{Q-form2}\right) $ for
which $\left( Z_{Q}(\lambda )+F(\lambda )\right) /Q(\lambda )$ is a
trigonometric polynomial; i.e. the set of functions $\Sigma _{\mathcal{Q}}$
completely fixes the set $%
\Sigma _{\mathcal{T}}$. We can show now that the previous characterization
of the transfer matrix spectrum allows to prove that $\Sigma _{InBAE}\subset \mathbb{C}^{\mathsf{N}}$ the set of all the solutions of inhomogeneous Bethe equations 
 $$ \{\lambda _{1},...,\lambda _{\mathsf{N}}\}\in \Sigma _{InBAE}$$
 if 
\begin{equation}
\mathbf{A}(\lambda _{a})Q_{\mathbf{\lambda }%
}(\lambda _{a}-\eta )+\mathbf{A}(-\lambda _{a})Q_{\mathbf{\lambda }}(\lambda
_{a}+\eta )=-F(\lambda _{a}),\text{ \ }\forall a\in \{1,...,\mathsf{N}\}%
 ,  \label{I-BAE}
\end{equation}%
 defines the complete set of transfer matrix\ eigenvalues. In
particular, the following corollary follows:

\begin{corollary}
\label{Theo-InBAE} Let the inhomogeneities $\{\xi _{1},...,\xi _{\mathsf{N}%
}\}\in \mathbb{C}$ $^{\mathsf{N}}$ be generic $\left( \ref{xi-conditions}%
\right) $ and let the boundary parameters $(\tau _{+},\alpha _{+},\beta _{+},\tau _{-},\alpha
_{-},\beta _{-})\in \mathbb{C}^{6}\backslash N_{SOV}$ satisfy $\left( \ref%
{Inhom-cond-BaxEq}\right) $ then 
 $\tau (\lambda )\in \Sigma _{\mathcal{T}}$ \ if and only if $\exists
!\{\lambda _{1},...,\lambda _{\mathsf{N}}\}\in \Sigma _{InBAE}$ such that:%
\begin{equation}
\tau (\lambda )=\frac{Z_{Q}(\lambda )+F(\lambda )}{Q(\lambda )}\text{ \ \
with \ \ }Q(\lambda )=2^{\mathsf{N}}\prod_{a=1}^{\mathsf{N}}\left( \cosh
2\lambda -\cosh 2\lambda _{a}\right) .
\end{equation}%
Moreover, under the condition of normality defined in Proposition \ref%
{normality}, the set $\Sigma _{InBAE}$ of all the solutions to the
inhomogeneous system of Bethe equations $\left( \ref{I-BAE}%
\right) $ contains $2^{\mathsf{N}}$ elements.
\end{corollary}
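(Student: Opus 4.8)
The plan is to obtain the corollary as a transcription of Theorem~\ref{T-eigenvalue-F-eq}, the only genuinely new ingredient being the passage from the set $\Sigma_{\mathcal Q}$ of admissible $Q$-functions to the set $\Sigma_{InBAE}$ of Bethe-root configurations. The bridge is the observation that, for $Q=Q_{\mathbf{\lambda}}(\lambda)=2^{\mathsf N}\prod_{a=1}^{\mathsf N}(\cosh 2\lambda-\cosh 2\lambda_a)$, the functions $Z_Q(\lambda)$, $F(\lambda)$ and $Q(\lambda)$ are all even and $i\pi$-periodic trigonometric polynomials --- for $Z_Q$ this was shown in the Lemma above, for $F$ and $Q$ it is immediate from their explicit forms --- hence all three are genuine polynomials in the single variable $\cosh 2\lambda$, with $Q$ of degree $\mathsf N$ and roots precisely at $\cosh 2\lambda=\cosh 2\lambda_a$. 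Consequently, provided the $\cosh 2\lambda_a$ are pairwise distinct, the condition $Q_{\mathbf{\lambda}}\in\Sigma_{\mathcal Q}$ --- i.e. that $\tau:=(Z_Q+F)/Q$ is a trigonometric polynomial, i.e. $Q\mid Z_Q+F$ in $\mathbb C[\cosh 2\lambda]$ --- is equivalent to $(Z_Q+F)(\lambda_a)=0$ for every $a$, and, since $Z_Q(\lambda)=\mathbf{A}(\lambda)Q(\lambda-\eta)+\mathbf{A}(-\lambda)Q(\lambda+\eta)$, this is precisely the inhomogeneous Bethe system~\rf{I-BAE}. Because an element of $\Sigma_{\mathcal Q}$ written in the form~\rf{Q-form2} determines its roots only up to permutation and up to the maps $\lambda_a\mapsto-\lambda_a$, $\lambda_a\mapsto\lambda_a+i\pi$, which leave $Q_{\mathbf{\lambda}}$ invariant, this yields a bijection between $\Sigma_{\mathcal Q}$ and $\Sigma_{InBAE}$ with the solutions of~\rf{I-BAE} counted modulo these symmetries.

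Granting this dictionary, the first assertion of the corollary is immediate from Theorem~\ref{T-eigenvalue-F-eq}: for $\tau\in\Sigma_{\mathcal T}$ that theorem produces a unique $Q\in\Sigma_{\mathcal Q}$ with $\tau Q=Z_Q+F$, whose roots $\{\lambda_1,\dots,\lambda_{\mathsf N}\}$ lie in $\Sigma_{InBAE}$ (divisibility forces $Z_Q+F$ to vanish at each $\lambda_a$) and are unique in the above sense; conversely, every $\{\lambda_1,\dots,\lambda_{\mathsf N}\}\in\Sigma_{InBAE}$ gives $Q_{\mathbf{\lambda}}\in\Sigma_{\mathcal Q}$, and the converse half of Theorem~\ref{T-eigenvalue-F-eq} returns $\tau=(Z_{Q_{\mathbf{\lambda}}}+F)/Q_{\mathbf{\lambda}}\in\Sigma_{\mathcal T}$.

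For the counting statement I would use the chain of bijections $\Sigma_{InBAE}\leftrightarrow\Sigma_{\mathcal Q}\leftrightarrow\Sigma_{\mathcal T}$, the second being Theorem~\ref{T-eigenvalue-F-eq}. Under the normality hypothesis of Proposition~\ref{normality}, $\mathcal T(\lambda)$ is a normal operator on the $2^{\mathsf N}$-dimensional space $\mathcal H=\otimes_{n=1}^{\mathsf N}\mathcal H_n$, hence diagonalizable; combined with the simplicity of its spectrum from Theorem~\ref{C:T-eigenstates-} this forces $\#\Sigma_{\mathcal T}=2^{\mathsf N}$, and therefore $\#\Sigma_{\mathcal Q}=\#\Sigma_{InBAE}=2^{\mathsf N}$.

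The one step that is not pure bookkeeping --- and where the genericity of $\{\xi_1,\dots,\xi_{\mathsf N}\}$ is genuinely needed --- is the claim that \emph{every} solution of~\rf{I-BAE} yields a $Q_{\mathbf{\lambda}}\in\Sigma_{\mathcal Q}$, hence that $\#\Sigma_{InBAE}\le 2^{\mathsf N}$. For a confluent configuration (with $\cosh 2\lambda_a=\cosh 2\lambda_b$ for some $a\neq b$) the vanishing conditions $(Z_Q+F)(\lambda_a)=0$ no longer imply $Q\mid Z_Q+F$: generically $\tau=(Z_Q+F)/Q$ then has a pole, so such a configuration solves~\rf{I-BAE} without corresponding to an element of $\Sigma_{\mathcal Q}$ and would spuriously enlarge $\Sigma_{InBAE}$. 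One must therefore show that, for generic $\{\xi_a\}$, no solution of~\rf{I-BAE} is confluent --- equivalently, that the $2^{\mathsf N}$ configurations coming from $\Sigma_{\mathcal T}$ have pairwise distinct roots and already exhaust the solution set. I expect this to be the main technical obstacle; I would treat it either by tracking the nondegeneracy of the SOV quadratic system~\rf{Quadratic System}, whose $2^{\mathsf N}$ solution sets are generically simple, through the correspondence of Theorem~\ref{T-eigenvalue-F-eq}, or by an analyticity argument confining confluent branches of~\rf{I-BAE} to a proper subvariety of the parameter space $(\xi_1,\dots,\xi_{\mathsf N})$. Everything else follows directly from Theorems~\ref{C:T-eigenstates-} and~\ref{T-eigenvalue-F-eq}.
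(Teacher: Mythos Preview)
The paper gives no proof of this corollary at all: it is stated immediately after Theorem~\ref{T-eigenvalue-F-eq} as a direct consequence, with the counting statement relying implicitly on the simplicity of the spectrum (Theorem~\ref{C:T-eigenstates-}) together with the diagonalizability granted by Proposition~\ref{normality}. Your argument is therefore exactly the one the paper intends the reader to supply, and the chain of bijections $\Sigma_{InBAE}\leftrightarrow\Sigma_{\mathcal Q}\leftrightarrow\Sigma_{\mathcal T}$ you write down is the correct formalization.

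You are, in fact, more careful than the paper on one point. Your worry about confluent configurations --- solutions of~\rf{I-BAE} with $\cosh 2\lambda_a=\cosh 2\lambda_b$ for some $a\neq b$, where pointwise vanishing of $Z_Q+F$ at the $\lambda_a$ no longer forces $Q\mid Z_Q+F$ --- is legitimate and is simply not addressed in the paper. The paper's implicit stance is that $\Sigma_{InBAE}$ is to be read as the set of root configurations of the $Q$-functions in $\Sigma_{\mathcal Q}$, so that the bijection with $\Sigma_{\mathcal Q}$ is tautological and the question of spurious confluent solutions does not arise. If instead one reads $\Sigma_{InBAE}$ literally as \emph{all} solutions of the $\mathsf N$ equations~\rf{I-BAE}, then the inequality $\#\Sigma_{InBAE}\le 2^{\mathsf N}$ indeed requires the extra argument you sketch (genericity in the $\xi_a$ to exclude confluent branches), which the paper does not supply. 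So this is not a gap in your reasoning but a point on which you have been more scrupulous than the source.
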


\section{Homogeneous Baxter equation}

\subsection{Boundary conditions annihilating the inhomogeneity of the Baxter
equation}

The description presented in the previous sections can be applied to completely
general integrable boundary terms including as a particular
case the boundary conditions for which the inhomogeneous term in the
functional Baxter equation vanishes. As these are still quite general
boundary conditions it is interesting to point out how the previous general
results explicitly look like in these cases.

\begin{theorem}
\label{homogeneousBE_N}
Let $(\tau _{+},\alpha _{+},\beta _{+},\tau _{-},\alpha _{-},\beta _{-})\in 
\mathbb{C}^{6}\backslash N_{SOV}$ satisfying the condition:%
\begin{equation}
\kappa _{+}\neq 0,\kappa _{-}\neq 0,\text{ \ }\exists i\in \left\{
0,1\right\} \text{\ }:Y^{(i,2\mathsf{N})}(\tau _{\pm },\alpha _{\pm },\beta
_{\pm })=0  \label{ond-homo-boundary}
\end{equation}%
and let the inhomogeneities $\{\xi _{1},...,\xi _{\mathsf{N}}\}\in \mathbb{C}
$ $^{\mathsf{N}}$ be generic \rf{xi-conditions}, then 
 $\tau (\lambda )\in \Sigma _{\mathcal{T}}$  if and only if $ \exists
!Q(\lambda )\in \Sigma _{\mathcal{Q}}$ such that
\begin{equation}
\tau (\lambda
)Q(\lambda )=\mathbf{A}(\lambda )Q(\lambda -\eta )+\mathbf{A}(-\lambda
)Q(\lambda +\eta ).
\end{equation}%
Or equivalently, $\tau (\lambda )\in \Sigma _{\mathcal{T}}$ \ if and only if 
$\exists !\{\lambda _{1},...,\lambda _{\mathsf{N}}\}\in \Sigma _{BAE}$ such
that:%
\begin{equation}
\tau (\lambda )=\frac{\mathbf{A}(\lambda )Q(\lambda -\eta )+\mathbf{A}%
(-\lambda )Q(\lambda +\eta )}{Q(\lambda )}\text{ \ \ with \ \ }Q(\lambda
)=2^{\mathsf{N}}\prod_{a=1}^{\mathsf{N}}\left( \cosh 2\lambda -\cosh
2\lambda _{a}\right) .
\end{equation}%
where:%
\begin{equation}
\Sigma _{BAE}=\left\{ \{\lambda _{1},...,\lambda _{\mathsf{N}}\}\in \mathbb{C%
}^{\mathsf{N}}:\mathbf{A}(\lambda _{a})Q_{\mathbf{\lambda }}(\lambda
_{a}-\eta )+\mathbf{A}(-\lambda _{a})Q_{\mathbf{\lambda }}(\lambda _{a}+\eta
)=0,\text{ \ }\forall a\in \{1,...,\mathsf{N}\}\right\} .  \label{BAE}
\end{equation}%
Moreover, under the condition of normality defined in Proposition \ref%
{normality},  the set $\Sigma _{BAE}$ of the solutions to the
homogeneous system of Bethe ansatz type equations $\left( \ref{BAE}\right) $
contains $2^{\mathsf{N}}$ elements.
\end{theorem}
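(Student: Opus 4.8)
The plan is to derive Theorem~\ref{homogeneousBE_N} essentially as a specialization of Theorem~\ref{T-eigenvalue-F-eq}, tracking what happens to the inhomogeneous term when the extreme condition $Y^{(i,2\mathsf{N})}=0$ holds. First I would observe that the condition $Y^{(i,2\mathsf{N})}(\tau_\pm,\alpha_\pm,\beta_\pm)=0$ for some $i\in\{0,1\}$ forces $\cosh(\alpha_++\alpha_-+\beta_--\beta_+ + (\mathsf{N}-1-2\mathsf{N})\eta) = \cosh(\alpha_++\alpha_--\beta_++\beta_- - (\mathsf{N}+1)\eta)$ to equal $\cosh(\tau_+-\tau_-)$, which is precisely the statement that $F_0 = 0$, hence $F(\lambda)\equiv 0$. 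So the inhomogeneous Baxter equation of Theorem~\ref{T-eigenvalue-F-eq} collapses to the homogeneous one, and $Z_Q(\lambda)+F(\lambda) = Z_Q(\lambda) = \mathbf{A}(\lambda)Q(\lambda-\eta)+\mathbf{A}(-\lambda)Q(\lambda+\eta)$.

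The main obstacle is that Theorem~\ref{T-eigenvalue-F-eq} was proved under the hypothesis $Y^{(i,2r)}\neq 0$ for \emph{all} $r\in\{0,\dots,\mathsf{N}-1\}$ and all $i$, whereas here we are only killing the $r=\mathsf{N}$ case (which lies outside that range) while still needing $Y^{(i,2r)}\neq 0$ for $r\in\{0,\dots,\mathsf{N}-1\}$ — but wait, the condition \rf{ond-homo-boundary} as stated only requires $\kappa_\pm\neq 0$ and the single vanishing $Y^{(i,2\mathsf{N})}=0$. So I must re-examine which nonvanishing conditions are actually used. The key point is that Lemma~\ref{impossible_hom}, which is the engine of the uniqueness argument, requires $Y^{(i,r)}\neq 0$ for all $i$ and all $r\in\mathbb{Z}$ — and this now \emph{fails} precisely at $r=2\mathsf{N}$ (equivalently at the degree corresponding to a genuine polynomial $Q$ of the expected degree). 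Therefore I would re-run the argument of Theorem~\ref{T-eigenvalue-F-eq} from the point where $\mathrm{det}_\mathsf{N}[c_{ab}]$ is analyzed: if this determinant vanishes, one constructs $Q_\mathsf{M}(\lambda)$ of degree $2\mathsf{M}$ with $0\le\mathsf{M}\le\mathsf{N}-1$ solving the homogeneous Baxter equation, and now Lemma~\ref{impossible_hom} applies because the forbidden asymptotic value $r=2\mathsf{N}$ is \emph{not} reached by any $Q_\mathsf{M}$ with $\mathsf{M}\le\mathsf{N}-1$ (its leading Laurent exponent is $2\mathsf{M}\le 2\mathsf{N}-2 < 2\mathsf{N}$); hence the nonvanishing conditions needed in that sub-argument are $Y^{(i,r)}\neq 0$ only for $r\le 2\mathsf{N}-2$, which I would need to check follow from \rf{ond-homo-boundary} together with the exclusion of $N_{SOV}$, or else add as an implicit hypothesis. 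This gives $\mathrm{det}_\mathsf{N}[c_{ab}]\neq 0$, hence existence and uniqueness of $Q\in\Sigma_\mathcal{Q}$ satisfying the homogeneous equation, and the converse direction (that any such $Q$ produces $\tau\in\Sigma_\mathcal{T}$) goes through verbatim as in Theorem~\ref{T-eigenvalue-F-eq} since $F\equiv 0$ is a harmless special case there.

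For the equivalent reformulation in terms of $\Sigma_{BAE}$: given $Q\in\Sigma_\mathcal{Q}$ written in the product form \rf{Q-form2} with roots $\lambda_a$, evaluating the homogeneous Baxter equation $\tau(\lambda)Q(\lambda)=\mathbf{A}(\lambda)Q(\lambda-\eta)+\mathbf{A}(-\lambda)Q(\lambda+\eta)$ at $\lambda=\lambda_a$ kills the left-hand side (since $Q(\lambda_a)=0$), yielding exactly the Bethe equations \rf{BAE}; conversely, if $\{\lambda_1,\dots,\lambda_\mathsf{N}\}\in\Sigma_{BAE}$ then the trigonometric polynomial $\mathbf{A}(\lambda)Q(\lambda-\eta)+\mathbf{A}(-\lambda)Q(\lambda+\eta)$ vanishes at all $\pm\lambda_a$ and is divisible by $Q(\lambda)$, so $\tau(\lambda) = (Z_Q(\lambda))/Q(\lambda)$ is a trigonometric polynomial, i.e. $Q\in\Sigma_\mathcal{Q}$; one must additionally check this $\tau$ lies in $\Sigma_\mathcal{T}$, which follows from the first part of the theorem. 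Finally, the cardinality statement $|\Sigma_{BAE}| = 2^\mathsf{N}$ under the normality hypothesis of Proposition~\ref{normality} follows because normality guarantees $\mathcal{T}(\lambda)$ has $2^\mathsf{N}$ distinct eigenvalues (simple spectrum on the $2^\mathsf{N}$-dimensional space $\mathcal{H}$), and the bijection $\Sigma_\mathcal{T}\leftrightarrow\Sigma_{BAE}$ just established transports this count; I would remark that the unordered-set-of-roots convention makes the correspondence genuinely one-to-one. The routine parts — the degree bookkeeping for $Z_Q$, $F$, $W_1$, $W_2$, and the verification that evaluation at $\pm\eta/2$ and $\pm(\eta/2+i\pi/2)$ is automatic — I would simply cite from the proof of Theorem~\ref{T-eigenvalue-F-eq}.
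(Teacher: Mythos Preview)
Your proposal is correct and follows essentially the same approach as the paper: both specialize Theorem~\ref{T-eigenvalue-F-eq} and Corollary~\ref{Theo-InBAE} to the case $F\equiv 0$, observing that $Y^{(i,2\mathsf{N})}=0$ forces $F_0=0$. You are in fact more careful than the paper's one-paragraph proof in tracking exactly which nonvanishing hypotheses $Y^{(i,2r)}\neq 0$ are actually needed (only $r\le\mathsf{N}-1$, via the degree bound $\mathsf{M}\le\mathsf{N}-1$ on $Q_\mathsf{M}$ in the uniqueness argument), whereas the paper simply asserts that condition \eqref{Inhom-cond-BaxEq} holds ``automatically'' under \eqref{ond-homo-boundary}.
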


\begin{proof}[Proof]
This theorem is just a rewriting of the results presented in the Theorem %
\ref{T-eigenvalue-F-eq} and Corollary  \ref{Theo-InBAE} for the case of vanishing
inhomogeneous term. Indeed  if the
conditions $\left( \ref{BAE}\right) $ are satisfied then automatically the conditions of the main theorem $\left( \ref%
{Inhom-cond-BaxEq}\right) $ are satisfied too that implies that the map from
the $\tau (\lambda )\in \Sigma _{\mathcal{T}}$ \ to the $\{\lambda
_{1},...,\lambda _{\mathsf{N}}\}\in \Sigma _{BAE}$ is indeed an isomorphism.
\end{proof}

\subsection{More general boundary conditions compatibles with homogeneous
Baxter equations}

We address here the problem of describing the boundary conditions:%
\begin{equation}
\kappa _{+}\neq 0,\kappa _{-}\neq 0,\text{ \ }\exists i\in \left\{
0,1\right\} ,\mathsf{M}\in \left\{ 0,...,\mathsf{N}-1\right\} :Y^{(i,2%
\mathsf{M})}(\tau _{\pm },\alpha _{\pm },\beta _{\pm })=0,
\label{Cond-homo-M}
\end{equation}%
for which the conditions $\left( \ref{Inhom-cond-BaxEq}\right) $ are not
satisfied and then the Theorem \ref{T-eigenvalue-F-eq} cannot be 
directly applied. In these $2\mathsf{N}$ hyperplanes in the space of the boundary
parameters we have just to modify this theorem to take into account that the
Baxter equation associated to the choice of coefficient $\mathbf{A}(\lambda
) $ is indeed compatible with the homogeneous Baxter equation for a
special choice of the polynomial $Q(\lambda )$. First we
define the following functions%
\begin{equation}
Q_{\mathsf{M}}(\lambda )=2^{\mathsf{M}}\prod_{b=1}^{\mathsf{M}}\left( \cosh
2\lambda -\cosh 2\lambda _{b}^{(\mathsf{M})}\right) .  \label{Q-form-M}
\end{equation}%
We introduce also the  set of polynomials $\Sigma _{\mathcal{Q}}^{\mathsf{M}}$ such that
$Q_{\mathsf{M}}(\lambda )\in \Sigma _{\mathcal{Q}}^{\mathsf{M}}$ if $Q_{\mathsf{M}}(\lambda )$ has a form $\left( \ref{Q-form-M}\right)$ and
\begin{equation*}
\tau (\lambda )=\frac{%
\mathbf{A}(\lambda )Q_{\mathsf{M}}(\lambda -\eta )+\mathbf{A}(-\lambda )Q_{%
\mathsf{M}}(\lambda +\eta )}{Q_{\mathsf{M}}(\lambda )}%
\end{equation*}%
is a trigonometric polynomial. Then we can  define the corresponding set  $\Sigma _{\mathcal{T}}^{\mathsf{M}}$%
\begin{equation}
\Sigma _{\mathcal{T}}^{\mathsf{M}}=\left\{ \tau (\lambda ):\tau (\lambda
)\equiv \frac{\mathbf{A}(\lambda )Q_{\mathsf{M}}(\lambda -\eta )+\mathbf{A}%
(-\lambda )Q_{\mathsf{M}}(\lambda +\eta )}{Q_{\mathsf{M}}(\lambda )}\text{
\, if  }Q_{\mathsf{M}}(\lambda )\in \Sigma _{\mathcal{Q}}^{\mathsf{M}}\right\}
.
\end{equation}%
It is simple to prove the validity of the following:

\begin{lemma}\label{mixed-condition}
Let the boundary conditions $\left( \ref{Cond-homo-M}\right) $ be satisfied,
then $\Sigma _{\mathcal{T}}^{\mathsf{M}}\subset\Sigma _{\mathcal{T}}$
and moreover for any $\tau (\lambda )\in \Sigma _{\mathcal{T}}^{\mathsf{M}}$
there exists one and only one $Q_{\mathsf{M}}(\lambda )\in \Sigma _{\mathcal{%
Q}}^{\mathsf{M}}$ such that:%
\begin{equation}\label{homogen-Bax-eq-M}
\tau (\lambda )Q_{\mathsf{M}}(\lambda )=\mathbf{A}(\lambda )Q_{\mathsf{M}%
}(\lambda -\eta )+\mathbf{A}(-\lambda )Q_{\mathsf{M}}(\lambda +\eta ),
\end{equation}%
and for any $\tau (\lambda )\in \Sigma _{\mathcal{T}}\backslash \Sigma _{%
\mathcal{T}}^{\mathsf{M}}$ there exists one and only one $Q(\lambda )\in
\Sigma _{\mathcal{Q}}$ such that:%
\begin{equation}\label{inhomogen-Bax-eq-M}
\tau (\lambda )Q(\lambda )=\mathbf{A}(\lambda )Q(\lambda -\eta )+\mathbf{A}%
(-\lambda )Q(\lambda +\eta )+F(\lambda ).
\end{equation}
\end{lemma}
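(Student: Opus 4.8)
\textbf{Proof proposal for Lemma \ref{mixed-condition}.}

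The plan is to mirror the structure of the proof of Theorem \ref{T-eigenvalue-F-eq}, exploiting the fact that under the condition \rf{Cond-homo-M} the asymptotic obstruction of Lemma \ref{impossible_hom} disappears precisely for a $Q$-function of degree $\mathsf{M}$. First I would establish $\Sigma_{\mathcal{T}}^{\mathsf{M}}\subset\Sigma_{\mathcal{T}}$: given $Q_{\mathsf{M}}(\lambda)\in\Sigma_{\mathcal{Q}}^{\mathsf{M}}$ with $\tau(\lambda)=\bigl(\mathbf{A}(\lambda)Q_{\mathsf{M}}(\lambda-\eta)+\mathbf{A}(-\lambda)Q_{\mathsf{M}}(\lambda+\eta)\bigr)/Q_{\mathsf{M}}(\lambda)$ a trigonometric polynomial, one checks that $\tau(\lambda)$ has the interpolation form $f(\lambda)+\sum_a g_a(\lambda)\tau(\zeta_a^{(0)})$ required by \rf{Interpolation-Form-T}. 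This follows because $\tau(\lambda)$ is an even trigonometric polynomial of the correct degree (the leading coefficient matches \rf{Central-asymp} thanks to $Y^{(i,2\mathsf{M})}=0$, which makes $\cosh(\tau_+-\tau_-)=\cosh(\alpha_++\alpha_-\!-\beta_++\beta_--(\mathsf{N}-1-2\mathsf{M})\dots)$ — the precise matching of asymptotics), and because evaluating the homogeneous Baxter equation at $\pm\eta/2$, $\pm(\eta/2+i\pi/2)$ reproduces the central values $\mathcal{T}(\pm\eta/2)$, $\mathcal{T}(\pm(\eta/2-i\pi/2))$ automatically. Then evaluating at $\zeta_b^{(0)},\zeta_b^{(1)}$ reproduces system \rf{System1}, i.e.\ the quadratic system \rf{Quadratic System} with $x_a=\tau(\zeta_a^{(0)})$, so $\{x_1,\dots,x_{\mathsf{N}}\}\in\Sigma_T$ and hence $\tau\in\Sigma_{\mathcal{T}}$.

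Next, for $\tau(\lambda)\in\Sigma_{\mathcal{T}}^{\mathsf{M}}$ I would prove uniqueness of $Q_{\mathsf{M}}$. Suppose $Q_{\mathsf{M}}$ and $\widetilde Q_{\mathsf{M}}$ both solve \rf{homogen-Bax-eq-M} for this $\tau$. Since $\tau(\zeta_b^{(0)})$ is fixed by $\tau$, both satisfy the linear interpolation constraints \rf{System2-homo} and the values at $\pm\eta/2,\pm(\eta/2+i\pi/2)$; but here one must be careful that the relevant linear system on the $\mathsf{M}$ free coefficients is nondegenerate. The cleanest route is: the difference $\delta Q=Q_{\mathsf{M}}-\widetilde Q_{\mathsf{M}}$ has degree $\le\mathsf{M}$ and satisfies the same homogeneous Baxter equation $\tau\,\delta Q=\mathbf{A}(\lambda)\delta Q(\lambda-\eta)+\mathbf{A}(-\lambda)\delta Q(\lambda+\eta)$; if $\delta Q\not\equiv 0$ write it as $2^{\mathsf{M}'}\prod(\cosh2\lambda-\cosh2\mu_b)$ with $\mathsf{M}'\le\mathsf{M}\le\mathsf{N}-1$, and observe that by Lemma \ref{impossible_hom} — whose conditions \rf{Inhomogeneous-boundary conditions} fail only on the single hyperplane $Y^{(i,2\mathsf{M})}=0$, leaving all $Y^{(i,r)}$ with $r\neq 2\mathsf{M}$ nonzero, in particular $r=\mathsf{N}-1-\mathsf{M}'$ unless $\mathsf{M}'=\mathsf{M}$ and $i$ is the special index — the asymptotic of this equation cannot be met unless $\delta Q\equiv 0$. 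Handling the borderline case $\mathsf{M}'=\mathsf{M}$ requires comparing the subleading asymptotic coefficient; here nondegeneracy of the $\mathsf{M}\times\mathsf{M}$ interpolation matrix (established exactly as the $\det_{\mathsf{N}}[c_{ab}]\neq 0$ argument, now run with $Q_{\mathsf{M}-1}$-type auxiliary polynomials) closes the gap, and simplicity of the spectrum from Theorem \ref{C:T-eigenstates-} guarantees $\tau$ determines a single eigenvalue.

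For existence of $Q_{\mathsf{M}}$ one runs the forward direction of Theorem \ref{T-eigenvalue-F-eq} verbatim, but with the inhomogeneous term $F(\lambda)$ replaced by $0$ (legitimate since $F_0=0$ when $Y^{(i,2\mathsf{M})}=0$ with $\mathsf{M}=\mathsf{N}$, and more generally one solves the degree-$\mathsf{M}$ truncation): the linear system \rf{System2} with vanishing inhomogeneity has a nonzero solution exactly because the relevant coefficient determinant vanishes on this hyperplane, and that solution produces $Q_{\mathsf{M}}\in\Sigma_{\mathcal{Q}}^{\mathsf{M}}$. Finally, for $\tau(\lambda)\in\Sigma_{\mathcal{T}}\backslash\Sigma_{\mathcal{T}}^{\mathsf{M}}$, the degree-$\mathsf{M}$ homogeneous equation has no solution (else $\tau$ would lie in $\Sigma_{\mathcal{T}}^{\mathsf{M}}$), so the $\mathsf{N}\times\mathsf{N}$ matrix $[c_{ab}]$ of Theorem \ref{T-eigenvalue-F-eq} is again nonsingular by the same $Q_{\mathsf{M}}$-contradiction argument, and the inhomogeneous equation \rf{inhomogen-Bax-eq-M} has a unique solution $Q(\lambda)\in\Sigma_{\mathcal{Q}}$ exactly as before. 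The main obstacle I anticipate is the borderline degree-counting in the uniqueness part — showing that no $Q$-function of degree strictly between $\mathsf{M}$ and $\mathsf{N}$ sneaks in to spoil either the homogeneous or the inhomogeneous dichotomy — which is why the careful bookkeeping of which $Y^{(i,r)}$ vanish, together with the spectrum simplicity, is essential.
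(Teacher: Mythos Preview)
Your approach is essentially the one the paper takes: its proof is a three-line sketch saying that one follows Theorem \ref{T-eigenvalue-F-eq}, the only new observation being that the boundary condition \rf{Cond-homo-M} makes the asymptotics of any $\tau\in\Sigma_{\mathcal{T}}^{\mathsf{M}}$ match those of a transfer-matrix eigenvalue, while evaluation of the homogeneous Baxter equation at $\pm\eta/2$, $\pm(\eta/2+i\pi/2)$ and $\zeta_b^{(0)},\zeta_b^{(1)}$ reproduces \rf{System-A}, \rf{System-B}, \rf{System1}. Your expansion of this sketch into the four sub-arguments is correct in outline.

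Two local corrections are worth making. First, your ``existence of $Q_{\mathsf{M}}$'' paragraph is unnecessary and partly wrong: membership $\tau\in\Sigma_{\mathcal{T}}^{\mathsf{M}}$ \emph{means} by definition that some $Q_{\mathsf{M}}\in\Sigma_{\mathcal{Q}}^{\mathsf{M}}$ exists, so there is nothing to construct; moreover $F_0\neq 0$ for $\mathsf{M}<\mathsf{N}$, and the claim that ``the relevant coefficient determinant vanishes on this hyperplane'' is not a property of the hyperplane but of the particular $\tau$. Second, your ``borderline case $\mathsf{M}'=\mathsf{M}$'' in the uniqueness argument does not arise: both $Q_{\mathsf{M}}$ and $\widetilde Q_{\mathsf{M}}$ are monic of the form $2^{\mathsf{M}}\prod_b(\cosh 2\lambda-\cosh 2\lambda_b)$, so $\delta Q$ has degree strictly less than $\mathsf{M}$ in $\cosh 2\lambda$, and the asymptotic obstruction of Lemma \ref{impossible_hom} applies directly (only $Y^{(i,2\mathsf{M})}$ vanishes, not $Y^{(j,2\mathsf{M}')}$ for $\mathsf{M}'<\mathsf{M}$, generically on that hyperplane). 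With these two simplifications your proof aligns exactly with the paper's intended argument.
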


\begin{proof}[Proof]
The proof follows the one given for the main Theorem \ref{T-eigenvalue-F-eq}
we have just to observe that thanks to the boundary conditions $\left( \ref%
{Cond-homo-M}\right) $ the set $\Sigma _{\mathcal{T}}^{\mathsf{M}}$ is
formed by transfer matrix eigenvalues as the Baxter equation implies that
for any $\tau (\lambda )\in \Sigma _{\mathcal{T}}^{\mathsf{M}}$ the systems
of equations $\left( \ref{System-A}\right) ,$ $\left( \ref{System-B}\right) $
and $\left( \ref{System1}\right) $ are satisfied and moreover that the
asymptotics of the $\tau (\lambda )\in \Sigma _{\mathcal{T}}^{\mathsf{M}}$
is exactly that of the transfer matrix eigenvalues.
\end{proof}

Finally, it is interesting to remark that under the boundary conditions $%
\left( \ref{Inhom-cond-BaxEq}\right) $  the complete
characterization of the spectrum of the transfer matrix is given in terms of
the even polynomials $Q(\lambda )$ all of fixed degree $2\mathsf{N}$ and
form $\left( \ref{Q-form2}\right) $ which are solutions of the
inhomogeneous/homogeneous Baxter equation. However, in the cases  when the boundary parameters satisfy the constraints $\left( \ref%
{Cond-homo-M}\right) $ for a given $\mathsf{M}\in \left\{ 0,...,\mathsf{N}%
-1\right\} $ a part of the transfer matrix spectrum can be
defined by polynomials of smaller degree; i.e. the $Q_{\mathsf{M}}(\lambda
)\in \Sigma _{\mathcal{Q}}^{\mathsf{M}}$ for the fixed $\mathsf{M}\in
\left\{ 0,...,\mathsf{N}-1\right\} $.

\section{Discrete symmetries and equivalent Baxter equations}
\label{sect-descretesym}

It is important to point out that we have some large amount of freedom in
the choice of the functional reformulation of the SOV characterization of
the transfer matrix spectrum. We have reduced it looking for trigonometric
polynomial solutions $Q(\lambda )$ of the second order difference equations with
coefficients $\mathbf{A}(\lambda )$ which are rational trigonometric
functions. It makes the finite difference terms $\mathbf{A}(\lambda
)Q(\lambda -\eta )+\mathbf{A}(-\lambda )Q(\lambda +\eta )$ in the functional
equation a trigonometric polynomial. Indeed, this assumption reduces the
possibility to use the following gauge transformations of the coefficients
allowed instead by the SOV characterization:%
\begin{equation}
\mathbf{A}_{\alpha }(\lambda )=\alpha (\lambda )\mathbf{A}(\lambda ),\text{
\ }\mathbf{D}_{\alpha }(\lambda )=\frac{\mathbf{A}(-\lambda )}{\alpha
(\lambda +\eta )}.
\end{equation}%
In the following we discuss simple transformations that do not modify the
 functional form of the coefficients   allowing  equivalent
reformulations of the SOV spectrum by Baxter equations.

\subsection{Discrete symmetries of the transfer matrix spectrum}

It is not difficult to see that the spectrum (eigenvalues) of the transfer matrix
presents the following invariance:

\begin{lemma}
\label{Lem-invariance}We denote explicitly the dependence from the
boundary parameters in the set of boundary parameters $\Sigma _{\mathcal{T}}^{(\tau _{+},\alpha
_{+},\beta _{+},\tau _{-},\alpha _{-},\beta _{-})}$ of the eigenvalue
functions of the transfer matrix $\mathcal{T}(\lambda )$, then this set is
invariant under the following $Z_{2}^{\otimes 3}$ transformations of the
boundary parameters:%
\begin{align}
&\Sigma _{\mathcal{T}}^{(\tau _{+},\alpha _{+},\beta _{+},\tau _{-},\alpha
_{-},\beta _{-})}\equiv \Sigma _{\mathcal{T}}^{(\epsilon _{\tau }\tau
_{+},\epsilon _{\alpha }\alpha _{+},\epsilon _{\beta }\beta _{+},\epsilon
_{\tau }\tau _{-},\epsilon _{\alpha }\alpha _{-},\epsilon _{\beta }\beta
_{-})}\ \\ &
\forall (\epsilon _{\tau },\epsilon _{\alpha },\epsilon _{\beta
})\in \{-1,1\}\times \{-1,1\}\times \{-1,1\}.\nonumber
\end{align}
\end{lemma}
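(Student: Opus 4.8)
The plan is to prove Lemma~\ref{Lem-invariance} by tracing through how the sign flips $\epsilon_\tau,\epsilon_\alpha,\epsilon_\beta$ act on the ingredients that enter the SOV characterization of $\Sigma_{\mathcal{T}}$, namely the functions $\mathbf{A}(\lambda)$, $g_a(\lambda)$, $f(\lambda)$ and the data $q_n$ of the quadratic system $\left(\ref{Quadratic System}\right)$, and to check that each of these is left invariant (or transforms by an overall rescaling that can be absorbed into the unknowns $x_a$). By Theorem~\ref{C:T-eigenstates-} the set $\Sigma_{\mathcal{T}}$ is entirely determined by $f(\lambda)$, the $g_a(\lambda)$ and the $q_n$; hence it suffices to show these are invariant under the three independent $Z_2$ actions.

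First I would treat the $\epsilon_\alpha$ and $\epsilon_\beta$ flips. The boundary parameters $\alpha_\pm,\beta_\pm$ enter only through $g_\pm(\lambda)$ in $\left(\ref{g_PM}\right)$ and through the combination $\alpha_+\!+\!\alpha_-\!-\!\beta_+\!+\!\beta_-$ appearing in the asymptotics. Inspecting $\left(\ref{g_PM}\right)$, one checks that $g_\pm(\lambda)$ is invariant under $\alpha_\pm\to-\alpha_\pm$ together with $\beta_\pm\to-\beta_\pm$ only up to the hyperbolic identities $\sinh(-x)=-\sinh x$, $\cosh(-x)=\cosh x$; the key observation is that $g_\pm(\lambda)$ is a ratio in which the denominator $\sinh\alpha_\pm\cosh\beta_\pm$ carries exactly the sign needed so that flipping $\alpha_\pm$ (resp.\ $\beta_\pm$) leaves $g_\pm(\lambda)$ unchanged. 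One must do this carefully: under $\alpha_\pm\to-\alpha_\pm$ the factor $\sinh(\lambda+\alpha_\pm-\eta/2)$ is not individually invariant, so the invariance of $g_\pm$ is not termwise; instead I would verify it through the defining relations $\left(\ref{alfa-beta}\right)$, which show that the physically meaningful quantities are $\kappa_\pm$ and $\zeta_\pm$, and that $(\alpha_\pm,\beta_\pm)$ are determined by these only up to the stated sign ambiguities and shifts by $i\pi$. Concretely, the cleanest route is to note that $g_\pm(\lambda)$ as defined enters the construction only via $\mathbf{A}(\lambda)=(-1)^{\mathsf N}\tfrac{\sinh(2\lambda+\eta)}{\sinh2\lambda}g_+(\lambda)g_-(\lambda)a(\lambda)d(-\lambda)$, and via the transfer matrix $\mathcal T(\lambda)$ itself, which depends on $\alpha_\pm,\beta_\pm$ only through $\kappa_\pm,\zeta_\pm$ by $\left(\ref{alfa-beta}\right)$ and the Hamiltonian form $\left(\ref{H-XXZ-Non-D}\right)$; since $\left(\ref{alfa-beta}\right)$ is manifestly invariant under $\alpha_\pm\to-\alpha_\pm$ and under $\beta_\pm\to-\beta_\pm$, so is $\mathcal T(\lambda)$ and hence $\Sigma_{\mathcal T}$.

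Next I would handle the $\epsilon_\tau$ flip. The parameter $\tau_\pm$ enters the $K$-matrices $\left(\ref{ADMFKK}\right)$ only through the off-diagonal entries $\kappa_\pm e^{\pm\tau_\pm}\sinh(2\lambda\mp\eta)$; under $\tau_\pm\to-\tau_\pm$ this amounts to conjugating $K_\pm$ by $\mathrm{diag}(e^{\tau_\pm/2},e^{-\tau_\pm/2})$ in the auxiliary space, up to the $\kappa_\pm$-dependence. Equivalently, at the level of the Hamiltonian $\left(\ref{H-XXZ-Non-D}\right)$ the sign flip $\tau_-\to-\tau_-,\ \tau_+\to-\tau_+$ is implemented by the global spin rotation $\sigma^x_j\to\sigma^x_j$, $\sigma^y_j\to-\sigma^y_j$, $\sigma^z_j\to-\sigma^z_j$ about the $x$-axis (note the bulk term $\sigma^x\sigma^x+\sigma^y\sigma^y+\cosh\eta\,\sigma^z\sigma^z$ is invariant, and each boundary term has $\cosh\tau_\pm$ even and $i\sigma^y\sinh\tau_\pm$ odd, so it is unchanged when $\sigma^y\to-\sigma^y$ and $\tau_\pm\to-\tau_\pm$ simultaneously). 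Hence $\mathcal T(\lambda)$ is conjugate to the transfer matrix with flipped $\tau_\pm$, so they are isospectral. I would spell this out at the level of the monodromy matrices: the similarity transformation $\prod_j W_j$ with $W_j$ the $x$-rotation intertwines $M_0(\lambda)$ with itself up to a conjugation in the auxiliary space by the same $W_0$, which then also conjugates $K_\pm$ correctly because $W_0 K_\pm(\lambda;\zeta_\pm,\kappa_\pm,\tau_\pm)W_0^{-1}=K_\pm(\lambda;\zeta_\pm,\kappa_\pm,-\tau_\pm)$, and the trace over the auxiliary space is invariant under conjugation.

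The main obstacle will be the $\epsilon_\alpha,\epsilon_\beta$ part, specifically reconciling the fact that the explicit formulas for $g_\pm(\lambda)$ and for the asymptotic $\left(\ref{Central-asymp}\right)$ and the coefficient $z_{2(\mathsf N+1)}$ in Lemma~2.1 appear to contain $\alpha_\pm,\beta_\pm$ in combinations like $\alpha_++\alpha_--\beta_++\beta_-$ that are \emph{not} obviously invariant under flipping a single $\alpha_\pm$ or $\beta_\pm$. The resolution is that these formulas are written for one fixed choice of the sign ambiguity in solving $\left(\ref{alfa-beta}\right)$, and that the quantities genuinely entering $\Sigma_{\mathcal T}$ — i.e.\ $f,g_a,q_n$, equivalently $\mathcal T(\lambda)$ — depend only on $\kappa_\pm,\zeta_\pm$; so the honest check is that $g_+(\lambda)g_-(\lambda)$, the product that enters $\mathbf{A}(\lambda)$, together with the $K$-matrix data, is invariant, which follows from $\left(\ref{alfa-beta}\right)$. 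Once I make this point carefully, the three generators commute and generate $Z_2^{\otimes3}$, completing the proof. I would close by remarking that this is precisely the freedom alluded to before the lemma: it allows one to move into or out of the set $N_{SOV}$ and thereby extend the applicability of the SOV construction, which is the use made of it in the sections that follow.
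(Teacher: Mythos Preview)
Your high-level plan --- to verify that the SOV data $f(\lambda)$, $g_a(\lambda)$, $q_n$ which by Theorem~\ref{C:T-eigenstates-} determine $\Sigma_{\mathcal{T}}$ are unchanged by the three sign flips --- is exactly the paper's strategy. The execution, however, contains a genuine error that makes the argument fail.

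The mistake is in your treatment of $\epsilon_\alpha$ and $\epsilon_\beta$. You claim the relations $(\ref{alfa-beta})$ are ``manifestly invariant'' under $\alpha_\pm\to-\alpha_\pm$ and $\beta_\pm\to-\beta_\pm$, so that these flips amount to a sign ambiguity and leave the physical data $\kappa_\pm,\zeta_\pm$ --- and hence $\mathcal{T}(\lambda)$ itself --- unchanged. This is false: inspecting $(\ref{alfa-beta})$, the flip $\alpha_\pm\to-\alpha_\pm$ forces $\zeta_\pm\to-\zeta_\pm$, and $\beta_\pm\to-\beta_\pm$ forces $\kappa_\pm\to-\kappa_\pm$. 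The transfer matrix genuinely changes (the paper stresses this immediately after the lemma); only isospectrality is being claimed. Your proposed ``resolution'' via the product $g_+(\lambda)g_-(\lambda)$ entering $\mathbf{A}(\lambda)$ is likewise wrong: $\mathbf{A}(\lambda)$ is \emph{not} invariant under these flips --- its non-invariance is precisely what produces the family of inequivalent Baxter equations in Section~\ref{sect-descretesym}. What \emph{is} invariant is the reflected product $g_\pm(\mu)g_\pm(\eta-\mu)$, and that is what enters the quantum determinant $(\ref{Tot-q-det-tt})$ and hence the $q_n$; together with the invariance of the central values $(\ref{Central-asymp})$--$(\ref{Central-2})$ (one checks directly that $\kappa_+\kappa_-$, $\sinh\zeta_+\sinh\zeta_-$, $\coth\zeta_+\coth\zeta_-$, $\cosh(\tau_+-\tau_-)$ are each invariant under the simultaneous sign flips on both boundaries), this yields invariance of $f$ and $q_n$ and hence of $\Sigma_{\mathcal{T}}$. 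That is the paper's route, and you should follow it through rather than detour through the incorrect claim that $\mathcal{T}(\lambda)$ is unchanged.

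Relatedly, your similarity argument for $\epsilon_\tau$ via a global $x$-axis rotation does not implement $\tau_\pm\to-\tau_\pm$ alone: conjugation by $\otimes_n\sigma_n^x$ sends $\sigma^z\to-\sigma^z$ in the boundary term, which forces $\zeta_\pm\to-\zeta_\pm$ as well, i.e.\ it realizes $\epsilon_\tau\epsilon_\alpha$, not $\epsilon_\tau$. The paper exhibits explicit similarities only for $\Gamma_y$ (implementing $\tau_\pm,\zeta_\pm\to-\tau_\pm,-\zeta_\pm$, that is $\epsilon_\tau\epsilon_\alpha$) and $\Gamma_z$ (implementing $\kappa_\pm\to-\kappa_\pm$, that is $\epsilon_\beta$); the remaining generator has no obvious global similarity and is handled only through the SOV invariance argument above.
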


\begin{proof}[Proof]
To prove this statement it is enough to look at the SOV characterization
which defines completely the transfer matrix spectrum, i.e.
the set $\Sigma _{\mathcal{T}}$, and to prove that it is invariant under the
above considered $Z_{2}^{\otimes 3}$ transformations of the boundary
parameters. We have first to remark that the central values $%
\left( \ref{Central-asymp}\right) $-$\left( \ref{Central-2}\right) $ of
the transfer matrix $\mathcal{T}(\lambda )$ are invariant under these
discrete transformations and then the function $f(\lambda )$, defined in $%
\left( \ref{f-function}\right) $, is invariant too and the same is true for
the form $\left( \ref{Interpolation-Form-T}\right) $ of the interpolation
polynomial describing the elements of $\Sigma _{\mathcal{T}}$. Then the invariance of
 the SOV characterization $\left( \ref{Quadratic System}\right) $\ follows from the
invariance of the quantum determinant%
\begin{eqnarray}
\mathrm{det}_{q}K_{+}(\lambda )\mathrm{det}_{q}\,\mathcal{U}_{-}(\lambda )
&=&\sinh (2\eta -2\lambda )\sinh (2\lambda +2\eta )g_{+}(\lambda +\eta
/2)g_{+}(-\lambda +\eta /2)g_{-}(\lambda +\eta /2)  \notag \\
&&\times g_{-}(-\lambda +\eta /2)a(\lambda +\eta /2)d(\lambda -\eta
/2)a(-\lambda +\eta /2)d(-\lambda -\eta /2)
\end{eqnarray}%
 under these discrete transformations.
\end{proof}

It is important to underline that the above $Z_{2}^{\otimes 3}$ transformations
of the boundary parameters do indeed change the transfer matrix $\mathcal{T}%
(\lambda )$ and the Hamiltonian and so this invariance  is
equivalent to the statement that these different transfer matrices are all
isospectral. In particular, it is simple to find the similarity matrices
implementing the following $Z_{2}$ transformations of the boundary
parameters:%
\begin{eqnarray}
\mathcal{T}(\lambda |-\tau _{+},-\zeta _{+},\kappa _{+},-\tau _{-},-\zeta
_{-},\kappa _{-}) &=&\Gamma _{y}\mathcal{T}(\lambda |\tau _{+},\zeta
_{+},\kappa _{+},\tau _{-},\zeta _{-},\kappa _{-})\Gamma _{y},\text{ \ \ \ }%
\Gamma _{y}\equiv \otimes _{n=1}^{\mathsf{N}}\sigma _{n}^{y}, \\
\mathcal{T}(\lambda |\tau _{+},\zeta _{+},-\kappa _{+},\tau _{-},\zeta
_{-},-\kappa _{-}) &=&\Gamma _{z}\mathcal{T}(\lambda |\tau _{+},\zeta
_{+},\kappa _{+},\tau _{-},\zeta _{-},\kappa _{-})\Gamma _{z},\text{ \ \ \ }%
\Gamma _{z}\equiv \otimes _{n=1}^{\mathsf{N}}\sigma _{n}^{z}.
\end{eqnarray}

\subsection{Equivalent Baxter equations and the SOV spectrum}

The invariance of the spectrum $\Sigma _{\mathcal{T}}$ under these $Z_{2}^{\otimes 3}$ transformations of the boundary parameters
can be used to define equivalent Baxter equation reformulation of $\Sigma _{%
\mathcal{T}}$. More precisely, let us introduce the following functions $%
\mathbf{A}_{(\epsilon _{\tau },\epsilon _{\alpha },\epsilon _{\beta
})}(\lambda )$ and $F_{(\epsilon _{\tau },\epsilon _{\alpha },\epsilon
_{\beta })}(\lambda )$ obtained respectively by implementing the $%
Z_{2}^{\otimes 3}$ transformations:%
\begin{equation}
(\tau _{+},\alpha _{+},\beta _{+},\tau _{-},\alpha _{-},\beta
_{-})\rightarrow (\epsilon _{\tau }\tau _{+},\epsilon _{\alpha }\alpha
_{+},\epsilon _{\beta }\beta _{+},\epsilon _{\tau }\tau _{-},\epsilon
_{\alpha }\alpha _{-},\epsilon _{\beta }\beta _{-}),
\end{equation}%
then the following characterizations hold for any fixed $(\epsilon _{\tau
},\epsilon _{\alpha },\epsilon _{\beta })\in \{-1,1\}\times \{-1,1\}\times
\{-1,1\}$:

\begin{theorem}
\label{T-eigenvalue-F-eq-gen}Let the inhomogeneities $\{\xi _{1},...,\xi _{%
\mathsf{N}}\}\in \mathbb{C}$ $^{\mathsf{N}}$ be generic  \rf{xi-conditions} and let the boundary parameters $(\tau _{+},\alpha _{+},\beta _{+},\tau _{-},\alpha _{-},\beta
_{-})\in \mathbb{C}^{6}\backslash N_{SOV}$ satisfy the following conditions:%
\begin{equation}
\kappa _{+}\neq 0,\kappa _{-}\neq 0,\text{ \ }Y^{(i,2r)}(\epsilon _{\tau
}\tau _{\pm },\epsilon _{\alpha }\alpha _{\pm },\epsilon _{\beta }\beta
_{\pm })\neq 0\text{ \ }\forall i\in \left\{ 0,1\right\} ,r\in \left\{ 0,...,%
\mathsf{N}-1\right\} ,  \label{Inhom-cond-BaxEq-gen}
\end{equation}%
then 
$\tau (\lambda )\in \Sigma _{\mathcal{T}}$  if and only if $\exists
!Q(\lambda )\in \Sigma _{\mathcal{Q}}$ such that 
\begin{equation}
\tau (\lambda
)Q(\lambda )=Z_{Q,(\epsilon _{\tau },\epsilon _{\alpha },\epsilon _{\beta
})}(\lambda )+F_{(\epsilon _{\tau },\epsilon _{\alpha },\epsilon _{\beta
})}(\lambda ),
\end{equation}%
where:%
\begin{equation}
Z_{Q,(\epsilon _{\tau },\epsilon _{\alpha },\epsilon _{\beta })}(\lambda )=%
\mathbf{A}_{(\epsilon _{\tau },\epsilon _{\alpha },\epsilon _{\beta
})}(\lambda )(\lambda )Q(\lambda -\eta )+\mathbf{A}_{(\epsilon _{\tau
},\epsilon _{\alpha },\epsilon _{\beta })}(-\lambda )Q(\lambda +\eta ).
\end{equation}
\end{theorem}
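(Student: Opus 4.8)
The plan is to obtain this as a consequence of Theorem~\ref{T-eigenvalue-F-eq} applied to the \emph{transformed} boundary parameters $(\epsilon _{\tau }\tau _{\pm },\epsilon _{\alpha }\alpha _{\pm },\epsilon _{\beta }\beta _{\pm })$, combined with the spectral invariance of Lemma~\ref{Lem-invariance}. First I would record the purely notational facts: by construction $\mathbf{A}_{(\epsilon _{\tau },\epsilon _{\alpha },\epsilon _{\beta })}$, $F_{0,(\epsilon _{\tau },\epsilon _{\alpha },\epsilon _{\beta })}$, $F_{(\epsilon _{\tau },\epsilon _{\alpha },\epsilon _{\beta })}$ and $Z_{Q,(\epsilon _{\tau },\epsilon _{\alpha },\epsilon _{\beta })}$ are exactly the Section~3 functions $\mathbf{A}$, $F_{0}$, $F$, $Z_{Q}$ evaluated at the transformed parameters; condition \rf{Inhom-cond-BaxEq-gen} is nothing but \rf{Inhom-cond-BaxEq} written for those parameters; and \rf{xi-conditions} constrains only the inhomogeneities and is untouched. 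So what I want is precisely the statement of Theorem~\ref{T-eigenvalue-F-eq} for the transformed data, together with the identification $\Sigma _{\mathcal{T}}^{(\epsilon _{\tau }\tau _{+},\dots )}=\Sigma _{\mathcal{T}}^{(\tau _{+},\dots )}$ furnished by Lemma~\ref{Lem-invariance}.

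Then I would run, mutatis mutandis, the proof of Theorem~\ref{T-eigenvalue-F-eq} with $\mathbf{A},F,Z_{Q}$ replaced by their $(\epsilon _{\tau },\epsilon _{\alpha },\epsilon _{\beta })$-versions, checking that each ingredient survives the substitution. The crucial inputs, all supplied by Lemma~\ref{Lem-invariance} and the computation in its proof, are: (i) the interpolation data of $\Sigma _{\mathcal{T}}$ are invariant --- the central values \rf{Central-asymp}--\rf{Central-2} and $f(\lambda )$, the $g_{a}(\lambda )$ carry no boundary parameters, and the $q_{n}$ in \rf{Quadratic System} are built from the invariant $\det _{q}K_{+}\det _{q}\mathcal{U}_{-}$ --- so the characterisation \rf{Interpolation-Form-T}--\rf{Quadratic System} of $\Sigma _{\mathcal{T}}$ reads identically for original and transformed parameters; (ii) consequently the total quantum determinant identity \rf{Tot-q-det-tt} holds verbatim with $\mathbf{A}_{(\epsilon _{\tau },\epsilon _{\alpha },\epsilon _{\beta })}$, i.e. $\mathbf{A}_{(\epsilon _{\tau },\epsilon _{\alpha },\epsilon _{\beta })}(\lambda )\,\mathbf{A}_{(\epsilon _{\tau },\epsilon _{\alpha },\epsilon _{\beta })}(-\lambda +\eta )=\mathbf{A}(\lambda )\,\mathbf{A}(-\lambda +\eta )$; and (iii) $\mathbf{A}_{(\epsilon _{\tau },\epsilon _{\alpha },\epsilon _{\beta })}$ still agrees with $\mathbf{A}$ at $\pm \eta /2$ and $\pm (\eta /2+i\pi /2)$ (one uses $g_{\pm }(\eta /2)=1$ and the fact that the signs picked up by $g_{+}$ and $g_{-}$ at $\eta /2+i\pi /2$ cancel in the product $g_{+}g_{-}$), so that \rf{System-A}--\rf{System-B} are unaffected. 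With these, for $\tau (\lambda )\in \Sigma _{\mathcal{T}}$ the equation $\tau Q=Z_{Q,(\epsilon _{\tau },\epsilon _{\alpha },\epsilon _{\beta })}+F_{(\epsilon _{\tau },\epsilon _{\alpha },\epsilon _{\beta })}$, being an identity between even trigonometric polynomials, needs to be verified only at the $4\mathsf{N}+4$ points $\pm \eta /2$, $\pm (\eta /2+i\pi /2)$, $\pm \zeta _{b}^{(0)}$, $\pm \zeta _{b}^{(1)}$; it is automatic at the first four, and at $\pm \zeta _{b}^{(0,1)}$ it reduces to the pair \rf{System1}--\rf{System2}, the first holding by (i)--(ii) since $\tau \in \Sigma _{\mathcal{T}}$, and the $\mathsf{N}\times \mathsf{N}$ linear system \rf{System2} for the $Q(\zeta _{a}^{(0)})$ being non-degenerate because a nonzero even trigonometric polynomial $Q_{\mathsf{M}}$ of degree $\leq 2(\mathsf{N}-1)$ solving the associated homogeneous equation is excluded by Lemma~\ref{impossible_hom} applied to the transformed parameters --- which is exactly where \rf{Inhom-cond-BaxEq-gen} enters. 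The converse direction and the uniqueness are transcribed verbatim from the proof of Theorem~\ref{T-eigenvalue-F-eq}.

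The one genuinely new point, and the place where care is needed, is that the transformed parameters $(\epsilon _{\tau }\tau _{\pm },\epsilon _{\alpha }\alpha _{\pm },\epsilon _{\beta }\beta _{\pm })$ need not lie outside $N_{SOV}$, so Theorem~\ref{C:T-eigenstates-} is not literally available for them, and I must make sure that nothing in the transported argument secretly uses that the transformed parameters lie outside $N_{SOV}$. The resolution is precisely Lemma~\ref{Lem-invariance}: it shows $\Sigma _{\mathcal{T}}$ together with its entire defining system \rf{Interpolation-Form-T}--\rf{Quadratic System} is unchanged by the transformation, so the characterisation established for the original parameters transfers; and Lemma~\ref{impossible_hom} (as well as Lemma~\ref{Lem-invariance} itself) requires only $\kappa _{\pm }\neq 0$ and the relevant $Y^{(i,2r)}\neq 0$ conditions, not membership statements about $N_{SOV}$. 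I expect the final write-up to be short, the bulk of it being the routine matching of each invariance collected in Lemma~\ref{Lem-invariance} to the step in the proof of Theorem~\ref{T-eigenvalue-F-eq} that requires it.
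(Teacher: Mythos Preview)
Your proposal is correct and follows exactly the route the paper takes: its proof consists of the single sentence that the argument ``follows step by step the one given for the main Theorem~\ref{T-eigenvalue-F-eq}.'' You have simply made explicit the verifications (invariance of the SOV data via Lemma~\ref{Lem-invariance}, the transformed quantum-determinant identity, the matching of $\mathbf{A}_{(\epsilon_\tau,\epsilon_\alpha,\epsilon_\beta)}$ at the special points, and the use of Lemma~\ref{impossible_hom} for the transformed parameters) that the paper leaves to the reader, including the $N_{SOV}$ subtlety, which the paper does not even mention.
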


\begin{proof}[Proof]
The proof follows step by step the one given for the main Theorem \ref{T-eigenvalue-F-eq}.
\end{proof}

\subsection{General validity of the
inhomogeneous Baxter equations}

The previous reformulations of the spectrum in terms of different
inhomogeneous Baxter equations and the observation that the conditions under
which the Theorem does not apply are related to the choice of the $(\epsilon
_{\tau },\epsilon _{\alpha },\epsilon _{\beta })\in \{-1,1\}\times
\{-1,1\}\times \{-1,1\}$ allow us to prove that unless the boundary
parameters are lying on a finite lattice of step $\eta $ we can always use
an inhomogeneous Baxter equations\ to completely characterize the spectrum
of the transfer matrix. More precisely, let us introduce the following
hyperplanes in the space of the boundary parameters:%
\begin{equation}
M\equiv \left\{ 
\begin{array}{l}
(\tau _{+},\alpha _{+},\beta _{+},\tau _{-},\alpha _{-},\beta _{-})\in 
\mathbb{C}^{6}:\exists (r_{+,+},r_{-,+},r_{-,-})\in \{0,...,\mathsf{N}-1\}
\\ 
\text{ such that: \ \ \ }\left\{ 
\begin{array}{l}
r_{+,+}+r_{-,-}-r_{-,+}\in \{0,...,\mathsf{N}-1\} \\ 
\alpha _{+}+\alpha _{-}=(r_{-,+}-r_{+,+})\eta \\ 
\beta _{-}-\beta _{+}=(r_{-,-}-r_{-,+})\eta \\ 
\tau _{-}-\tau _{+}=(\mathsf{N}-1+r_{-,-}-3r_{+,+})\eta%
\end{array}%
\right.%
\end{array}%
\right\}  \label{Def-M}
\end{equation}%
then the following theorem holds:

\begin{theorem}
Let the inhomogeneities $\{\xi _{1},...,\xi _{\mathsf{N}}\}\in \mathbb{C}$ $%
^{\mathsf{N}}$ satisfy the conditions \rf{xi-conditions} and let $(\tau _{+},\alpha _{+},\beta _{+},\tau _{-},\alpha _{-},\beta
_{-})\in \mathbb{C}^{6}\backslash \left( M\cup N_{SOV}\right) $ then we can
always find a $(\epsilon _{\tau },\epsilon _{\alpha },\epsilon _{\beta })\in
\{-1,1\}\times \{-1,1\}\times \{-1,1\}$ such that $\tau (\lambda )\in\Sigma _{\mathcal{T}}$ 
  if and only if $\exists
!Q(\lambda )\in \Sigma _{\mathcal{Q}}$ such that 
\begin{equation}
\tau (\lambda
)Q(\lambda )=Z_{Q,(\epsilon _{\tau },\epsilon _{\alpha },\epsilon _{\beta
})}(\lambda )+F_{(\epsilon _{\tau },\epsilon _{\alpha },\epsilon _{\beta
})}(\lambda ).
\end{equation}
\end{theorem}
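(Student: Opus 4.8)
The plan is to reduce this theorem to Theorem~\ref{T-eigenvalue-F-eq-gen} (equivalently Theorem~\ref{T-eigenvalue-F-eq}) by showing that for a point $(\tau _{+},\alpha _{+},\beta _{+},\tau _{-},\alpha _{-},\beta _{-})\in \mathbb{C}^{6}\backslash \left( M\cup N_{SOV}\right)$ there is at least one sign choice $(\epsilon _{\tau },\epsilon _{\alpha },\epsilon _{\beta })$ for which the hypotheses \rf{Inhom-cond-BaxEq-gen} hold, i.e. for which $Y^{(i,2r)}(\epsilon _{\tau }\tau _{\pm },\epsilon _{\alpha }\alpha _{\pm },\epsilon _{\beta }\beta _{\pm })\neq 0$ for all $i\in\{0,1\}$ and all $r\in\{0,\dots,\mathsf{N}-1\}$. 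Once such a sign choice is produced, the conclusion is \emph{verbatim} Theorem~\ref{T-eigenvalue-F-eq-gen}, so the entire content of this theorem is the combinatorial/algebraic statement that $M\cup N_{SOV}$ captures exactly the bad set. The assumption $\kappa _{\pm}\neq 0$ is automatic since $\kappa _{\pm}$ appears only through $\alpha _{\pm},\beta _{\pm}$ via \rf{alfa-beta}, where finiteness of these quantities already forces $\kappa _{\pm}\neq 0$ (and the point is assumed to lie in $\mathbb{C}^6$, so $\alpha_\pm,\beta_\pm$ are finite).

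The core computation is to expand the condition $Y^{(i,2r)}(\epsilon _{\tau }\tau _{\pm },\epsilon _{\alpha }\alpha _{\pm },\epsilon _{\beta }\beta _{\pm })=0$ using the definition $Y^{(i,r)}(\tau _{\pm },\alpha _{\pm },\beta _{\pm })\equiv \tau _{-}-\tau _{+}+\left( -1\right) ^{i}\left[ \left( \mathsf{N}-1-r\right) \eta +(\alpha _{-}+\alpha _{+}+\beta _{-}-\beta _{+})\right]$. Under the $Z_2^{\otimes 3}$ transformation this becomes
\begin{equation}
\epsilon _{\tau }(\tau _{-}-\tau _{+})+(-1)^{i}\left[(\mathsf{N}-1-2r)\eta +\epsilon _{\alpha }(\alpha _{-}+\alpha _{+})+\epsilon _{\beta }(\beta _{-}-\beta _{+})\right]=0,
\end{equation}
so requiring it to \emph{fail} for all eight sign choices and all $r\in\{0,\dots,\mathsf{N}-1\}$, $i\in\{0,1\}$, means that each of the quantities $\pm(\tau _{-}-\tau _{+})\pm(\alpha _{-}+\alpha _{+})\pm(\beta _{-}-\beta _{+})$ lies on the arithmetic progression $\{(\mathsf{N}-1-2r)\eta:r=0,\dots,\mathsf{N}-1\}$ (combining both signs of the overall bracket). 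I would introduce the shorthand $A=\alpha_-+\alpha_+$, $B=\beta_--\beta_+$, $T=\tau_--\tau_+$ and show that the simultaneous solvability of these eight coincidence conditions forces $A,B,T$ to be integer multiples of $\eta$ with the three integers constrained exactly as in the definition \rf{Def-M} of $M$; the substitutions $\alpha _{+}+\alpha _{-}=(r_{-,+}-r_{+,+})\eta$, $\beta _{-}-\beta _{+}=(r_{-,-}-r_{-,+})\eta$, $\tau _{-}-\tau _{+}=(\mathsf{N}-1+r_{-,-}-3r_{+,+})\eta$ and the range condition $r_{+,+}+r_{-,-}-r_{-,+}\in\{0,\dots,\mathsf{N}-1\}$ should emerge as precisely the consistency conditions after eliminating variables. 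Once this equivalence ``(bad for every sign choice) $\iff$ (point lies in $M$)'' is established, the contrapositive gives that outside $M$ (and outside $N_{SOV}$, which is needed for Theorem~\ref{T-eigenvalue-F-eq-gen} to apply at all) some sign choice is good.

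The main obstacle I expect is bookkeeping rather than anything deep: carefully matching the three free integers $r_{+,+},r_{-,+},r_{-,-}$ appearing in \rf{Def-M} with the parametrization that naturally falls out of the eight $Y$-conditions, including getting the ranges right and checking that the asymmetric-looking combination $\mathsf{N}-1+r_{-,-}-3r_{+,+}$ for $\tau _{-}-\tau _{+}$ is really what the signs produce. A subtle point to be careful about is that the eight conditions are \emph{not} independent — flipping $\epsilon_\tau$ together with $i$ gives the same equation — so effectively there are only four genuinely distinct families of coincidence conditions, indexed by $(\epsilon_\alpha,\epsilon_\beta)$ with the overall $\pm$ absorbed; I would make this reduction explicit first to avoid spurious overcounting. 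After that, writing $M$ as the image of the obvious affine map from the integer parameter set and verifying surjectivity onto the bad locus is routine linear algebra over $\mathbb{Z}$, and the proof concludes by a one-line invocation of Theorem~\ref{T-eigenvalue-F-eq-gen}.
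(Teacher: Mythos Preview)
Your proposal is correct and follows essentially the same approach as the paper: reduce to Theorem~\ref{T-eigenvalue-F-eq-gen} by showing that the set of boundary parameters for which \emph{every} sign choice $(\epsilon_\tau,\epsilon_\alpha,\epsilon_\beta)$ violates \rf{Inhom-cond-BaxEq-gen} coincides with $M$. The paper compresses this into the single phrase ``by simple computations it is possible to observe that the set $M$ \ldots\ indeed coincides with'' the bad locus, whereas you spell out the reduction via the shorthand $A=\alpha_-+\alpha_+$, $B=\beta_--\beta_+$, $T=\tau_--\tau_+$, the observation that flipping $\epsilon_\tau$ together with $i$ only changes an overall sign (so the eight conditions collapse to four genuine families indexed by $(\epsilon_\alpha,\epsilon_\beta)$), and the identification of the resulting integer constraints with the parametrization \rf{Def-M}; this is exactly the content the paper is suppressing, and your outline of it is sound.

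One small remark: your quantifier reading is actually more careful than the paper's displayed condition \rf{Cond-general-homo}, which as written fixes a \emph{single} $(i,r)$ for all sign choices --- a condition that would force $A=B=T=0$ and is far too restrictive to match $M$. The intended meaning (and what the definition \rf{Def-M} with its three independent integers $r_{+,+},r_{-,+},r_{-,-}$ reflects) is precisely your version, namely that for \emph{each} sign choice some $(i,r)$ makes $Y$ vanish. So your plan not only matches the paper's strategy but in fact states the key combinatorial step more accurately than the paper's own proof sketch does.
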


\begin{proof}[Proof]
The Theorem \ref{T-eigenvalue-F-eq-gen} does not apply if $\exists i\in
\left\{ 0,1\right\} $ and $\exists r\in \left\{ 0,...,\mathsf{N}-1\right\} $
such that the following system of conditions on the boundary parameters are
satisfied:%
\begin{equation}
Y^{(i,2r)}(\epsilon _{\tau }\tau _{\pm },\epsilon _{\alpha }\alpha _{\pm
},\epsilon _{\beta }\beta _{\pm })=0\text{ \ }\forall (\epsilon _{\tau
},\epsilon _{\alpha },\epsilon _{\beta })\in \{-1,1\}^{\otimes 3}
\label{Cond-general-homo}
\end{equation}
then by simple computations it is possible to observe that the set $M$
defined in $\left( \ref{Def-M}\right) $ indeed coincides with the following
set:%
\begin{equation}
\left\{ (\tau _{+},\alpha _{+},\beta _{+},\tau _{-},\alpha _{-},\beta
_{-})\in \mathbb{C}^{6}:\text{ }\exists i\in \left\{ 0,1\right\} ,r\in
\left\{ 0,...,\mathsf{N}-1\right\} \text{ such that }\left( \ref%
{Cond-general-homo}\right) \text{ is satisfied}\right\} ,
\end{equation}%
from which the theorem clearly follows.
\end{proof}

\subsection{Homogeneous
Baxter equation}

The discrete symmetries of the transfer matrix  allow also
to define the general conditions on the boundary parameters for which 
 the spectrum can be characterized by a homogeneous Baxter equation. In
particular the following corollary holds:

\begin{corollary}
Let $(\tau _{+},\alpha _{+},\beta _{+},\tau _{-},\alpha _{-},\beta _{-})\in 
\mathbb{C}^{6}\backslash N_{SOV}$ satisfy the condition:%
\begin{align}
&\kappa _{+}\neq 0,\kappa _{-}\neq 0,\nonumber\\
&\exists i\in \left\{
0,1\right\} ,\text{\ }\exists (\epsilon _{\tau },\epsilon _{\alpha
},\epsilon _{\beta })\in \{-1,1\}\times \{-1,1\}\times \{-1,1\}:Y^{(i,2%
\mathsf{N})}(\epsilon _{\tau }\tau _{\pm },,\epsilon _{\alpha }\alpha _{\pm
},\epsilon _{\beta }\beta _{\pm })=0
\end{align}%
and let the inhomogeneities $\{\xi _{1},...,\xi _{\mathsf{N}}\}\in \mathbb{C}
$ $^{\mathsf{N}}$ be generic \rf{xi-conditions}, then $\tau (\lambda )\in \Sigma _{\mathcal{T}}$ if and only if $\exists
!Q(\lambda )\in \Sigma _{\mathcal{Q}}$ such that 
\begin{equation}
\tau (\lambda
)Q(\lambda )=\mathbf{A}_{(\epsilon _{\tau },\epsilon _{\alpha },\epsilon
_{\beta })}(\lambda )(\lambda )Q(\lambda -\eta )+\mathbf{A}_{(\epsilon
_{\tau },\epsilon _{\alpha },\epsilon _{\beta })}(-\lambda )Q(\lambda +\eta
).
\end{equation}%
Or equivalently we can define the set of all the solutions of the Bethe equations
\begin{equation}
\Sigma _{BAE}=\left\{ \{\lambda _{1},...,\lambda _{\mathsf{N}}\}\in \mathbb{C%
}^{\mathsf{N}}:\mathbf{A}(\lambda _{a})Q_{\mathbf{\lambda }}(\lambda
_{a}-\eta )+\mathbf{A}(-\lambda _{a})Q_{\mathbf{\lambda }}(\lambda _{a}+\eta
)=0,\text{ \ }\forall a\in \{1,...,\mathsf{N}\}\right\} .
\end{equation}%
Then $\tau (\lambda )\in \Sigma _{\mathcal{T}}$ \ if and only if 
$\exists !\{\lambda _{1},...,\lambda _{\mathsf{N}}\}\in \Sigma _{BAE}$ such
that:%
\begin{equation}
\tau (\lambda )=\frac{\mathbf{A}_{(\epsilon _{\tau },\epsilon _{\alpha
},\epsilon _{\beta })}(\lambda )Q(\lambda -\eta )+\mathbf{A}_{(\epsilon
_{\tau },\epsilon _{\alpha },\epsilon _{\beta })}(-\lambda )Q(\lambda +\eta )%
}{Q(\lambda )},
\end{equation}%
with
$$Q(\lambda )=2^{\mathsf{N}}\prod_{a=1}^{%
\mathsf{N}}\left( \cosh 2\lambda -\cosh 2\lambda _{a}\right).$$
Moreover, under the condition of normality defined in Proposition \ref%
{normality},  the set $\Sigma _{BAE}$ of the solutions to the
homogeneous system of Bethe ansatz type equations $\left( \ref{BAE}\right) $
contains $2^{\mathsf{N}}$ elements.
\end{corollary}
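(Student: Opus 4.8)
The plan is to derive this corollary from Theorem~\ref{homogeneousBE_N} by first moving to an equivalent set of boundary parameters through the discrete symmetries of Section~\ref{sect-descretesym}. Fix the index $i\in\{0,1\}$ and the signs $(\epsilon_\tau,\epsilon_\alpha,\epsilon_\beta)\in\{-1,1\}\times\{-1,1\}\times\{-1,1\}$ for which the hypothesis holds, and set $\tau'_\pm=\epsilon_\tau\tau_\pm$, $\alpha'_\pm=\epsilon_\alpha\alpha_\pm$, $\beta'_\pm=\epsilon_\beta\beta_\pm$. By Lemma~\ref{Lem-invariance} the eigenvalue set is unchanged by this substitution, $\Sigma_{\mathcal{T}}^{(\tau_+,\alpha_+,\beta_+,\tau_-,\alpha_-,\beta_-)}=\Sigma_{\mathcal{T}}^{(\tau'_+,\alpha'_+,\beta'_+,\tau'_-,\alpha'_-,\beta'_-)}=:\Sigma_{\mathcal{T}}$, so it suffices to characterize the latter. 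The key point is that, by construction, the functions $\mathbf{A}$, $Z_Q$, $F$ and the sets $\Sigma_{\mathcal{Q}}$, $\Sigma_{BAE}$ built from the primed data are precisely $\mathbf{A}_{(\epsilon_\tau,\epsilon_\alpha,\epsilon_\beta)}$, $Z_{Q,(\epsilon_\tau,\epsilon_\alpha,\epsilon_\beta)}$, $F_{(\epsilon_\tau,\epsilon_\alpha,\epsilon_\beta)}$ and the associated sets, and that $Y^{(i,2\mathsf{N})}(\tau'_\pm,\alpha'_\pm,\beta'_\pm)=0$ is exactly condition~\eqref{ond-homo-boundary} of Theorem~\ref{homogeneousBE_N} written for the primed parameters. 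Equivalently, this condition forces $F_{(\epsilon_\tau,\epsilon_\alpha,\epsilon_\beta)}\equiv0$, because $F_0=0$ amounts to the identity $\cosh(\tau_+-\tau_-)=\cosh(\alpha_++\alpha_--\beta_++\beta_--(\mathsf{N}+1)\eta)$, i.e. to $Y^{(j,2\mathsf{N})}=0$ for one of the two values $j\in\{0,1\}$.

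With this identification, Theorem~\ref{homogeneousBE_N} --- which is itself the rewriting of Theorem~\ref{T-eigenvalue-F-eq-gen} in the case of a vanishing inhomogeneous term --- applies verbatim to the primed parameters. It gives, for every $\tau(\lambda)\in\Sigma_{\mathcal{T}}$, a unique $Q(\lambda)\in\Sigma_{\mathcal{Q}}$ of the form~\eqref{Q-form2} solving $\tau(\lambda)Q(\lambda)=\mathbf{A}_{(\epsilon_\tau,\epsilon_\alpha,\epsilon_\beta)}(\lambda)Q(\lambda-\eta)+\mathbf{A}_{(\epsilon_\tau,\epsilon_\alpha,\epsilon_\beta)}(-\lambda)Q(\lambda+\eta)$, and conversely every such $Q$ produces an element of $\Sigma_{\mathcal{T}}$; rewriting $Q(\lambda)=2^{\mathsf{N}}\prod_{a=1}^{\mathsf{N}}(\cosh2\lambda-\cosh2\lambda_a)$ converts this into the stated bijection $\tau(\lambda)\leftrightarrow\{\lambda_1,\dots,\lambda_{\mathsf{N}}\}$ with $\Sigma_{BAE}$. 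The count $|\Sigma_{BAE}|=2^{\mathsf{N}}$ then follows at once: under the normality hypothesis of Proposition~\ref{normality} the transfer matrix is a normal operator on the $2^{\mathsf{N}}$-dimensional space $\mathcal{H}$ with simple spectrum by Theorem~\ref{C:T-eigenstates-}, so $|\Sigma_{\mathcal{T}}|=2^{\mathsf{N}}$, and the bijection transports this count to $\Sigma_{BAE}$.

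The main obstacle is pure bookkeeping: one must check that the $Z_2^{\otimes 3}$-substitution keeps the primed data admissible --- outside $N_{SOV}$, with $\kappa_\pm\neq0$, and away from the loci $Y^{(i,2r)}(\epsilon_\tau\tau_\pm,\epsilon_\alpha\alpha_\pm,\epsilon_\beta\beta_\pm)=0$ with $r<\mathsf{N}$ --- so that the SOV characterization of Theorem~\ref{C:T-eigenstates-} and the lower-degree-exclusion of Lemma~\ref{impossible_hom} remain available for the primed parameters (the latter is exactly what makes the $\mathsf{N}\times\mathsf{N}$ linear system for the values $Q(\zeta_a^{(0)})$ non-degenerate and forces uniqueness). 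On the residual exceptional sub-loci, where a primed-parameter locus $Y^{(i,2\mathsf{M})}=0$ with $\mathsf{M}<\mathsf{N}$ is met, one appeals instead to the mixed statement of Lemma~\ref{mixed-condition}. Apart from these checks the corollary is a verbatim rewriting of the earlier results with the $Z_2^{\otimes 3}$-transformed coefficients, with no new analytic input.
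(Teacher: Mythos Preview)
Your proof is correct and follows the same route the paper intends: the corollary is simply Theorem~\ref{homogeneousBE_N} (equivalently Theorem~\ref{T-eigenvalue-F-eq-gen} with vanishing $F_{(\epsilon_\tau,\epsilon_\alpha,\epsilon_\beta)}$) transported through the $Z_2^{\otimes 3}$ spectrum invariance of Lemma~\ref{Lem-invariance}, and the paper gives no independent argument beyond this. Your explicit bookkeeping about admissibility of the primed parameters and the exceptional sub-loci is more careful than the paper itself, but the logical structure is identical.
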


\section{XXX  chain by SOV and Baxter equation}

The construction of the SOV\ characterization can be naturally applied 
 in the case of the rational 6-vertex $R$-matrix, which in the
homogeneous limit reproduces the XXX open quantum spin-1/2 chain with
general integrable boundary conditions\footnote{%
Here we use notations similar to those introduced in the papers \cite{CaoJYW2013} and
\cite{Nep-2013} where some inhomogeneous Baxter equation ansatzs appear with the
aim to make simpler for the reader a comparison when the limit of
homogeneous chain is implemented.}. Let us define:%
\begin{equation}
R_{12}(\lambda )=\left( 
\begin{array}{cccc}
\lambda +\eta & 0 & 0 & 0 \\ 
0 & \lambda & \eta & 0 \\ 
0 & \eta & \lambda & 0 \\ 
0 & 0 & 0 & \lambda +\eta%
\end{array}%
\right) \in \text{End}(\mathcal{H}_{1}\otimes \mathcal{H}_{2}).
\end{equation}%
Due to the $SU(2)$ invariance of the bulk monodromy matrix the boundary
matrices defining the most general integrable boundary conditions can be
always recasted in the following form:%
\begin{equation}
K_{-}(\lambda ;p)=\left( 
\begin{array}{cc}
\lambda -\eta /2+p & 0 \\ 
0 & p-\lambda +\eta /2%
\end{array}%
\right) ,\text{ \ \ \ }K_{+}(\lambda ;q,\xi )=\left( 
\begin{array}{cc}
\lambda +\eta /2+q & \xi (\lambda +\eta /2) \\ 
\xi (\lambda +\eta /2) & q-(\lambda +\eta /2)%
\end{array}%
\right) ,
\end{equation}%
leaving only three arbitrary complex parameters here denoted with $\xi ,$ $p$
and $q$. Then the one parameter family of commuting transfer matrices:%
\begin{equation}
\mathcal{T}(\lambda )=\text{tr}_{0}\{K_{+}(\lambda )\,M(\lambda
)\,K_{-}(\lambda )\hat{M}(\lambda )\}\in \text{\thinspace End}(\mathcal{H}),
\end{equation}%
in the homogeneous limit leads to the following Hamiltonian:%
\begin{equation}
H=\sum_{n=1}^{\mathsf{N}}\left( \sigma _{n}^{x}\sigma _{n+1}^{x}+\sigma
_{n}^{y}\sigma _{n+1}^{y}+\sigma _{n}^{z}\sigma _{n+1}^{z}\right) +\frac{%
\sigma _{\mathsf{N}}^{z}}{p}+\frac{\sigma _{1}^{z}+\xi \sigma _{1}^{x}}{q}.
\end{equation}%
It is simple to show that the following identities hold:%
\begin{equation}
\mathrm{det}_{q}K_{+}(\lambda )\mathrm{det}_{q}\,\mathcal{U}_{-}(\lambda
)=4(\lambda ^{2}-\eta ^{2})(\lambda ^{2}-p^{2})((1+\xi ^{2})\lambda
^{2}-q^{2})\prod_{b=1}^{\mathsf{N}}(\lambda ^{2}-(\xi _{n}+\eta
)^{2})(\lambda ^{2}-(\xi _{n}-\eta )^{2}).
\end{equation}%
We define:%
\begin{equation}
\mathbf{A}(\lambda )=(-1)^{\mathsf{N}}\frac{2\lambda +\eta }{2\lambda }%
(\lambda -\eta /2+p)(\sqrt{(1+\xi ^{2})}(\lambda -\eta /2)+q)\prod_{b=1}^{%
\mathsf{N}}(\lambda -\zeta _{b}^{(0)})(\lambda +\zeta _{b}^{(1)}),
\end{equation}%
then it is easy to derive the following quantum determinant identity:%
\begin{equation}
\frac{\mathrm{det}_{q}K_{+}(\lambda )\mathrm{det}_{q}\,\mathcal{U}_{-}(\lambda )}{%
(4\lambda ^{2}-\eta ^{2})}=\mathbf{A}(\lambda +\eta /2)\mathbf{A}(-\lambda
+\eta /2).
\end{equation}%
 From the form of the boundary matrices it is clear that for the rational
6-vertex case one can directly derive the SOV representations  using the
method developed in \cite{Nic12b} without any need to introduce Baxter's
gauge transformations. Some results in this case also appeared in 
\cite{FraSW08,FraGSW11} based on a functional
version of the separation of variables of Sklyanin, a method which allows to
define the eigenvalues and wave-functions but which does not allow to
construct in the original Hilbert space of the quantum chain the transfer
matrix eigenstates. 

The separation of variable description in this
rational 6-vertex case reads:

\begin{theorem}
\label{C:T-eigenstates- copy(1)}Let the inhomogeneities $\{\xi _{1},...,\xi
_{\mathsf{N}}\}\in \mathbb{C}$ $^{\mathsf{N}}$  be generic:
\begin{equation}
\xi _{a}\neq \pm\xi _{b}+r\eta \text{ \ }\forall a\neq b\in \{1,...,\mathsf{N}%
\}\,\,\text{and\thinspace \thinspace }r\in \{-1,0,1\},  \label{xi-conditions-xxx}
\end{equation}
then $\mathcal{T}(\lambda )$ has simple spectrum and $\Sigma _{\mathcal{T}}$
is characterized by:%
\begin{equation}
\Sigma _{\mathcal{T}}=\left\{ \tau (\lambda ):\tau (\lambda )=f(\lambda
)+\sum_{a=1}^{\mathsf{N}}g_{a}(\lambda )x_{a},\text{ \ \ }\forall
\{x_{1},...,x_{\mathsf{N}}\}\in \Sigma _{T}\right\} ,
\end{equation}%
where:%
\begin{equation}
g_{a}(\lambda )=\frac{4\lambda ^{2}-\eta ^{2}}{4{\zeta
_{a}^{(0)}} ^{2}-\eta ^{2}}\,\prod_{\substack{ b=1  \\ b\neq a}}^{%
\mathsf{N}}\frac{\lambda ^{2}-{ \zeta _{b}^{(0)}} ^{2}}{{
\zeta _{a}^{(0)}} ^{2}-{ \zeta _{b}^{(0)}} ^{2}}\quad \text{
\ for }a\in \{1,...,\mathsf{N}\},
\end{equation}%
and%
\begin{equation}
f(\lambda )=\prod_{b=1}^{\mathsf{N}}\frac{\lambda ^{2}-{ \zeta
_{b}^{(0)}}^{2}}{{ \zeta _{a}^{(0)}} ^{2}-{\zeta
_{b}^{(0)}} ^{2}}\mathbf{A}(\eta /2)+2\left( 4\lambda ^{2}-\eta
^{2}\right) \,\prod_{b=1}^{\mathsf{N}}\lambda ^{2}-{ \zeta
_{b}^{(0)}} ^{2},
\end{equation}%
$\Sigma _{T}$ is the set of solutions to the following inhomogeneous system
of $\mathsf{N}$ quadratic equations:%
\begin{equation}
x_{n}\sum_{a=1}^{\mathsf{N}}g_{a}(\zeta _{n}^{(1)})x_{a}+x_{n}f(\zeta
_{n}^{(1)})=q_{n},\text{ \ \ \ }q_{n}=\frac{\mathrm{det}_{q}K_{+}(\xi _{n})%
\mathrm{det}_{q}\,\mathcal{U}_{-}(\xi _{n})}{\eta -4\xi _{n}^{2}},\text{ \ \ }%
\forall n\in \{1,...,\mathsf{N}\},
\end{equation}%
in $\mathsf{N}$ unknowns $\{x_{1},...,x_{\mathsf{N}}\}$.
\end{theorem}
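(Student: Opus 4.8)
The plan is to reproduce the separation of variables analysis of \cite{Nic12b,Fald-KN13} directly in the rational setting, exploiting the fact—already noted above—that in the rational $6$-vertex case the $SU(2)$ invariance of the bulk lets us bring $K_{-}$ to diagonal form, so that an SOV basis is built \emph{directly} with no Baxter gauge transformation and with the exceptional set $N_{SOV}$ playing no role (hence only the genericity \rf{xi-conditions-xxx} of the inhomogeneities is needed). Equivalently, one may obtain the statement as the rational limit ($\sinh x\to x$ after the usual rescaling $\lambda,\eta,\xi_n,\dots\to\varepsilon\lambda,\varepsilon\eta,\varepsilon\xi_n,\dots$ and division by $\varepsilon$, then $\varepsilon\to 0$) of Theorem \ref{C:T-eigenstates-}, checking that the SOV basis, the interpolation formula and the quadratic system all survive the limit; I will describe the direct route, which is cleaner here.

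First I would construct the SOV basis by diagonalizing the Sklyanin operator used in \cite{Nic12b} (here a suitable entry of $\mathcal{U}_{-}(\lambda)$, which needs no gauge dressing since $K_{-}$ is diagonal). Using the reflection-algebra commutation relations together with \rf{xi-conditions-xxx}, one shows this operator is diagonalizable with simple spectrum, its $2^{\mathsf{N}}$ eigenvalues being labelled by $(h_1,\dots,h_{\mathsf{N}})\in\{0,1\}^{\mathsf{N}}$ through the separate variables $\zeta_n^{(h_n)}=\xi_n+(h_n-\tfrac12)\eta$; this gives the basis of $\mathcal{H}$ in which I solve the $\mathcal{T}$-spectral problem.

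Next, in this basis the eigenvalue equation $\mathcal{T}(\lambda)\ket{\psi}=\tau(\lambda)\ket{\psi}$ separates: writing $\ket{\psi}=\sum_{\{h\}}\big(\prod_{n}Q_\psi(\zeta_n^{(h_n)})\big)\ket{h_1,\dots,h_{\mathsf{N}}}$ and evaluating the equation at $\lambda=\zeta_n^{(0)}$ and $\lambda=\zeta_n^{(1)}$—legitimate because $\mathcal{T}(\lambda)$ is an even polynomial in $\lambda$ of the degree fixed by its known asymptotics and central values, hence reconstructed from its values at the $\zeta_a^{(0)}$ and the leading coefficient, which is exactly the stated interpolation form $\tau(\lambda)=f(\lambda)+\sum_a g_a(\lambda)x_a$—one obtains for each $n$ the pair of two-term relations
\[
\tau(\zeta_n^{(0)})\,Q_\psi(\zeta_n^{(0)})=\mathbf{A}(-\zeta_n^{(0)})\,Q_\psi(\zeta_n^{(1)}),\qquad \tau(\zeta_n^{(1)})\,Q_\psi(\zeta_n^{(1)})=\mathbf{A}(\zeta_n^{(1)})\,Q_\psi(\zeta_n^{(0)}),
\]
each time because the other term vanishes ($\mathbf{A}(\zeta_n^{(0)})=\mathbf{A}(-\zeta_n^{(1)})=0$). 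Multiplying them gives the compatibility condition $\tau(\zeta_n^{(0)})\,\tau(\zeta_n^{(1)})=\mathbf{A}(\zeta_n^{(1)})\mathbf{A}(-\zeta_n^{(0)})$, which by the quantum-determinant identity $\mathbf{A}(\lambda+\eta/2)\mathbf{A}(-\lambda+\eta/2)=\det_qK_+(\lambda)\det_q\mathcal{U}_-(\lambda)/(4\lambda^2-\eta^2)$ at $\lambda=\xi_n$ is precisely $q_n$; setting $x_a=\tau(\zeta_a^{(0)})$ and expressing $\tau(\zeta_n^{(1)})$ via the interpolation formula, this is the quadratic system of the theorem. Conversely, given $\{x_a\}\in\Sigma_T$, the function $\tau$ built from the interpolation formula together with the recursion above produces, up to an overall scale, a well-defined family $\{Q_\psi(\zeta_n^{(h_n)})\}$ and hence a vector $\ket{\psi}$; one checks it is non-zero and is an eigenvector. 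Since $\ket{\psi}$ is recovered uniquely from the $x_a$, every eigenspace is one-dimensional, and together with the count $\#\Sigma_T=2^{\mathsf{N}}=\dim\mathcal{H}$ this yields both simplicity of the spectrum and the claimed characterization of $\Sigma_{\mathcal{T}}$.

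The main obstacle is the non-degeneracy bookkeeping: one must ensure that for generic data the charges $q_n$ are non-zero—so that the two-term relations never collapse and $x_n\ne 0$—that the reconstructed $\ket{\psi}$ genuinely does not vanish, and that the inhomogeneous quadratic system has exactly $2^{\mathsf{N}}$ solutions counted with multiplicity; this last point, and the diagonalizability of the SOV operator, is where the genericity \rf{xi-conditions-xxx} (and, in the normal case, Proposition \ref{normality}) is used, exactly as in \cite{Fald-KN13}. Checking that the explicit rational expressions for $g_a$, $f$, $\mathbf{A}$ and $q_n$ are the degenerations of the general formulas of Section 2 is then routine.
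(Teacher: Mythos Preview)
Your approach is essentially the one the paper has in mind: the paper does not give a separate proof of this theorem but simply remarks that, since $K_{-}$ is diagonal, the SOV construction of \cite{Nic12b} applies directly in the rational case without gauge transformations, and the statement is the rational transcription of Theorem~\ref{C:T-eigenstates-}. Your outline of the direct SOV argument (diagonalizing $\mathcal{B}_{-}(\lambda)$, separating the eigenvalue equation at the $\zeta_n^{(h)}$, and recognizing the compatibility condition as the quadratic system via the quantum-determinant identity) is exactly that construction, so the proposal is correct.

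One small point: in your final paragraph you invoke the count $\#\Sigma_T=2^{\mathsf{N}}$ to conclude simplicity, but this counting is neither needed nor established at that stage (in the trigonometric case it is only asserted under the normality hypothesis of Proposition~\ref{normality}). Simplicity of the spectrum follows already from the SOV construction itself: the separated relations determine the wave-function values $Q_\psi(\zeta_n^{(h_n)})$ uniquely up to an overall scale once the eigenvalue $\tau$ is fixed, so each eigenspace is at most one-dimensional. The quadratic system then characterizes $\Sigma_{\mathcal T}$ without any appeal to the size of $\Sigma_T$.
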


We are now ready to present the following equivalent characterization of the
transfer matrix spectrum:

\begin{theorem}
\label{T-eigenvalue-F-eq copy(1)}Let the inhomogeneities $\{\xi _{1},...,\xi
_{\mathsf{N}}\}\in \mathbb{C}$ $^{\mathsf{N}}$ be generic \rf{xi-conditions-xxx},
then for $\xi \neq 0$ the
set of transfer matrix eigenvalue functions $\Sigma _{\mathcal{T}}$ is
characterized by:%
\begin{equation*}
\tau (\lambda )\in \Sigma _{\mathcal{T}}\text{ \ if and only if }\exists
!Q(\lambda )=\prod_{b=1}^{\mathsf{N}}\left( \lambda ^{2}-\lambda
_{b}^{2}\right) \text{ such that }\tau (\lambda )Q(\lambda )=Z_{Q}(\lambda
)+F(\lambda ),
\end{equation*}%
with%
\begin{equation}
F(\lambda )=2(1-\sqrt{(1+\xi ^{2})})\left( 4\lambda ^{2}-\eta ^{2}\right)
\,\prod_{b=1}^{\mathsf{N}}\prod_{i=0}^{1}\left( \lambda ^{2}- {�\zeta
_{b}^{(i)}} ^{2}\right) .
\end{equation}
\end{theorem}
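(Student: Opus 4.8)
The plan is to follow, essentially step by step, the proof of Theorem~\ref{T-eigenvalue-F-eq}, replacing everywhere trigonometric polynomials in $\cosh2\lambda$ by ordinary polynomials in $\lambda^{2}$. First I would establish the two rational counterparts of the preparatory lemmas of Section~3. The degree lemma becomes: for $Q(\lambda)=\prod_{b=1}^{\mathsf{N}}(\lambda^{2}-\lambda_{b}^{2})$ the function $Z_{Q}(\lambda)=\mathbf{A}(\lambda)Q(\lambda-\eta)+\mathbf{A}(-\lambda)Q(\lambda+\eta)$ is even, has no singularity at $\lambda=0$, and is a polynomial in $\lambda^{2}$ of degree $2\mathsf{N}+1$ whose top coefficient is $2(-1)^{\mathsf{N}}\sqrt{1+\xi^{2}}$; this is checked from $\mathbf{A}(\lambda)\sim(-1)^{\mathsf{N}}\sqrt{1+\xi^{2}}\,\lambda^{2\mathsf{N}+2}$ and $Q(\lambda)\sim\lambda^{2\mathsf{N}}$ at $\lambda\to\infty$, exactly as in the proof of the first lemma of Section~3. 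Lemma~\ref{impossible_hom} becomes: for $\xi\neq0$ and any $\tau(\lambda)\in\Sigma_{\mathcal{T}}$ the homogeneous equation $\tau(\lambda)Q(\lambda)=\mathbf{A}(\lambda)Q(\lambda-\eta)+\mathbf{A}(-\lambda)Q(\lambda+\eta)$ has no non-zero polynomial solution, because matching the $\lambda\to\infty$ leading coefficients of the two sides would force the (fixed) leading coefficient of $\tau$, read off from the transfer matrix asymptotics, to equal $2(-1)^{\mathsf{N}}\sqrt{1+\xi^{2}}$, which fails precisely when $\xi\neq0$. This is the only place where $\xi\neq0$ is used; it also explains why the inhomogeneous term cannot be dropped (the top coefficient of $F$ is $\propto(1-\sqrt{1+\xi^{2}})$) and why for $\xi=0$ one falls back on the homogeneous Baxter equation.

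For the direct implication I take $\tau(\lambda)\in\Sigma_{\mathcal{T}}$, so that by Theorem~\ref{C:T-eigenstates- copy(1)} there is $\{x_{1},\dots,x_{\mathsf{N}}\}\in\Sigma_{T}$ with $x_{b}=\tau(\zeta_{b}^{(0)})$, and I look for $Q(\lambda)$ in the interpolated form
\[
Q(\lambda)=\sum_{a=1}^{\mathsf{N}}\prod_{\substack{b=1\\ b\neq a}}^{\mathsf{N}}\frac{\lambda^{2}-{\zeta_{b}^{(0)}}^{2}}{{\zeta_{a}^{(0)}}^{2}-{\zeta_{b}^{(0)}}^{2}}\,Q(\zeta_{a}^{(0)})+\prod_{a=1}^{\mathsf{N}}\bigl(\lambda^{2}-{\zeta_{a}^{(0)}}^{2}\bigr),
\]
with the $Q(\zeta_{a}^{(0)})$ as unknowns. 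Both sides of $\tau Q=Z_{Q}+F$ are even polynomials in $\lambda^{2}$ of degree $2\mathsf{N}+1$ whose top coefficients already agree, by the two lemmas above and the definition of $F$, so it suffices to match them at $2\mathsf{N}+1$ values of $\lambda^{2}$. At $\lambda=\pm\eta/2$ the identity is automatic, since $\mathbf{A}(-\eta/2)=0=F(\eta/2)$ and $\tau(\eta/2)=\mathbf{A}(\eta/2)$ for every $\tau\in\Sigma_{\mathcal{T}}$. At $\lambda=\zeta_{b}^{(0)}$ and $\lambda=\zeta_{b}^{(1)}$ one uses $\mathbf{A}(\zeta_{b}^{(0)})=\mathbf{A}(-\zeta_{b}^{(1)})=0$ and $F(\zeta_{b}^{(0)})=F(\zeta_{b}^{(1)})=0$ to reduce the equation to
\[
\tau(\zeta_{b}^{(0)})Q(\zeta_{b}^{(0)})=\mathbf{A}(-\zeta_{b}^{(0)})Q(\zeta_{b}^{(1)}),\qquad \tau(\zeta_{b}^{(1)})Q(\zeta_{b}^{(1)})=\mathbf{A}(\zeta_{b}^{(1)})Q(\zeta_{b}^{(0)}).
\]
Multiplying the two and invoking the quantum determinant identity $\mathrm{det}_{q}K_{+}(\lambda)\,\mathrm{det}_{q}\mathcal{U}_{-}(\lambda)/(4\lambda^{2}-\eta^{2})=\mathbf{A}(\lambda+\eta/2)\mathbf{A}(-\lambda+\eta/2)$ shows that the relation $\tau(\zeta_{b}^{(0)})\tau(\zeta_{b}^{(1)})=q_{b}$ is exactly the $b$-th equation of the SOV quadratic system, hence holds automatically; what remains is a linear system for the $Q(\zeta_{a}^{(0)})$, obtained by expressing $Q(\zeta_{b}^{(1)})$ through the interpolation formula. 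I would then show its matrix is non-singular by the contradiction argument of Theorem~\ref{T-eigenvalue-F-eq}: a non-trivial kernel vector would produce a polynomial $Q_{\mathsf{M}}(\lambda)$ in $\lambda^{2}$ of degree $\le\mathsf{N}-1$ solving the homogeneous Baxter equation with the same $\tau$, contradicting the rational version of Lemma~\ref{impossible_hom}. Hence there is a unique $Q(\lambda)=\prod_{b}(\lambda^{2}-\lambda_{b}^{2})$ satisfying $\tau Q=Z_{Q}+F$; uniqueness also follows because the difference of two such $Q$'s would again solve the homogeneous equation and have degree $\le\mathsf{N}-1$ in $\lambda^{2}$, hence vanish.

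For the converse I start from $Q(\lambda)=\prod_{b}(\lambda^{2}-\lambda_{b}^{2})$ with $\tau(\lambda)=(Z_{Q}(\lambda)+F(\lambda))/Q(\lambda)$ a polynomial. By the degree and parity count $\tau$ is automatically of the form $f(\lambda)+\sum_{a}g_{a}(\lambda)\tau(\zeta_{a}^{(0)})$ of Theorem~\ref{C:T-eigenstates- copy(1)}, and evaluating the Baxter equation at $\lambda=\zeta_{b}^{(0)},\zeta_{b}^{(1)}$ again reproduces, through the same quantum determinant identity, exactly the quadratic system $x_{n}\sum_{a}g_{a}(\zeta_{n}^{(1)})x_{a}+x_{n}f(\zeta_{n}^{(1)})=q_{n}$ with $x_{a}=\tau(\zeta_{a}^{(0)})$; therefore $\{\tau(\zeta_{1}^{(0)}),\dots,\tau(\zeta_{\mathsf{N}}^{(0)})\}\in\Sigma_{T}$ and $\tau\in\Sigma_{\mathcal{T}}$.

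The main obstacle I anticipate is the asymptotic and degree bookkeeping that in the trigonometric proof was absorbed by the second central value $\lambda=\eta/2+i\pi/2$: in the rational chain there is only the single central point $\lambda=\eta/2$, so the top-coefficient identity --- equivalently the exact value of the constant $2(1-\sqrt{1+\xi^{2}})$ multiplying $(4\lambda^{2}-\eta^{2})\prod_{b,i}(\lambda^{2}-{\zeta_{b}^{(i)}}^{2})$ in $F$ --- must be established and used to close the count of interpolation points, and it is the very same identity that makes the rational version of Lemma~\ref{impossible_hom} work. Everything else is a routine transcription of Section~3.
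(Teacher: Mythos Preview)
Your proposal is correct and follows exactly the approach the paper indicates: the paper's own proof is the single sentence ``The proof presented in Theorem~\ref{T-eigenvalue-F-eq} applies with small modifications also to present rational case,'' and you have carried out precisely that transcription. In particular you have correctly identified the one genuine modification --- that the second central point $\eta/2+i\pi/2$ of the trigonometric case has no rational analogue, so its role in closing the interpolation count is taken over by the leading-coefficient matching (equivalently, by the asymptotic that fixes the constant $2(1-\sqrt{1+\xi^{2}})$ in $F$ and simultaneously makes the rational version of Lemma~\ref{impossible_hom} go through for $\xi\neq0$).
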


\begin{proof}[Proof]
The proof presented in Theorem \ref{T-eigenvalue-F-eq} applies with small
modifications also to present rational case.
\end{proof}

The previous characterization of the transfer matrix spectrum allows to
prove that the set $\Sigma _{InBAE}\subset\mathbb{C}^\mathsf{N}$ of all the solutions of the Bethe equations%
$$ \{\lambda _{1},...,\lambda _{\mathsf{N}}\}\in\Sigma _{InBAE}$$
if
\begin{equation}
\mathbf{A}(\lambda _{a})Q_{\mathbf{\lambda }%
}(\lambda _{a}-\eta )+\mathbf{A}(-\lambda _{a})Q_{\mathbf{\lambda }}(\lambda
_{a}+\eta )=-F(\lambda _{a}),\text{ \ }\forall a\in \{1,...,\mathsf{N}%
\} ,  \label{I-BAE-XXX}
\end{equation}%
 define the complete set of transfer matrix eigenvalues. In
particular, the following corollary can be proved:

\begin{corollary}
\label{Theo-InBAE-XXX}Let the inhomogeneities $\{\xi _{1},...,\xi _{\mathsf{N%
}}\}\in \mathbb{C}$ $^{\mathsf{N}}$ satisfy the following conditions \rf{xi-conditions},
then $\mathcal{T}(\lambda )$ has simple spectrum and for $\xi \neq 0$ then $%
\tau (\lambda )\in \Sigma _{\mathcal{T}}$ \ if and only if $\exists
!\{\lambda _{1},...,\lambda _{\mathsf{N}}\}\in \Sigma _{InBAE}$ such that:%
\begin{equation}
\tau (\lambda )=\frac{Z_{Q}(\lambda )+F(\lambda )}{Q(\lambda )}\text{ \ \
with \ \ }Q(\lambda )=\prod_{b=1}^{\mathsf{N}}\left( \lambda ^{2}-\lambda
_{b}^{2}\right) .
\end{equation}
\end{corollary}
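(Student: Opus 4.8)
The plan is to read this off from the rational inhomogeneous Baxter reformulation already established in Theorem~\ref{T-eigenvalue-F-eq copy(1)}, in complete parallel with the way Corollary~\ref{Theo-InBAE} follows from Theorem~\ref{T-eigenvalue-F-eq} in the trigonometric case. Simplicity of the spectrum of $\mathcal{T}(\lambda)$ under the genericity assumption \rf{xi-conditions-xxx} is already the content of Theorem~\ref{C:T-eigenstates- copy(1)}, so the real task is only to match the set of admissible $Q$-functions of Theorem~\ref{T-eigenvalue-F-eq copy(1)} with $\Sigma_{InBAE}$. First I would record two structural facts: (i) as in the trigonometric case, $Z_{Q}(\lambda)=\mathbf{A}(\lambda)Q(\lambda-\eta)+\mathbf{A}(-\lambda)Q(\lambda+\eta)$ is a genuine polynomial, the pole of $\mathbf{A}$ at $\lambda=0$ being cancelled by the evenness of $Q$, so that $Z_{Q}(\lambda)+F(\lambda)$ is an even polynomial, i.e.\ a polynomial in $\lambda^{2}$; and (ii) for $Q(\lambda)=\prod_{b=1}^{\mathsf{N}}(\lambda^{2}-\lambda_{b}^{2})$ the quotient $\tau(\lambda)=(Z_{Q}(\lambda)+F(\lambda))/Q(\lambda)$ is a polynomial exactly when $Q$ divides $Z_{Q}+F$ in $\mathbb{C}[\lambda^{2}]$.

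Next I would translate the Bethe equations into this divisibility. Unfolding the definition of $Z_{Q}$, the condition $\{\lambda_{1},\dots,\lambda_{\mathsf{N}}\}\in\Sigma_{InBAE}$ of \rf{I-BAE-XXX} is precisely $Z_{Q}(\lambda_{a})+F(\lambda_{a})=0$ for every $a$. For the forward implication, if $\tau(\lambda)\in\Sigma_{\mathcal{T}}$ then Theorem~\ref{T-eigenvalue-F-eq copy(1)} provides a unique $Q(\lambda)=\prod_{b=1}^{\mathsf{N}}(\lambda^{2}-\lambda_{b}^{2})$ with $\tau(\lambda)Q(\lambda)=Z_{Q}(\lambda)+F(\lambda)$; evaluating at the zero $\lambda=\lambda_{a}$ of $Q$ gives $Z_{Q}(\lambda_{a})+F(\lambda_{a})=0$, hence $\{\lambda_{b}\}\in\Sigma_{InBAE}$, and uniqueness of the tuple (up to reorderings and the sign flips $\lambda_{b}\to-\lambda_{b}$ that leave $Q$ unchanged) is inherited from the uniqueness of $Q$ in that theorem. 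For the converse, given $\{\lambda_{b}\}\in\Sigma_{InBAE}$ I would form $Q(\lambda)=\prod_{b=1}^{\mathsf{N}}(\lambda^{2}-\lambda_{b}^{2})$; the Bethe equations make the even polynomial $Z_{Q}+F$ vanish at every $\lambda_{a}$ and hence at every $-\lambda_{a}$, so that, when the $\lambda_{a}^{2}$ are pairwise distinct, $Q$ divides $Z_{Q}+F$ and $\tau:=(Z_{Q}+F)/Q$ is a polynomial satisfying $\tau Q=Z_{Q}+F$; Theorem~\ref{T-eigenvalue-F-eq copy(1)} then returns $\tau\in\Sigma_{\mathcal{T}}$.

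The hard part will be controlling the degenerate configurations: tuples with a coinciding pair $\lambda_{a}^{2}=\lambda_{b}^{2}$, for which the first-order equations \rf{I-BAE-XXX} do not by themselves force the second-order vanishing of $Z_{Q}+F$ needed for divisibility, and tuples with some $\lambda_{a}=0$, where $\mathbf{A}$ is singular and \rf{I-BAE-XXX} must be read after clearing the pole (using that $Z_{Q}$ stays regular at $0$). I would dispose of the coinciding case by showing it does not occur: simplicity of $\Sigma_{\mathcal{T}}$ together with the injectivity $\tau\mapsto Q$ of Theorem~\ref{T-eigenvalue-F-eq copy(1)} forces the multiset $\{\lambda_{b}^{2}\}$ attached to any $\tau\in\Sigma_{\mathcal{T}}$ to be generic, and the same genericity can be carried over to the solutions of \rf{I-BAE-XXX} either by a direct check against \rf{xi-conditions-xxx} or by a continuity argument in the parameters $\xi,p,q,\xi_{1},\dots,\xi_{\mathsf{N}}$ from a generic reference point. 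Once the degenerate cases are excluded the two implications assemble into the asserted bijection between $\Sigma_{\mathcal{T}}$ and $\Sigma_{InBAE}$, which is exactly the statement of the corollary; in particular the count $|\Sigma_{InBAE}|=|\Sigma_{\mathcal{T}}|=2^{\mathsf{N}}$ follows, as in Corollary~\ref{Theo-InBAE}.
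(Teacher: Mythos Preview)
Your approach is correct and matches the paper's: the corollary is stated there without proof, as an immediate consequence of Theorem~\ref{T-eigenvalue-F-eq copy(1)} in exact parallel with how Corollary~\ref{Theo-InBAE} follows from Theorem~\ref{T-eigenvalue-F-eq}; the passage from the functional equation to the Bethe system is just evaluation at the roots of $Q$, precisely as you write. Your discussion of degenerate configurations (coinciding $\lambda_a^2$, or $\lambda_a=0$) goes beyond what the paper addresses---the paper is silent on these points in both the trigonometric and rational corollaries---so in that respect you are being more careful than the original, even if your resolution of those cases remains sketched rather than fully argued.
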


\section{Homogeneous chains and existing numerical analysis}

It is important to stress that the spectrum construction together with the
corresponding statements of completeness presented in this paper strictly
work for the most general spin 1/2 representations of the 6-vertex
reflection algebra only for generic inhomogeneous chains. However, it is
worth mentioning that the transfer matrix as well as the coefficients and
the inhomogeneous term in our functional equation characterization of the
SOV spectrum are analytic functions of the inhomogeneities $\{\xi _{j}\}$ so
we can take without any problem the homogeneous limit ($\xi _{a}\rightarrow
0$ $\forall a\in \{1,...,\mathsf{N}\}$) in the functional equations.
The main problem to be addressed then is the completeness of the 
description by this functional equations. 
Some first understanding of this central question can be derived
looking at the numerical analysis \cite{Nep-R-2003,Nep-R-2003add} of the completeness of Bethe Ansatz
equations 
when the boundary constraints are satisfied and for the open XXX chain with general boundary
terms \cite{Nep-2013}.

\subsection{Comparison with numerical results for the XXZ chain}

The numerical checks of the completeness of Bethe Ansatz equations for the open XXZ quantum spin 1/2 chains  were first done in 
\cite{Nep-R-2003} 
for the chains with   non-diagonal boundaries satisfying boundary constraints: 
\begin{equation}
\kappa _{+}\neq 0,\kappa _{-}\neq 0,\text{ \ }\exists i\in \left\{
0,1\right\} ,\mathsf{M}\in \mathbb{N}\text{\ }:Y^{(i,2\mathsf{M})}(\tau _{\pm },\alpha _{\pm
},\beta _{\pm })=0.
\end{equation}%
Indeed, under these conditions some generalizations of algebraic Bethe
Ansatz can be used and so the corresponding Bethe equations can be defined.

In particular, the Nepomechie-Ravanini's numerical results reported in \cite%
{Nep-R-2003,Nep-R-2003add} suggest that the Bethe ansatz equations $\left( \ref{BAE}%
\right) $ in the homogeneous limit for the roots of the  $Q$ function:%
\begin{equation}
Q(\lambda )=2^{\mathsf{M}}\prod_{a=1}^{\mathsf{M}}\left( \cosh 2\lambda -\cosh 2\lambda
_{a}\right) ,
\end{equation}%
with the degree $\mathsf{M}$ obtained from the boundary constraint
\begin{itemize}
\item for $\mathsf{M}=\mathsf{N}$ they define the complete transfer matrix spectrum.
\item for $\mathsf{M}<\mathsf{N}$  the complete spectrum of the transfer matrix  contains two parts described by different Baxter equations. The first one has  trigonometric polynomial solutions of degree $2\mathsf{M}$ the second one has a trigonometric polynomial  solutions of degree $2\mathsf{N}-2-2\mathsf{M}$.
\item for $\mathsf{M}>\mathsf{N}$ the complete spectrum of the transfer matrix
spectrum plus $\tau (\lambda )$ functions which do not belong to the
spectrum of the transfer matrix.
\end{itemize}

These results seem to be compatible with our characterization for the
inhomogeneous chains. Indeed, the case $\mathsf{M}=\mathsf{N}$ coincides with the
case in which our Baxter functional equation becomes homogeneous. Theorem \ref{homogeneousBE_N} states
 that in this case for generic inhomogeneities the Bethe ansatz is complete so we can expect  (from the numerical analysis) that completeness will survive in the homogeneous limit.


In the case $\mathsf{M}<$\textsf{N}, our description of the spectrum by Lemma \ref%
{mixed-condition} separates the spectrum in two parts. A first part of the
spectrum is described by trigonometric polynomial solutions of degree $2\mathsf{M}$
to the homogeneous Baxter equation \rf{homogen-Bax-eq-M} and a second part is instead
described by trigonometric polynomial solutions of degree 2\textsf{N} of the
inhomogeneous Baxter equation \rf{inhomogen-Bax-eq-M}. However, by
implementing the following discrete symmetry transformations $\alpha _{\pm
}\rightarrow -\alpha _{\pm }$, $\beta _{\pm }\rightarrow -\beta _{\pm }$, $%
\tau _{\pm }\rightarrow -\tau _{\pm }$ and applying the same Lemma \ref%
{mixed-condition} w.r.t. the Baxter equations with coefficients $\mathbf{A}%
_{(-,-,-)}(\lambda )$ we get an equivalent description of the spectrum
separated in two parts. 
One part of the spectrum is described in terms of the solutions  of the transformed homogeneous Baxter equation which should 
be trigonometric polynomials of degree $2\mathsf{M}^{\prime }$, with $\mathsf{M}^{\prime }=%
\mathsf{N}-1-\mathsf{M}$ and the second part by the inhomogeneous Baxter equation. The comparison with the numerical results then suggests that, at least in the limit of
homogeneous chains, the
 part of the spectrum generated by the trigonometric polynomial solutions of degree 2%
\textsf{N} of the inhomogeneous Baxter equation \rf{inhomogen-Bax-eq-M} coincides with the part
generated by the trigonometric polynomial solutions of degree $2\mathsf{M}^{\prime }$
of the transformed homogeneous Baxter equation.

Finally, in the case $\mathsf{M}>\mathsf{N}$ we have a complete characterization of
the spectrum given by an inhomogeneous Baxter functional equation however
nothing prevent to consider solutions to the homogeneous Baxter equation
once we take the appropriate $Q$-function with $\mathsf{M}>\mathsf{N}$ Bethe roots.
The numerical results  however seem to suggest that considering the
homogeneous Baxter equations is not the proper thing to do in the
homogeneous  limit.

The previous analysis seems to support the idea that in the limit of
homogeneous chain our complete characterization still describe the complete
spectrum of the homogeneous transfer matrix.

\subsection{Comparison with numerical results for the XXX chain}

In the case of the open spin 1/2 XXX chain an ansatz based on  two 
$Q$-functions and an inhomogeneous Baxter functional equation has been first
introduced in \cite{CaoYSW13-2}, the completeness of the spectrum
obtained  by that ansatz has been later verified  numerically
for   small chains   \cite{CaoJYW2013}. Using these
results Nepomechie has introduced a simpler ansatz  and developed some further
numerical analysis in \cite{Nep-2013} confirming once again that the ansatz
defines the complete spectrum for small  chains. Here, we would like to
point out that our complete description of the transfer matrix spectrum
in terms of a inhomogeneous Baxter functional equation  obtained  for the inhomogeneous chains 
has the following well defined homogeneous limit: 
\begin{equation}
\tau (\lambda )Q(\lambda )=\mathbf{A}(\lambda )Q(\lambda -\eta )+\mathbf{A}%
(-\lambda )Q(\lambda +\eta )+F(\lambda )
\end{equation}%
where:%
\begin{eqnarray}
F(\lambda ) &=&8(1-\sqrt{(1+\xi ^{2})})\left( \lambda ^{2}-\left( \eta
/2\right) ^{2}\right) ^{2\mathsf{N}+1}, \\
\mathbf{A}(\lambda ) &=&(-1)^{\mathsf{N}}\frac{2\lambda +\eta }{2\lambda }%
\left(\vphantom{\sqrt{(1+\xi ^{2})}}\lambda -\eta /2+p\right)\left(\sqrt{(1+\xi ^{2})}(\lambda -\eta /2)+q\right)\left(\lambda ^{2}-\left( \eta /2\right) ^{2}\right)^{%
\mathsf{N}}.
\end{eqnarray}%
Taking into account the shift in our
definition of the monodromy matrix which insures that the transfer matrix is
an even function of the spectral parameter, the limit of our inhomogeneous
Baxter functional equation coincides with the ansatz proposed by Nepomechie
in \cite{Nep-2013}. Then the numerical evidences of completeness derived by
Nepomechie in \cite{Nep-2013}  suggest that the exact and complete
characterization that we get for the inhomogeneous chain is still valid and complete in the homogeneous limit.

\section*{Conclusion and outlook}

In this paper we have shown that the transfer matrix spectrum associated to
the most general spin-1/2 representations of the 6-vertex reflection
algebras (rational and trigonometric), on general inhomogeneous chains
is completely characterized in terms of a second order difference functional
equations of Baxter $T$-$Q$ type with an inhomogeneous term depending only
on the inhomogeneities of the chain and the boundary parameters. This
functional $T$-$Q$ equation has been shown to be  equivalent to
the SOV complete characterization of the spectrum when the $Q$-functions
belong to a well defined set of polynomials. The polynomial character of the $Q$%
-function is a central feature of our characterization which allows to
introduce an equivalent finite system of generalized Bethe ansatz equations. Moreover, we have explicitly proven that our
functional characterization holds for all the values of the boundary
parameters for which SOV works, clearly identifying the only 3-dimensional
hyperplanes in the 6-dimensional space of the boundary parameters where our description 
cannot be applied.
 We have also clearly identified the 5-dimensional hyperplanes in
the space of the boundary parameters  where the spectrum (or a part of the spectrum)  can 
be characterized in terms of a homogeneous $T$-$Q$ equation and the polynomial character of the $Q$-functions is then equivalent to a
standard system of Bethe equations. Completeness of this description
 is a built in feature due to the equivalence to the SOV
characterization. 

 The equivalence between our functional $T$-$Q$
equation and the SOV characterization holds  for generic values of the $%
\xi _{a}$ in the $\mathsf{N}$-dimensional space of the inhomogeneity
parameters however there exist hyperplanes for which the conditions \rf{xi-conditions} are not satisfied and so a direct application of the SOV
approach is not possible (at least for the separate variables described in 
\cite{Fald-KN13}) and the limit of homogeneous chains ($\xi _{a}\rightarrow 0$
$\forall a\in \{1,...,\mathsf{N}\}$) clearly belong to these hyperplanes.
From the analyticity of the transfer matrix eigenvalues, of the coefficients
of the functional $T$-$Q$ equation and of the inhomogeneous term in it
w.r.t. the inhomogeneity parameters it is possible to argue that these
functional equations still describes transfer matrix eigenvalues on the
hyperplanes where SOV method cannot be applied and, in particular, in the
homogeneous limit. However, in all these cases the statements about the
simplicity of the transfer matrix spectrum and the completeness of the
 description by our functional $T$-$Q$ equation are not anymore
granted and they require independent proofs. These fundamental issues will
be addressed in a future publication. Here we want just to recall that the
comparison with the few existing numerical results on the subject seems to
suggests that the statement of completeness should be satisfied even in the
homogeneous limit of special interest as it allows to reproduce the spectrum
of the Hamiltonian of the spin-1/2 open XXZ quantum chains under the most
general integrable boundary conditions. 

Finally, it is important to note
that the form of the  Baxter  functional equation for the most general
spin-1/2 representations of the 6-vertex reflection algebras and in
particular the necessity of an inhomogeneous term are mainly imposed
by the requirement that the set of solutions is restricted to
polynomials. Then the problem to get homogeneous Baxter  equations
relaxing this last requirement remains an interesting open problem.

\section*{Acknowledgements}

The authors would like to thank E. Sklyanin and V. Terras for discussions.
 J.M.M. and G. N. are supported by CNRS. 
N.K and J.M.M.  are supported by ANR grant ``DIADEMS''.
 N. K. would like to thank LPTHE, University Paris VI and Laboratoire de Physique, ENS-Lyon for hospitality.

\end{document}